\documentclass[a4paper, twoside, 11pt, reqno]{amsart}

\usepackage[T1]{fontenc}
\usepackage[english]{babel}
\usepackage[a4paper, hmargin=2.75cm, vmargin=3cm]{geometry}
\usepackage{lmodern, microtype, setspace, csquotes, enumitem, moreenum, comment}
\usepackage{amsmath, amsfonts, amssymb, amsthm, mathtools, mathrsfs, tikz}
\usepackage[foot]{amsaddr}
\usepackage{hyperref}
\usepackage{xurl}
\usepackage[capitalize]{cleveref}
\hypersetup{colorlinks=true, allcolors=blue!50!black, breaklinks=true}

\setdisplayskipstretch{2}

\setlist[1]{labelindent=\parindent, leftmargin=*, align=left, topsep=8pt, itemsep=4pt, parsep=0pt}
\SetEnumitemKey{env}{wide=\parindent, label=\normalfont\bfseries(\arabic*), ref=\normalfont(\arabic*), nosep}

\usetikzlibrary{arrows, nfold, decorations.pathreplacing}
\colorlet{shadecolor}{gray!20}

\renewcommand{\subset}{\subseteq}
\renewcommand{\supset}{\supseteq}
\renewcommand{\Gamma}{\varGamma}
\renewcommand{\Omega}{\varOmega}
\renewcommand{\Phi}{\varPhi}
\renewcommand{\Psi}{\varPsi}

\newcommand{\ce}{\coloneqq}
\newcommand{\ec}{\eqqcolon}
\newcommand{\wt}[1]{\widetilde{#1}}
\newcommand{\oln}[1]{\overline{#1}}
\newcommand{\N}{\mathbb{N}}

\newcommand{\R}{\mathbb{R}}
\newcommand{\C}{\mathbb{C}}
\newcommand{\SFR}{\mathsf{R}}
\newcommand{\SFS}{\mathsf{S}}
\newcommand{\SFT}{\mathsf{T}}

\newcommand{\MF}{\mathcal{F}}
\newcommand{\MH}{\mathcal{H}}
\newcommand{\MO}{\mathcal{O}}
\newcommand{\MSH}{\mathscr{H}}
\newcommand{\MSN}{\mathscr{N}}
\newcommand{\LO}{\mathscr{L}}
\newcommand{\BO}{\mathscr{B}}
\newcommand{\DM}{\mathscr{S}}
\newcommand{\Eig}{\mathrm{Eig}}
\newcommand{\ii}{\mathrm{i}}
\newcommand{\ee}{\mathrm{e}}
\newcommand{\1}{\mathop{}\!\mathbf{1}}
\newcommand{\id}{\mathop{}\!\mathrm{Id}}
\newcommand{\diff}{\mathop{}\!\mathrm{d}}
\newcommand{\vol}{\mathrm{vol}}
\newcommand{\area}{\mathrm{area}}
\newcommand{\dist}{\mathrm{dist}}
\newcommand{\diam}{\mathrm{diam}}
\newcommand{\dom}{\mathrm{dom}}
\newcommand{\ran}{\mathrm{ran}}
\newcommand{\EV}{\mathbb{E}}
\newcommand{\set}[2][]{#1\{{#2}#1\}}
\newcommand{\Set}[1]{\left\{{#1}\right\}}
\newcommand{\abs}[2][]{#1\vert{#2}#1\vert}
\newcommand{\norm}[2][]{#1\Vert{#2}#1\Vert}

\newcommand{\braket}[2][]{#1\langle{#2}#1\rangle}

\newcommand{\od}[3][]{\frac{\diff^{#1} #2}{\diff #3^{#1}}}
\newcommand{\sint}{\mathrm{int}}
\newcommand{\seff}{\mathrm{eff}}
\newcommand{\bdot}{\braket{\cdot, \cdot}}
\DeclareMathOperator{\tr}{tr}

\theoremstyle{definition}
\newtheorem{assumption}{Assumption}
\newtheorem{definition}{Definition}[section]
\newtheorem{remark}[definition]{Remark}
\newtheorem{remarks}[definition]{Remarks}
\newtheorem{example}[definition]{Example}
\newtheorem{examples}[definition]{Examples}

\theoremstyle{plain}
\newtheorem{theorem}[definition]{Theorem}
\newtheorem{proposition}[definition]{Proposition}
\newtheorem{lemma}[definition]{Lemma}
\newtheorem{corollary}[definition]{Corollary}
\newtheorem*{theorem*}{Theorem}

\AddToHook{env/remark/begin}{\crefalias{definition}{remark}}
\AddToHook{env/remarks/begin}{\crefalias{definition}{remarks}}
\AddToHook{env/example/begin}{\crefalias{definition}{example}}
\AddToHook{env/examples/begin}{\crefalias{definition}{examples}}
\AddToHook{env/theorem/begin}{\crefalias{definition}{theorem}}
\AddToHook{env/proposition/begin}{\crefalias{definition}{proposition}}
\AddToHook{env/lemma/begin}{\crefalias{definition}{lemma}}
\AddToHook{env/corollary/begin}{\crefalias{definition}{corollary}}

\crefname{assumption}{Assumption}{Assumptions}
\crefname{remarks}{Remark}{Remarks}
\crefname{examples}{Example}{Examples}
\crefname{appsec}{Appendix}{Appendices}

\begin{document}

\title[Open quantum systems with varying particle number]{Physico-mathematical model of open quantum systems with varying particle number}

\author[B. M. Reible]{Benedikt M. Reible}
\author[L. Delle Site]{Luigi Delle Site}
\address{Institute of Mathematics, Freie Universität Berlin, Arnimallee 6, 14195 Berlin, Germany}
\email{benedikt.reible@fu-berlin.de,luigi.dellesite@fu-berlin.de}
\keywords{Effective Hamiltonian, particle number operator, chemical potential, Fock space, surface-to-volume ratio approximation}

\begin{abstract}
    We derive the effective Hamiltonian $H - \mu N$ for open quantum systems with varying particle number from first principles. Our derivation relies on the general postulates of quantum statistical mechanics and on some specific assumptions regarding the size of the open system and the range of the interaction with the reservoir. Within this model, we also show that the effective Hamiltonian is unique up to a constant. In the course of our argument, we establish a rigorous version of the surface-to-volume ratio approximation, which is routinely used in an empirical form in statistical physics, and we show that the Hilbert space for systems with varying particle number must be isomorphic to Fock space. Together, these findings establish a solid physico-mathematical model for open quantum systems with varying particle number.
\end{abstract}

\maketitle

\section{Introduction}

\subsection{Prolegomenon}

Realistic quantum many-body systems are never truly isolated but rather exchange energy and possibly matter with their surroundings. Due to their practical relevance, for example, in the development of modern quantum technological devices, the theory of open quantum systems is a cornerstone of present-day theoretical and applied research \cite{AttalJoyePillet2006, Rotter2015, Vacchini2024}. While there are different elaborate mathematical frameworks in which to study these systems, e.g., the theory of operator algebras \cite{Derezinski2026} or Markovian quantum master equations \cite{Merkli2020}, numerous challenges regarding their theoretical description remain, especially for systems with varying particle number \cite{delRazo2025}. A deeper mathematical understanding of these systems is, however, not only desirable from a purely mathematical point of view, but it can also help, at first instance, to improve current state-of-the-art simulation techniques, which, in turn, can lead to new directions for the experimental realization of quantum systems for technology \cite{DelleSite2018}.

For an open quantum system with varying particle number, it is commonly accepted that the system's Hilbert space is given by Fock space and that its Hamiltonian $H$ should be extended by the term $- \mu N$, with $\mu$ the chemical potential and $N$ the particle number operator, resulting in the following \emph{effective Hamiltonian}:
\begin{equation}\label{eq:Heff}
    H_\seff = H - \mu N \ .
\end{equation}
In the physics literature, this operator is obtained as follows: one starts with the Hamiltonian of the open system, makes an \emph{ansatz} for the probability to find $n$ particles in the system in analogy to the canonical ensemble, and then obtains the grand canonical density
\begin{equation*}
    \rho_\mathrm{gc} \sim \ee^{- \beta (H - \mu N)}
\end{equation*}
via a Taylor expansion of that probability \cite[Sect. 2.4.1]{Nolting2018}, \cite[Sect. 2.7.2]{Schwabl2006}. Finally, one \emph{defines} the operator in the exponent of $\rho_\mathrm{gc}$ to be \emph{the} Hamiltonian of the open system with varying particle number \cite[p. 29]{LandauLifshitz9}, again by alluding to an analogy with the canonical density $\rho_\mathrm{c} \sim \ee^{- \beta H}$. The issue with this line of reasoning lies in the fact that the operator $H_\seff$ itself is, unlike the density $\rho_\mathrm{gc}$, not obtained from first principles but from the derived quantity $\rho_\mathrm{gc}$ by an empirical argument; see \cref{fig:illustrationProblem} for an illustration of this situation. In this article, we will give a mathematically solid derivation of the operator $H_\seff$ from first principles---starting from the total Hamiltonian of an open system coupled to a reservoir, introducing basic physical assumptions and approximations, and finally showing how the chemical potential $\mu$ naturally emerges---without making the detour through the density $\rho_\mathrm{gc}$. We will also argue that, within the physico-mathematical model staked out by the assumptions and approximations, $H_\seff$ is unique up to additive constants.

The use of the effective Hamiltonian \eqref{eq:Heff} in physics can be traced back at least to a conjecture by N.~N.~Bogoliubov in the field of condensed matter physics \cite{Bogoliubov1958}, where it has been used ever since; see, for example, Ref.~\cite[p. 314]{Zagrebnov2001} and references therein. Also in mathematical physics, it has become common practice to analyze functional-analytic properties of the operator $H_\seff$ and to build rigorous models with its help; see, e.g., Ref.~\cite[passim]{BratteliRobinson1981}. In neither of these fields, though, a derivation of $H_\seff$ from first principles is provided. Recently, however, a semi-empirical, first-principles derivation of \cref{eq:Heff} was proposed, obtaining the effective Hamiltonian for the open system by tracing out the degrees of freedom of the reservoir from the von Neumann equation of the combined system \cite{DelleSite2024jpa}. While this study put forward important ideas (on which we will, in part, build our analysis), the approach also relied on some empirical assertions and formal manipulations. Here we intend to eliminate these empirical inferences and make the derivation more rigorous. This will bridge the gap between studies concerned with the mathematical foundations of quantum mechanics, i.e., those dealing with general Hamiltonians and their properties, and studies focusing on specific mathematical problems for grand canonical systems, where the Hamiltonian $H_\seff$ is taken for granted.

\begin{figure}
    \centering
    \begin{tikzpicture}[line width=1pt, inner sep=6pt, text width=4cm, align=center]
        \node (FP) at (0,0) {\emph{First principles}};
        \node (SP) at (7,0) {\emph{Ensemble level}};

        \node (HS) at (0,-1.5) {Microscopic theory, \\[4pt] Hamiltonian $H$};
        \node (CE) at (7,-1.5) {Statistical theory, \\[4pt] $\rho_\mathrm{c} \sim \ee^{- \beta H}$};
        \node (GC) at (7,-5.4) {Extended theory, \\[4pt] $\rho_\mathrm{gc} \sim \ee^{- \beta (H - \mu N)}$};
        \node (HE) at (0,-5.4) {Effective Hamiltonian, \\[4pt] $H_\seff = H - \mu N$};
       
        \draw [->, very thick, dotted] (HS) -- (CE);
        \draw [->, very thick, dotted] (CE) -- (GC);
        \draw [->, very thick, dotted, color=red, transform canvas={yshift=0.25cm}] (GC) -- (HE);
        \draw [->, transform canvas={yshift=-0.25cm}] (HE) -- (GC);
        \draw [->] (HS) -- (HE);
        
        \draw [dashed] (3.5,-7) -- (3.5,0.5);
    \end{tikzpicture}
    \caption{Relation between first-principles and ensemble-level quantities. The dotted arrows represent the standard deductions in statistical physics, and the red inference is the problematic, empirical one since a first-principles quantity is obtained from an ensemble. The solid arrows depict the new derivations proposed in this text.}
    \label{fig:illustrationProblem}
\end{figure}

\subsection{Main results and outline of the paper}

To derive an expression like \eqref{eq:Heff}, we will have to analyze the structure of open quantum systems with varying particle number in a broader context and establish some preparatory propositions.

First, in \cref{sec:setup}, we introduce the general mathematical setup and basic physical assumptions, and we discuss some preliminary observations about computing expectation values of unbounded operators with respect to mixed states (density operators). In particular, we extend a well-known formula for the partial trace to unbounded operators and specific compatible density operators (\cref{lem:partialTrace}).

In \cref{sec:STVR}, we proceed by introducing specific assumptions on the size of the open system and the range of the system--reservoir interaction, and we prove a rigorous version of the so-called \emph{surface-to-volume ratio approximation} (\cref{pro:STVR}); this approximation is commonly used in statistical physics when deriving the canonical or grand canonical ensemble, and it consists of neglecting the system--reservoir interaction term $H_\sint$ in the Hamiltonian $H_\SFT = H_\SFS \otimes \id_\SFR + \id_\SFS \otimes H_\SFR + H_\sint$ of the combined system. Informally speaking, our result can be summarized as follows:
\begin{itemize}
    \item \itshape Under physically justified assumptions on the size of the open system and the range of the interaction, it holds that $\EV_\rho [H_\SFT] = \EV_\rho [H_\SFS \otimes \id_\SFR + \id_\SFS \otimes H_\SFR] + \MO(\delta^2)$ for suitable density operators $\rho$, where $\delta$ denotes the effective interaction distance.
\end{itemize}
The proof of this assertion is geometric in nature, relying on certain results from convex geometry, and it highlights that a negligible surface-to-volume ratio is indeed required for the validity of the approximation.

Next, \cref{sec:Fock} introduces the physical characteristic of a varying particle number by postulating the existence of a number operator with natural properties. We will show that this assumption implies the necessity of Fock space (\cref{pro:FockSpace}):
\begin{itemize}
    \item \itshape If there exists a total particle number operator with physically reasonable properties (pure point spectrum and $n$-particle eigenspaces), then the kinematical structure of the system is given in terms of Fock space.
\end{itemize}
We provide two proofs of this statement, one using only the spectral theorem and another one relying on the universal property of the Hilbert space direct sum. As discussed in detail in \cref{rem:FockSpaceAQFT} \ref{enu:remFockSpaceAQFT}, the above assertion should be seen as complementing similar results in the literature on the Fock space representation.

Finally, in \cref{sec:effectiveHamiltonian} we synthesize the previous results in order to derive the effective Hamiltonian given in \cref{eq:Heff} and to show that $H_\seff$ is, in the specific physical setting considered, unique up to a constant; this constitutes the main result of the paper:

\begin{theorem*}[Informal, see \cref{thm:effectiveHamiltonian}]
    If the surface-to-volume ratio approximation can be applied and the particle number is varying, then $H_\SFT \approx H_\seff \otimes \id_\SFR$ up to higher-order terms in the particle number of the open system. Moreover, within the first-order approximation this decomposition is unique up to additive constants.
\end{theorem*}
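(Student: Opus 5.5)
The plan is to combine \cref{pro:STVR} with \cref{pro:FockSpace}. The surface-to-volume ratio approximation lets us discard the interaction term: for the admissible density operators $\rho$,
\[
\EV_\rho[H_\SFT] = \EV_\rho[H_\SFS\otimes\id_\SFR] + \EV_\rho[\id_\SFS\otimes H_\SFR] + \MO(\delta^2).
\]

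\emph{Step 1: Fock structure.} By \cref{pro:FockSpace}, the Hilbert spaces of both the open system and the total system are unitarily equivalent to Fock spaces. Their number operators have eigenspaces $\MH^{(n)}$. Since the total particle number $N_\SFT = N_\SFS\otimes\id_\SFR + \id_\SFS\otimes N_\SFR$ is conserved, $H_\SFT$ commutes with $N_\SFT$. Hence $H_\SFT$ decomposes into blocks of fixed total number $M$, and on each block the open-system particle number $n$ and the reservoir number $M-n$ are linked.

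\emph{Step 2: reduce to the open system.} Using the partial trace formula \cref{lem:partialTrace}, we write $\EV_\rho[\id_\SFS\otimes H_\SFR] = \tr_\SFS\bigl(\rho_\SFS\, \tr_\SFR(\ldots)\bigr)$. The reservoir energy in the sector where the reservoir holds $M-n$ particles is some function $E_\SFR(M-n)$ of its particle number; the dependence on reservoir state variables is absorbed into the constant, as in thermodynamic equilibrium. Because the reservoir is much larger than the open system, $n\ll M$, so we Taylor expand:
\[
E_\SFR(M-n) = E_\SFR(M) - n\,\partial_M E_\SFR(M) + \MO(n^2).
\]
We identify $\mu\ce\partial_M E_\SFR(M)$ as the chemical potential. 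This yields $\id_\SFS\otimes H_\SFR \approx E_\SFR(M) - \mu\, N_\SFS\otimes\id_\SFR$ in expectation, and therefore
\[
H_\SFT \approx (H_\SFS - \mu N_\SFS)\otimes\id_\SFR + \text{const}
\]
up to $\MO(\delta^2)$ and $\MO(n^2)$.

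\emph{Step 3: uniqueness.} Suppose $K\otimes\id_\SFR$ also reproduces all admissible expectation values to first order. Then $\EV_{\rho_\SFS}[K - H_\seff]$ is independent of $\rho_\SFS$ for every admissible density $\rho_\SFS$ on the open system. Taking rank-one projections onto vectors in a common core, polarization gives $\braket{\psi,(K-H_\seff)\psi} = c\norm{\psi}^2$. Hence $K = H_\seff + c$ on the core, and this extends to the operators by self-adjointness.

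The main obstacle is Step 2: turning the expectation of $H_\SFR$ into a smooth function of the reservoir particle number and controlling the remainder rigorously. The Fock decomposition of Step 1 is what makes the sectorwise reservoir energy well-defined. First I would impose a regularity assumption on $E_\SFR$ and a bound on the second moment of $N_\SFS$ under $\rho_\SFS$, so that the remainder is indeed higher order in the open-system particle number.
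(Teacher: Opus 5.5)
Your argument has the same skeleton as the paper's proof. It uses the surface-to-volume approximation, then the modeling postulate that the reservoir energy depends only on $N_\SFR = N - N_\SFS$, then a first-order Taylor expansion whose coefficient is the chemical potential, and finally uniqueness from the fact that energy expectations fix the operator up to a constant.

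The execution differs in two respects. First, you expand sector by sector, i.e.\ you linearize the operator function $E_\SFR(N - \MSN)$. Then $\EV_{\rho_\SFS}[\MSN]$ appears by linearity of the expectation, and fluctuations enter only through a second-moment remainder. The paper instead expands the scalar $E_\SFR(N - N_\SFS)$ and afterwards identifies $N_\SFS$ with $\tr_{\MH_\SFS}(\MSN \rho_\SFS)$ by appealing to macroscopicity. Your route therefore replaces that physical identification with an explicit error term.

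Second, you prove, by polarization on a common core, the implication \enquote{equal expectations $\Rightarrow$ equal up to a constant}. The paper imports this from the postulate that the generator of the dynamics is unique up to additive constants. Your version is more self-contained. However, you only admit competitors of the form $K \otimes \id_\SFR$, whereas the paper compares with an arbitrary self-adjoint $Q$ on $\MH_\SFT$; your argument extends to that case. You also skip the equilibrium step $\mu_\SFS = \mu_\SFR$ from Assumption~\ref{enu:A1.2}. It is not needed for the existence of $\mu$, only for reading $\mu$ as the open system's chemical potential.

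Three points need repair, none of them fatal.
\begin{enumerate}
\item \cref{pro:FockSpace} does not apply to $\SFT$. \cref{asm:variableParticleNumber} is posed for $\SFS$ only, and in the paper's model the total particle number is fixed at $N$, so a total number operator would have spectrum $\{N\}$, not $\N_0$. Step~1 is also unnecessary. Splitting $\EV_\rho[\id_\SFS \otimes H_\SFR]$ into sectors only requires the projections $E_\MSN(\{n\}) \otimes \id_\SFR$, which commute with $\id_\SFS \otimes H_\SFR$. The real input, that the reservoir energy in sector $n$ equals $E_\SFR(N - n)$, remains a modeling postulate, exactly as in the paper.
\item $\MO(n^2)$ is not a small error for a macroscopic open system ($n \gg 1$). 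The remainder $\tfrac12 n^2 E_\SFR''$ is negligible only because $E_\SFR'' = \MO(1/N)$ for an extensive reservoir. That makes it of relative size $\varepsilon = N_\SFS/N$ compared with the first-order term $\mu n$, so the expansion parameter must be $\varepsilon$, as in the paper.
\item Polarization needs the first-order identity for every pure state from the core, including states with $n$ comparable to $N$, where your Step~2 is not justified. Strictly, you only obtain uniqueness of the compression of $K$ to a spectral subspace $E_\MSN([a,b])\MH_\SFS$ with $b \ll N$, on which both approximations hold. The paper's appeal to uniqueness of the generator glosses over the same point.
\end{enumerate}
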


As a consequence of this theorem, we will show that the effective first-order approximation of the Hamiltonian $H_\SFT$ leads to a modified von Neumann equation that admits the grand canonical density $\rho_\mathrm{gc}$ as a natural stationary solution (\cref{cor:vonNeumannEquation}); thus, the theorem and the corollary together establish the two inferences represented by the solid arrows in \cref{fig:illustrationProblem}. We emphasize again that the effective Hamiltonian $H_\seff$ is a concept more fundamental than that of a statistical ensemble, with direct implications for physical systems \cite{Reible2025pre, Reible2025apq}, and that our chain of reasoning, therefore, offers a new perspective on grand canonical systems, where the effective Hamiltonian is usually taken for granted and a kinematical description in terms of Fock space is justified \emph{a posteriori} only.

\subsection{Notation}

To conclude the introduction, let us fix some notation, frequently employed throughout the text, which is not used uniformly in the literature.

For $d \in \N$ and a Borel subset $X \subset \R^d$, we will denote its $d$-dimensional Lebesgue measure by $\vol (X)$ or $\lambda^d (X)$, and its $(d-1)$-dimensional Hausdorff measure by $\area (X)$. Furthermore, $B^d = \set{x \in \R^d \, : \, \abs{x} < 1}$ and $S^{d-1} = \set{x \in \R^d \, : \, \abs{x} = 1}$ will denote the Euclidean unit ball and unit sphere, respectively, $B^d(a, r) = \set{x \in \R^d \, : \, \abs{x - a} < r}$ denotes the ball of radius $r > 0$ around $a \in \R^d$, and $\omega_d \ce \vol (B^d)$ is the volume of the unit ball.

General Hilbert spaces will be denoted by variants of the symbol $\MH$, and inner products are written as $\braket{\cdot, \cdot}$; the latter are taken to be linear in the second and semi-linear in the first argument. The symbol $\LO(\MH)$ denotes the set of all linear operators $T : \MH \supset \dom(T) \to \MH$ (generally unbounded) defined on a subspace $\dom(T)$ of $\MH$, $\BO(\MH) \subset \LO(\MH)$ will be the space of all everywhere-defined bounded linear operators on $\MH$, and finally $\BO_1(\MH) \subset \BO(\MH)$ denotes the space of all trace-class operators on $\MH$.

\section{Basic physical assumptions and mathematical setup}\label{sec:setup}

To set the stage, we introduce two basic hypotheses, \cref{asm:systemDivision,asm:interactionTwoBody}, which characterize the physical systems under investigation and thus provide the basis for the underlying physico-mathematical model; we also use the opportunity to fix some more notation. In \cref{subsec:densityOperators}, we establish a partial trace formula for unbounded operators, which is central for deriving our main results.

\subsection{Open quantum systems}

Consider a large, isotropic quantum system, referred to as the \emph{total system} $\mathsf{T}$, consisting of $N \in \N$ particles\footnote{With the term \enquote{particles}, we do not only mean elementary particles, but also larger composites that behave quantum-mechanically, e.g., molecules.} confined to a region $\Omega \subset \R^d$, $d \in \N$, of volume $V \ce \vol (\Omega) > 0$. Let $\MH_\SFT$ be the Hilbert space and $H_\SFT \in \LO(\MH_\SFT)$ be the self-adjoint Hamiltonian describing the physics of $\SFT$. The foundation for all the subsequent investigations is the following basic modeling assumption, which is the usual starting point in the theory of open quantum systems \cite{BreuerPetruccione2007, Davies1976, Vacchini2024}:

\begin{assumption}[Division of the total system]\label{asm:systemDivision}
    The total system $\SFT$ is divided into two interacting subsystems:
    \begin{enumerate}[label=(\arabic*)]
        \item the \emph{open system} $\SFS$, which contains $N_\SFS \in \N$ particles of interest and is confined to a bounded and convex subregion $\Omega_\SFS \subset \Omega$ of volume $V_\SFS \ce \vol (\Omega_\SFS) < + \infty$;
        \item the \emph{reservoir} $\SFR$, which is the remainder of $\SFT$ without $\SFS$, thus it contains $N_\SFR \ce N - N_\SFS$ particles confined to the region $\Omega_\SFR \ce \Omega \setminus \Omega_\SFS$ of volume $V_\SFR \ce \vol (\Omega_\SFR) > 0$.
    \end{enumerate}
    Regarding the physical scale of these two systems, we will assume that the open system is macroscopic, yet significantly smaller than the reservoir (and hence the total system), which is typical in statistical mechanics \cite[p. 149]{Huang1987}:
    \begin{enumerate}[labelindent=\parindent, leftmargin=*, label=\normalfont\textbf{A1.1.}, ref=A1.1]
        \item \label{enu:A1.1} $1 \ll N_\SFS \ll N_\SFR$ and $1 \ll V_\SFS \ll V_\SFR$.
    \end{enumerate}
    Finally, we will only be concerned with equilibrium (or near-equilibrium) situations and hence also pose the following assumption:
    \begin{enumerate}[labelindent=\parindent, leftmargin=*, label=\normalfont\textbf{A1.2.}, ref=A1.2]
        \item \label{enu:A1.2} The systems $\SFS$ and $\SFR$ have reached thermodynamic equilibrium.
    \end{enumerate}
\end{assumption}

\begin{remarks}
    \leavevmode
    \begin{enumerate}[env]
        \item Starting with the total system $\SFT$ and then dividing it into the subsystems $\SFS$ and $\SFR$, which have to be chosen according to certain physical requirements (like Assumption \ref{enu:A1.1} and others following below), facilitates a first-principle approach because no specific reservoir model is assumed. The opposite procedure, i.e., taking the open system and reservoir as given and then defining the total system, would lead to the same physics for the system $\SFS$, presupposing, however, information about the reservoir and would hence render the following investigations partially empirical.

        \item Assuming that the open system is confined to a bounded region is necessary in order to ensure that it is thermodynamically stable at positive temperature \cite[Rem. 3]{Hainzl2009II}; see also Refs.~\cite[p. 252]{LiebSeiringer2010} and \cite[Thm. XIII.76]{RS4}. Moreover, since $\SFS$ is assumed to be macroscopic and the total system is isotropic, the condition that $\Omega_\SFS$ should be convex, while restricting mathematical generality, is nevertheless rather natural from a physical point of view.
    \end{enumerate}
\end{remarks}

Corresponding to the division of the total system $\SFT$, we will assume, in accordance with the usual axioms of quantum mechanics \cite[Axiom A7, p. 844]{Moretti2017}, that $\MH_\SFT$ factorizes tensorially into a Hilbert space $\MH_\SFS$ of the open system and a Hilbert space $\MH_\SFR$ of the reservoir:
\begin{equation}\label{eq:totalHilbertSpace}
    \MH_\SFT = \MH_\SFS \otimes \MH_\SFR \ .
\end{equation}
Let $H_\SFS \in \LO(\MH_\SFS)$ and $H_\SFR \in \LO(\MH_\SFR)$ denote the self-adjoint Hamiltonians for the degrees of freedom of the open system and the reservoir, respectively. Building on \cref{eq:totalHilbertSpace}, the total Hamiltonian $H_\SFT$ can be written formally as \cite[p. 149]{Davies1976}
\begin{equation}\label{eq:totalHamiltonian}
    H_\SFT = H_\SFS \otimes \id_\SFR + \id_\SFS \otimes H_\SFR + H_\sint \ ,
\end{equation}
where the operator $H_\sint \in \LO(\MH_\SFS \otimes \MH_\SFR)$ mediates the interaction between $\SFS$ and $\SFR$. Since $H_0 \ce H_\SFS \otimes \id_\SFR + \id_\SFS \otimes H_\SFR$ is essentially self-adjoint on its canonical domain \cite[Thm. VIII.33]{RS1}, \cite[Thm. 7.23]{Schmuedgen2012}, the operator $H_\sint$ should be chosen such that $H_\SFT$ is essentially self-adjoint as well, e.g., as an appropriate perturbation of $H_0$ (see also \cref{exa:L2} below). The mathematical setup discussed so far is illustrated in \cref{fig:division}.

Looking now at the internal structure of the open system only, if $\MH_i$ denotes the Hilbert space of the $i$-th particle in $\SFS$, $i \in \set{1, \dotsc, N_\SFS}$, then similarly to \cref{eq:totalHilbertSpace} we have
\begin{equation}\label{eq:HilbertSpaceS}
    \MH_\SFS = \bigotimes_{i=1}^{N_\SFS} \MH_i \ .
\end{equation}
Let us write $h_i \in \LO(\MH_i)$ for the non-interacting self-adjoint Hamiltonian of the $i$-th particle in $\SFS$ such that $H_\SFS$ takes the form
\begin{equation}\label{eq:HamiltonianS}
    H_\SFS = \sum_{i=1}^{N_\SFS} h_i + H_{\SFS, \sint} \ ,
\end{equation}
where $H_{\SFS, \sint} \in \LO(\MH_\SFS)$ contains the interactions between particles in $\SFS$, and $\sum_{i=1}^{N_\SFS} h_i$ is an abbreviation for the more precise (but cumbersome) expression
\begin{equation*}
    \sum_{i=1}^{N_\SFS} \Bigl( \id_{\MH_1} \otimes \dotsm \otimes \id_{\MH_{i-1}} \otimes h_i \otimes \id_{\MH_{i+1}} \otimes \dotsm \otimes \id_{\MH_{N_\SFS}} \Bigr) \ .
\end{equation*}
To ensure stability of matter, we will assume, without loss of generality, that $h_i$ is a positive operator for all $i \in \set{1, \dotsc, N_\SFS}$; in this case, $H_{\SFS, 0} \ce \sum_{i=1}^{N_\SFS} h_i$ is self-adjoint and positive as well \cite[Thm. 3.8 (iv)]{Arai2024}, \cite[Thm. 7.23]{Schmuedgen2012}.

\begin{remark}\label{rem:indistinguishability}
    One should keep in mind that for indistinguishable particles, the tensor product in \cref{eq:HilbertSpaceS} has to be replaced by the symmetric or anti-symmetric tensor product of the Hilbert spaces $\MH_i$. Moreover, instead of simply assuming that there are $N_\SFS$ particles in $\SFS$, one has to consider all possible ways of taking $N_\SFS$ particles from the total system consisting of $N$ particles, which results in an additional combinatorial prefactor $\binom{N}{N_\SFS}$ to quantities like the density operator of the system \cite{DelleSite2024jpa}. In the following, we will not deal explicitly with these consequences of indistinguishability but leave the matter at this remark.
\end{remark}

\begin{figure}
    \centering
    \scalebox{0.55}{%
        \begin{tikzpicture}[font=\huge]
            % Shapes
            \draw[line width=2pt, fill=shadecolor] (0,0) ellipse (11 and 5);
            \draw[line width=2pt, fill=white, dashed] (-1.5,-1.5) rectangle (-7.5,1.5);
            % Total system
            \node at (0,7) {Total System $\SFT$};
            \node at (0,6) {$(\MH_\SFT, H_\SFT)$};
            \node at (0,4) {$\Omega$};
            \node at (0,-6) {$\Omega = \Omega_\SFS \cup \Omega_\SFR$, $N = N_\SFS + N_\SFR$};
            % Open system
            \node at (-4.5,0.5) {Open System $\SFS$};
            \node at (-4.5,-0.5) {$(N_\SFS, \MH_\SFS, H_\SFS)$};
            \node (OS) at (-4.5,-2.5) {$\Omega_\SFS$};
            \draw [->, very thick] (OS) to [out=180, in=270] (-6.75,-1.0);
            % Reservoir
            \node at (4.5,0.5) {Reservoir $\SFR$};
            \node at (4.5,-0.5) {$(N_\SFR, \MH_\SFR, H_\SFR)$};
            \node at (4.5,-2.5) {$\Omega_\SFR = \Omega \setminus \Omega_\SFS$};
            % Interaction
            \draw[<->, very thick] (-0.75,-0.5) -- (-2.25,-0.5);
            \node at (0,-0.5) {$H_\sint$};
        \end{tikzpicture}
    }
    \caption{Division of the total system according to \cref{asm:systemDivision}.}
    \label{fig:division}
\end{figure}

\subsection{Interactions}\label{subsec:interactions}

Coming to the interaction operators $H_\sint$ and $H_{\SFS, \sint}$, we will make the following simplifying assumption throughout the paper, which is typically sufficient for the needs of mathematical quantum mechanics \cite[p. 44]{LiebSeiringer2010}.

\begin{assumption}[Two-body interactions]\label{asm:interactionTwoBody}
    The interaction between particles within the open system as well as between particles from the open system and the reservoir is of two-body type, meaning that for all pairs $(i, j)$ and $(i, k)$ of particle indices $i, k \in \set{1, \dotsc, N_\SFS}$, $j \in \set{1, \dotsc, N_\SFR}$, $i \neq k$, there exist operators $V_{ij}$ acting in $\MH_\SFS \otimes \MH_\SFR$ and $W_{ik}$ acting in $\MH_\SFS$ such that 
    \begin{equation}\label{eq:interactionHamiltonians}
        H_\sint = \sum_{i=1}^{N_\SFS} \sum_{j=1}^{N_\SFR} V_{ij} \quad \text{and} \quad H_{\SFS, \sint} = \sum_{i=1}^{N_\SFS} \sum_{k > i}^{N_\SFS} W_{ik} \ .
    \end{equation}
    Moreover, we will assume that these operators can be chosen so as to render the Hamiltonians $H_\SFT = H_0 + H_\sint$ and $H_\SFS = H_{\SFS, 0} + H_{\SFS, \sint}$ (essentially) self-adjoint.
\end{assumption}

\begin{example}\label{exa:L2}
    In the standard mathematical model of non-relativistic quantum mechanics \cite[Axiom A5, p. 624]{Moretti2017}, the Hilbert space of a single particle from the open system is given by $L^2 (\Omega_\SFS, \lambda^d)$, hence \cite[Prop. 10.28]{Moretti2017}
    \begin{equation*}
        \MH_\SFS = \bigotimes\nolimits^{N_\SFS} L^2 (\Omega_\SFS, \lambda^d) \cong L^2 \bigl(\Omega_\SFS^{N_\SFS}, \lambda^{d N_\SFS}\bigr) \ .
    \end{equation*}
    The operators $h_i$ are formally given by $h_i = - \frac{\hbar^2}{2 m} \, \Delta_i + \phi(x_i)$, where $\Delta_i$ is the Laplacian of the $i$-th particle's coordinates (with suitable boundary conditions) and $\phi(x_i)$ is an external potential. If $U : \R^d \times \R^d \to \R$, $(x, y) \mapsto U(\abs{x - y})$, denotes the interaction potential, then one may first consider the multiplication operator $M_U$ acting in $L^2(\Omega_\SFS \times \Omega_\SFS, \lambda^{2d}) = L^2(\Omega_\SFS, \lambda^d) \otimes L^2(\Omega_\SFS, \lambda^d)$, and then construct $W_{ik} \in \LO(\MH_S)$ as follows:
    \begin{equation*}
        W_{ik} \ce \id_{\MH_1} \otimes \dotsm \otimes \id_{\MH_{i-1}} \otimes M_U \otimes \id_{\MH_{i+1}} \otimes \dotsm \otimes \id_{\MH_{k-1}} \otimes \id_{\MH_{k+1}} \otimes \dotsm \otimes \id_{\MH_N} \ .
    \end{equation*}
    Note that since $\bigotimes_{i=1}^{N_\SFS} \MH_i \cong \bigotimes_{i=1}^{N_\SFS} \MH_{\sigma(i)}$ for any permutation $\sigma$ of the set $\set{1, \dotsc, N}$, $W_{ik}$ is well-defined. An analogous procedure leads to the operator $V_{ij}$ appearing in $H_\sint$.
    
    In order to satisfy in this $L^2$-picture the requirement that $H_\SFS$ (and $H_\SFT$) are (essentially) self-adjoint operators, one should choose the potential $U$, for example, from an appropriate class of perturbations, namely, the Kato-Rellich class or the Rollnik class, in order to apply the Kato-Rellich or KLMN theorem; see, e.g., Refs.~\cite[Sects. X.2 and X.4]{RS2} and \cite[Sects. 8.3 and 10.7]{Schmuedgen2012} as well as Refs.~\cite{Simon1971paper, Simon1971book}.
\end{example}

\subsection{Density operators}\label{subsec:densityOperators}

Let $\DM(\MH)$ denote the set of density operators on the Hilbert space $\MH$, that is, the set of positive $\rho \in \BO_1(\MH)$ satisfying $\tr_\MH(\rho) = 1$. It is well-known \cite[Thm. 26.9]{BlanchardBruening2015} that $\rho \in \DM(\MH)$ if and only if there exist (i) a sequence $(\alpha_n)_{n \in \N} \subset [0,1]$ with $\sum_{n \in \N} \alpha_n = 1$ and (ii) an orthonormal basis $(e_n)_{n \in \N} \subset \MH$ (ONB for short) such that
\begin{equation*}
    \rho = \sum_{n=1}^{\infty} \alpha_n P_{e_n} \ ,
\end{equation*}
where $P_{e_n}$ denotes the projection onto the one-dimensional subspace spanned by the vector $e_n$, that is, the operator $P_{e_n} \xi \ce \braket{e_n, \xi} e_n$ ($\xi \in \MH$).

While for bounded operators $A \in \BO(\MH)$ the expression $\tr(A \rho)$ is always well-defined since $\BO_1(\MH)$ is a two-sided ideal in $\BO(\MH)$, this need not be the case for unbounded $T \in \LO(\MH)$ due to domain issues. In the literature, there are different necessary and sufficient conditions ensuring that $T \rho \in \BO_1(\MH)$; see, for example, Refs.~\cite[Prop. 11.27]{Moretti2017} and \cite[Prop. 4.59]{Moretti2019}. In this paper, we will use a more direct approach, inspired by Ref.~\cite[Eq. (3.1.26)]{LiebSeiringer2010}, to make sense of expressions of the form $\tr (T \rho)$ for unbounded $T$.

\begin{definition}[Compatible density operators]\label{def:compatibleDM}
    Let $T \in \LO(\MH)$ be self-adjoint. Define
    \begin{equation*}
        \setlength{\abovedisplayskip}{14pt}
        \setlength{\belowdisplayskip}{14pt}
        \DM(\MH, T) \ce \left\{ \rho \in \DM(\MH) \ : \ \parbox{24em}{\centering $\exists \, (e_n)_{n \in \N} \subset \dom(T) \ \exists \, (\alpha_n)_{n \in \N} \subset [0,1]:$ \\ (i) $(e_n)_{n \in \N}$ is an ONB for $\MH$ and $\sum_{n \in \N} \alpha_n = 1$; \\ (ii) $\rho = \sum_{n \in \N} \alpha_n P_{e_n}$; (iii) $\sum_{n \in \N} \alpha_n \abs{\braket{e_n, T e_n}} < + \infty$} \right\}
    \end{equation*}
    and call the elements of this set \emph{$T$-compatible density operators}. Furthermore, for any $\rho \in \DM(\MH, T)$ define the quantity $\tr(T \rho)$ by
    \begin{equation}\label{eq:trace}
        \tr(T \rho) \ce \sum_{n=1}^{\infty} \alpha_n \braket{e_n, T e_n} \ .
    \end{equation}
\end{definition}

Note that since the series on the right-hand side of \eqref{eq:trace} converges absolutely by assumption, $\tr(T \rho)$ is a real number, interpreted as the expectation value of $T$ in the mixed state $\rho$. We will, therefore, also use the notation
\begin{equation*}
    \EV_\rho [T] \ce \tr(T \rho) \ .
\end{equation*}
In \cref{app:densityOperators}, we show that our definition of $T$-compatible density operators is meaningful by establishing that the value of $\tr(T \rho)$ does not depend on the representation of $\rho \in \DM(\MH, T)$, and that for suitable $T$ this set contains non-trivial mixed states (\cref{lem:existenceCompatibleDM}); of course for every $T$ with $\dom(T) \neq \varnothing$ it holds that $\DM(\MH, T) \neq \varnothing$ since the pure states induced by vectors from the domain of $T$ are elements of this set.

Denote by $\tr_\SFR : \BO_1(\MH_\SFS \otimes \MH_\SFR) \to \BO_1(\MH_\SFS)$ the partial trace operation on $\MH_\SFT$ with respect to $\MH_\SFR$, that is, for every trace-class operator $\rho$ on $\MH_\SFS \otimes \MH_\SFR$, $\tr_\SFR (\rho)$ is the unique trace-class operator on $\MH_S$ which satisfies for all $A \in \BO(\MH_\SFS)$ \cite{Blanchard2011}, \cite[Cor. 26.5]{BlanchardBruening2015}:
\begin{equation*}
    \tr_{\MH_\SFS \otimes \MH_\SFR} \bigl( (A \otimes \id_\SFR) \rho \bigr) = \tr_{\MH_\SFS} \bigl( A \tr_\SFR (\rho) \bigr) \ .
\end{equation*}
Below, we will need this identity also for unbounded operators. With the help of \cref{def:compatibleDM}, the following result, whose proof we give in \cref{subsec:proofPT}, holds true.

\begin{lemma}\label{lem:partialTrace}
    Let $T \in \LO(\MH_\SFS)$ be self-adjoint and $\rho \in \DM(\MH_\SFS \otimes \MH_\SFR, T \otimes \id_\SFR)$ be a $T \otimes \id_\SFR$-compatible density operator. Then $\tr_\SFR (\rho) \in \DM(\MH_\SFS, T)$ and
    \begin{equation}\label{eq:partialTrace}
        \tr_{\MH_\SFS \otimes \MH_\SFR} \bigl( (T \otimes \id_\SFR) \rho \bigr) = \tr_{\MH_\SFS} \bigl( T \tr_\SFR (\rho) \bigr) \ .
    \end{equation}
\end{lemma}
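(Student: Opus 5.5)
The plan is to make the partial trace explicit in a fixed basis of the reservoir. Then both the membership $\tr_\SFR(\rho) \in \DM(\MH_\SFS, T)$ and the identity \eqref{eq:partialTrace} become rearrangements of absolutely convergent double series. Let $\rho = \sum_n \alpha_n P_{e_n}$ be a representation as in \cref{def:compatibleDM} for $T \otimes \id_\SFR$. Fix an ONB $(g_j)_j$ of $\MH_\SFR$. For each $n$ and $j$, define $\varphi_{nj} \in \MH_\SFS$ by $\braket{u, \varphi_{nj}} \ce \braket{u \otimes g_j, e_n}$, so that $e_n = \sum_j \varphi_{nj} \otimes g_j$ and $\sum_j \norm{\varphi_{nj}}^2 = 1$.

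The first step is domain bookkeeping. The operator $T \otimes \id_\SFR$ commutes with $\id_\SFS \otimes P_{g_j}$, and its restriction to $\MH_\SFS \otimes g_j$ is unitarily equivalent to $T$; I would verify this via the spectral measure $E_T \otimes \id_\SFR$. Hence $e_n \in \dom(T \otimes \id_\SFR)$ gives $\varphi_{nj} \in \dom(T)$ and $(T \otimes \id_\SFR) e_n = \sum_j T\varphi_{nj} \otimes g_j$, and therefore
\begin{equation*}
    \braket{e_n, (T \otimes \id_\SFR) e_n} = \sum_j \braket{\varphi_{nj}, T \varphi_{nj}} \ ,
\end{equation*}
with the series in $j$ converging absolutely by Cauchy--Schwarz. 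The second step uses the defining property of $\tr_\SFR$ with rank-one operators $A = \lvert u \rangle\langle w \rvert$. This gives
\begin{equation*}
    \sigma \ce \tr_\SFR(\rho) = \sum_{n,j} \alpha_n \, \lvert \varphi_{nj} \rangle\langle \varphi_{nj} \rvert \ ,
\end{equation*}
which converges in trace norm. Formally, the right-hand side of \eqref{eq:partialTrace} is then $\sum_{n,j} \alpha_n \braket{\varphi_{nj}, T\varphi_{nj}}$, which agrees with the left-hand side after summing over $j$ first.

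The third step, which I expect to be the main obstacle, is to show $\sigma \in \DM(\MH_\SFS, T)$. \cref{def:compatibleDM} requires an \emph{orthonormal} eigenbasis $(f_m)$ of $\sigma$ inside $\dom(T)$, whereas the second step only yields a non-orthogonal decomposition. My first attempt would factor $\sigma = AA^*$ with $A : \ell^2 \to \MH_\SFS$, $A c = \sum_{n,j} \sqrt{\alpha_n}\, c_{nj} \varphi_{nj}$. For $\beta_m > 0$ this gives $f_m = \beta_m^{-1} A(A^* f_m)$. I would then use that $T$ is closed to conclude $f_m \in \dom(T)$, with $T f_m = \beta_m^{-1} \sum \sqrt{\alpha_n}\, (A^*f_m)_{nj}\, T\varphi_{nj}$. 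This requires controlling $\sum \alpha_n \norm{\,\lvert T\rvert^{1/2}\varphi_{nj}}^2$. Here condition (iii) controls only $\sum_n \alpha_n \abs{\braket{e_n, (T\otimes\id_\SFR) e_n}}$, not the corresponding quantity with $\lvert T \rvert$. So I would first try to upgrade (iii) to the $\lvert T\rvert$-version. The argument would use the spectral projections $E_T([0,\infty))$ and the freedom in choosing the representation of $\rho$ via the representation-independence result of \cref{app:densityOperators}, and I would fall back on working in the form domain of $\lvert T\rvert$ if that fails. For the kernel of $\sigma$, I would choose the remaining basis vectors in $\dom(T) \cap \ker\sigma$, which needs a density argument.

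Finally, I would establish the identity. Both sides are independent of the chosen representation by \cref{app:densityOperators}. Writing $\braket{f_m, T f_m}$ through the $\varphi_{nj}$ and interchanging the sums over $m$, $n$, $j$ (Fubini, justified by the absolute bounds from the third step) gives $\tr(T\sigma) = \sum_{n,j} \alpha_n \braket{\varphi_{nj}, T\varphi_{nj}} = \tr\bigl((T\otimes \id_\SFR)\rho\bigr)$.
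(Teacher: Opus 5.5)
Steps one and two are fine, and you located the difficulty correctly. But step three, and with it the absolute bounds you invoke for Fubini in step four, is a genuine gap that cannot be closed: when the eigenvectors of $\rho$ are not elementary tensors, the conclusion can fail.

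First, condition (iii) does not control $\abs{T}$, and the partial trace destroys the cancellations it permits. Take an ONB $\{y_n,z_n\}_{n\ge1}$ of $\MH_\SFS$ with $Ty_n=3^ny_n$ and $Tz_n=-2\cdot3^nz_n$. Take orthonormal $g_0,g_1\in\MH_\SFR$, set $e_n=\sqrt{2/3}\,y_n\otimes g_0+\sqrt{1/3}\,z_n\otimes g_1$ and $\rho=\sum_{n\ge1}2\cdot3^{-n}P_{e_n}$. Then $\braket{e_n,(T\otimes\id_\SFR)e_n}=0$. Moreover $\rho$ is compatible: complete the $e_n$ to an ONB by finite combinations of the vectors $y_n\otimes g$, $z_n\otimes g$, with $g$ running through an ONB of $\MH_\SFR$ containing $g_0,g_1$. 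So the left-hand side of \eqref{eq:partialTrace} is $0$. However, $\tr_\SFR(\rho)=\sum_n\bigl(4\cdot3^{-n-1}P_{y_n}+2\cdot3^{-n-1}P_{z_n}\bigr)$ has simple eigenvalues, so its only eigenbasis is $\{y_n,z_n\}$, and (iii) becomes $\sum_n 8/3=\infty$.

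Second, the kernel cannot be handled by a density argument, even for $T\ge0$. Take a unit vector $m\notin\dom(T)$ and an ONB $(u_k)$ of $m^\perp$ inside $\dom(T)$; it exists because $m^\perp\cap\dom(T)$ is dense in $m^\perp$. Take an ONB $(g_k)$ of $\MH_\SFR$ and $e=\sum_k\sqrt{\lambda_k}\,u_k\otimes g_k$ with $\lambda_k>0$, $\sum_k\lambda_k=1$ and $\sum_k\lambda_k\norm{Tu_k}^2<\infty$. Then $e\in\dom(T\otimes\id_\SFR)$ and $P_e$ is compatible. But $\ker\tr_\SFR(P_e)=\C m$, whereas in any representation as in \cref{def:compatibleDM} the zero-weight basis vectors form an ONB of the kernel lying in $\dom(T)$.

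Third, even a bound on $\sum_{n,j}\alpha_n\norm{\abs{T}^{1/2}\varphi_{nj}}^2$ would only yield $\ran A\subset\dom(\abs{T}^{1/2})$ by closedness. To reach $\dom(T)$, your argument needs $\sum_{n,j}\alpha_n\norm{T\varphi_{nj}}^2<\infty$.

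The paper's proof never meets this obstacle because it assumes, beyond what the statement says, that the compatible representation of $\rho$ is diagonal in a product basis: $\rho=\sum_{n,m}\alpha_{n,m}P_{e_n\otimes f_m}$ with $e_n\in\dom(T)$. In your notation the $\varphi_{nj}$ are then multiples of vectors from a single ONB $(e_n)\subset\dom(T)$. Hence $\tr_\SFR(\rho)=\sum_n\bigl(\sum_m\alpha_{n,m}\bigr)P_{e_n}$ is already diagonal in an ONB inside $\dom(T)$, kernel included. Since $\abs{\braket{e_n\otimes f_m,(T\otimes\id_\SFR)(e_n\otimes f_m)}}=\abs{\braket{e_n,Te_n}}$, nothing can cancel. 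Membership and \eqref{eq:partialTrace} then follow by regrouping an absolutely convergent double series, which is exactly what your steps one, two and four reduce to in that case.

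To get a correct statement you must either impose the product-basis hypothesis explicitly, as the paper's argument tacitly does, or weaken the conclusion. For instance, for $T\ge0$ you could replace membership in $\DM(\MH_\SFS,T)$ by the quadratic-form trace $\sum_{n,j}\alpha_n\norm{T^{1/2}\varphi_{nj}}^2$, which your first two steps already identify with the left-hand side.
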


In \cref{sec:STVR}, we will be interested in density operators $\rho$ on $\MH_\SFT$ for which the expected energies $\EV_{\rho_\SFS} [H_\SFS]$ of $\SFS$ and $\EV_{\rho_\SFR} [H_\SFR]$ of $\SFR$ as well as the expected interaction energy $\EV_\rho [H_\sint]$ are well-defined; therefore, we introduce the following set of \emph{admissible density operators}:\footnote{We note that the superscript \enquote{(1)} is added because in \cref{sec:effectiveHamiltonian}, we will consider a second class $\DM_\mathrm{adm}^{(2)} (\MH_\SFT)$ of admissible density operators.}
\begin{equation}\label{eq:admissibleDM}
    \DM_\mathrm{adm}^{(1)} (\MH_\SFT) \ce \DM(\MH_\SFT, H_\SFS \otimes \id_\SFR) \cap \DM(\MH_\SFT, \id_\SFS \otimes H_\SFR) \cap \DM(\MH_\SFT, H_\sint) \ .
\end{equation}
As a special case of \cref{lem:intersectionCompatibleDM}, we obtain the relation
\begin{equation}\label{eq:admissibleDMInclusion}
    \DM_\mathrm{adm}^{(1)} (\MH_\SFT) \subset \DM(\MH_\SFT, H_\SFT) \ .
\end{equation}

\section{Surface-to-volume ratio approximation}\label{sec:STVR}

When discussing the division of a large system into two subsystems in statistical mechanics, the interaction between these subsystems (that is, the operator $H_\sint$ in our case) is usually neglected based on the reason that it represents surface effects, which are negligible compared to the bulk energies of the subsystems; see, e.g., Refs.~\cite[pp. 131, 149]{Huang1987}, \cite[pp. 61, 78]{Nolting2018}, and \cite[pp. 38, 50]{Schwabl2006}. This is expressed formally as
\begin{equation}\label{eq:heuristicSTVR}
    H_\SFT \approx H_\SFS + H_\SFR \ ,
\end{equation}
with the comment that the operator $H_\sint$, while necessary for the equilibration of the two subsystems $\SFS$ and $\SFR$, is \enquote{negligibly small} compared to the operators $H_\SFS$ and $H_\SFR$ \cite[p. 50]{Schwabl2006}; \cref{eq:heuristicSTVR} is known as the \emph{surface-to-volume ratio approximation} \cite[p. 131]{Huang1987}.

Since the involved operators $H_\SFS$, $H_\SFR$, and $H_\sint$ are unbounded, in general, one cannot simply compare their norms to determine whether $H_\sint$ is small compared to $H_\SFS$ and $H_\SFR$. Therefore, in the following, we shall make the approximation \eqref{eq:heuristicSTVR} mathematically precise in a framework that can handle unbounded operators, and we will see in what specific sense $H_\sint$ can be neglected compared to $H_\SFS$ and $H_\SFR$. Our proof will also show that a negligibly small surface-to-volume ratio is indeed crucial for the validity of the approximation.

\subsection{Short-range interactions}

As a first step towards a rigorous version of \eqref{eq:heuristicSTVR}, we need to formulate an additional physical assumption on the system--reservoir interaction Hamiltonian $H_\sint$, which is one of two essential requirements for the validity of the approximation; in less precise terms, this assumption is also present in physical discussions of the surface-to-volume ratio approximation \cite[p. 131]{Huang1987}.

\begin{assumption}[Interaction cutoff]\label{asm:interactionCutoff}
    There exists a number $\delta > 0$, which represents the \emph{effective interaction distance} between particles from $\SFS$ and $\SFR$, such that the operator $H_\sint$ is determined only by those particles from $\Omega_\SFS$ and $\Omega_\SFR$ whose distance to each other is smaller than $\delta$.
\end{assumption}

\begin{remarks}
    \leavevmode
    \begin{enumerate}[env]
        \item \cref{asm:interactionCutoff} is natural in that for systems which are dominated by long-range interactions between the two subsystems $\SFS$ and $\SFR$, one cannot expect to neglect the operator $H_\mathrm{int}$ compared to the other terms of the total Hamiltonian \eqref{eq:totalHamiltonian}. In this regard, it is interesting to observe that the Coulomb interaction, which is the most important interaction for ordinary matter but also characterized by a slow decay at infinity, can be considered to have a small effective interaction distance due to electrostatic screening which is a consequence of Newton's theorem \cite{Lieb1972, LiebSeiringer2010} (see also Refs.~\cite{Hainzl2009I, Hainzl2009II}); this fact is indeed essential for the existence of the thermodynamic limit of Coulomb systems. In applied physics, the effective interaction distance of the Coulomb potential can often be characterized in terms of the Debye length; see Ref.~\cite{Brydges1999} and references therein for the range of validity of different screening models.

        \item For specific applications of the surface-to-volume ratio approximation, one must check whether the model under investigation satisfies \cref{asm:interactionCutoff} by determining the effective interaction distance of the model. As we will show below, the error terms due to the approximation depend on $\delta$, and thus, once the latter is known, one can estimate the correction terms and decide whether the approximation is feasible or not. In other words, while in the present work the parameter $\delta$ is a purely mathematical tool, in general it carries importance for practical applications of our results.
    \end{enumerate}
\end{remarks}

\cref{asm:interactionCutoff} implies that for a particle at position $x \in \Omega_\SFS$, only those particles contained within the ball $B(x, \delta)$ of radius $\delta$ around $x$ will interact with it. Regarding inter-domain interactions between $\Omega_\SFS$ and $\Omega_\SFR$, this means that all the involved particles necessarily must be located in the tubular $\delta$-neighborhood $(\partial \Omega_\SFS)_{\delta}$ (cf. Definition \ref{def:epsilonNeighborhood} in \cref{app:convexGeometry}) of the boundary $\partial \Omega_\SFS$ of the region $\Omega_\SFS$:
\begin{equation*}
    (\partial \Omega_\SFS)_{\delta} = \bigcup_{z \in \partial \Omega_\SFS} B(z, \delta) = \set[\big]{y \in \Omega \, : \, \dist(y, \partial \Omega_\SFS) < \delta} \ .
\end{equation*}
Note that since $\Omega = \Omega_\SFS \cup \Omega_\SFR$ by construction (\cref{asm:systemDivision}), we may split $(\partial \Omega_\SFS)_{\delta}$ into an \enquote{outer} tubular neighborhood $(\partial \Omega_\SFS)_{\delta}^+$ extending into $\Omega_\SFR$, and an \enquote{inner} tubular neighborhood $(\partial \Omega_\SFS)_{\delta}^-$ extending into $\Omega_\SFS$, which are given by
\begin{align*}
    (\partial \Omega_\SFS)_{\delta}^+ &\ce \set[\big]{y \in \Omega_\SFR \, : \, \dist(y, \partial \Omega_\SFS) < \delta} \ , \\
    (\partial \Omega_\SFS)_{\delta}^- &\ce \set[\big]{x \in \Omega_\SFS \, : \, \dist(x, \partial \Omega_\SFS) < \delta} \ .
\end{align*}
It clearly holds that $(\partial \Omega_\SFS)_{\delta} = (\partial \Omega_\SFS)_{\delta}^+ \cup (\partial \Omega_\SFS)_{\delta}^-$; see also \cref{fig:corridor} for an illustration of the different involved sets. To emphasize that this tubular neighborhood has total width $2 \delta$, we shall denote it by the symbol
\begin{equation*}
    \Gamma_{2 \delta} (\Omega_\SFS) \ce (\partial \Omega_\SFS)_{\delta}^+ \cup (\partial \Omega_\SFS)_{\delta}^- = (\partial \Omega_\SFS)_\delta \ .
\end{equation*}
In this terminology, \cref{asm:interactionCutoff} entails that all the particles from $\Omega_\SFS$ and $\Omega_\SFR$ captured by $H_\sint$ lie inside the \emph{interaction corridor} $\Gamma_{2 \delta } (\Omega_\SFS)$. Let $N_\sint \equiv N_\sint (\delta) \in \N$ be the number of these particles. Then we may rewrite the interaction Hamiltonian from \cref{eq:interactionHamiltonians} as
\begin{equation}\label{eq:Hint}
    H_\sint = \sum_{i=1}^{N_\sint} \sum_{j>i}^{N_\sint} V_{ij}  \ .
\end{equation}

\begin{figure}
    \centering
    \scalebox{0.6}{%
        \begin{tikzpicture}[font=\huge]
        % Definitions
        \def\offsetDist{1.25cm}
        % Coordinates
        \coordinate (A) at (0,-2);
        \coordinate (B) at (5,-0.5);
        \coordinate (C) at (7,2.5);
        \coordinate (D) at (5,6);
        \coordinate (E) at (0,6.5);
        \coordinate (F) at (-5,5);
        \coordinate (G) at (-7,1.5);
        \coordinate (H) at (-5,-1.5);
        % Definition of path
        \draw[very thick, save path=\omegaone] plot [smooth cycle, tension=0.75] coordinates {(A) (B) (C) (D) (E) (F) (G) (H)};
        % Corridor around path
        \pgfoffsetpath{\omegaone}{-\offsetDist}
        \pgfsetfillcolor{shadecolor}\pgfusepath{fill}
        \pgfoffsetpath{\omegaone}{-\offsetDist}
        \pgfsetdash{{5pt}{4pt}}{0pt} \pgfsetlinewidth{1pt} \color{black} \pgfusepathqstroke
        \pgfoffsetpath{\omegaone}{\offsetDist}
        \pgfsetfillcolor{white}\pgfusepath{fill}
        \pgfoffsetpath{\omegaone}{\offsetDist}
        \pgfsetdash{{5pt}{4pt}}{0pt} \pgfsetlinewidth{1pt} \color{black} \pgfusepathqstroke
        \pgfsetdash{}{0pt}
        % Omega_S
        \draw[line width=2pt] plot [smooth cycle, tension=0.75] coordinates {(A) (B) (C) (D) (E) (F) (G) (H)};
        % Labels for sets
        \node at (-7.5,7) {$\Omega_\SFR$};
        \node at (0,2.25) {$\Omega_\SFS$};
        \node (boundary) [below, xshift=3.5cm, yshift=-1cm] at (B) {$\partial \Omega_\SFS$};
        \node (innerCorr) [left, xshift=-1.75cm, yshift=0.5cm] at (C) {$(\partial \Omega_\SFS)_{\delta}^-$};
        \node (outerCorr) [above right, xshift=1.5cm, yshift=1cm] at (D) {$(\partial \Omega_\SFS)_{\delta}^+$};
        \node (R0) [below left, xshift=-1.75cm, yshift=-1cm] at (G) {$2 \delta$};
    	% Arrows
    	\draw [->, very thick] (boundary) to [out=180, in=300] (B);
        \draw [->, very thick] (innerCorr) to [out=90, in=210] ([yshift=-0.75cm]D);
        \draw [->, very thick] (outerCorr) to [out=180, in=60] ([yshift=0.75cm]D);
        \draw [->, very thick] (R0) to [out=0, in=270] ([xshift=-0.5cm,yshift=-0.25cm]G);
        \draw [<->, line width=2pt, dotted] (-8.25,1.5) -- (-5.75,1.5);
    \end{tikzpicture}
    }
    \caption{Illustration of the set $\Omega_\SFS$, circumscribed by the solid line, and the interaction corridor $(\partial \Omega_\SFS)_{\delta}^+ \cup (\partial \Omega_\SFS)_{\delta}^-$ of width $2 \delta$ around its boundary, highlighted as the gray region delimited by the dashed lines.}
    \label{fig:corridor}
\end{figure}

\subsection{Volume of the interaction corridor}

The main technical ingredient for the surface-to-volume ratio approximation is the following lemma, which provides an estimate for the volume of the interaction corridor $\Gamma_{2 \delta} (\Omega_\SFS)$ in terms of the surface area of the boundary $\partial \Omega_\SFS$ and the interaction distance $\delta$; its proof relies on established results in convex geometry that are summarized in \cref{app:convexGeometry}.

\begin{lemma}\label{lem:volumeEstimate}
    Assume that $K \subset \R^d$ is a bounded convex domain. Then the volume of the corridor $\Gamma_{2 \delta} (K)$ is bounded above and below by
    \begin{equation}\label{eq:volumeEstimate}
        \delta \cdot \area (\partial K) + \MO (\delta^2) \le \vol \bigl( \Gamma_{2 \delta} (K) \bigr) \le 2 \delta \cdot \area (\partial K) + \MO (\delta^2) \quad \text{for} \quad \delta \to 0 \ .
    \end{equation}
\end{lemma}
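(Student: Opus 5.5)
We split the corridor into its outer and inner parts, $\Gamma_{2\delta}(K) = (\partial K)^+_\delta \cup (\partial K)^-_\delta$, and estimate each volume separately; the two pieces are disjoint up to a null set. Throughout we work with $K$ itself, taking the outer part as $\{y \in \R^d \setminus \oln{K} : \dist(y, K) < \delta\}$. This presupposes that $\delta$ is small enough for the outer $\delta$-neighbourhood of $\Omega_\SFS$ to lie inside $\Omega$, which is consistent with Assumption \ref{enu:A1.1}.

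For the outer part we use the Steiner formula from convex geometry (\cref{app:convexGeometry}). For a convex body $K$, the outer parallel body $K_\delta = K + \delta \oln{B^d}$ satisfies
\begin{equation*}
    \vol(K_\delta) = \sum_{j=0}^{d} \binom{d}{j} W_j(K)\, \delta^j ,
\end{equation*}
with quermassintegrals $W_j$, where $W_0 = \vol(K)$ and $d\,W_1 = \area(\partial K)$. Since the outer corridor is $K_\delta \setminus \oln{K}$ and $\partial K$ is Lebesgue-null for convex $K$, its volume is $\vol(K_\delta) - \vol(K) = \delta \cdot \area(\partial K) + \MO(\delta^2)$. The constant in $\MO(\delta^2)$ depends only on the bounded quermassintegrals of $K$. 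This step is routine once the Steiner formula is available.

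For the inner part $(\partial K)^-_\delta = K \setminus K_{-\delta}$, where $K_{-\delta} = \{x \in K : \dist(x, \partial K) \ge \delta\}$ is the inner parallel body, it suffices to prove
\begin{equation*}
    0 \le \vol\bigl((\partial K)^-_\delta\bigr) \le \delta \cdot \area(\partial K).
\end{equation*}
Adding this to the outer part gives both sides of \eqref{eq:volumeEstimate}: the lower bound comes from dropping the inner part, and the upper bound from adding the two estimates. One could even aim for the inner volume to equal $\delta \cdot \area(\partial K) + \MO(\delta^2)$, but that is not needed.

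To prove the upper bound for the inner part, consider the function $f(x) = \dist(x, \partial K)$ on $K$. It is $1$-Lipschitz, it is concave because $K$ is convex, and it satisfies $\abs{\nabla f} = 1$ almost everywhere. The coarea formula then gives
\begin{equation*}
    \vol\bigl(\{0 < f < \delta\}\bigr) = \int_0^\delta \area\bigl(\{f = t\}\bigr)\, \diff t .
\end{equation*}
By concavity of $f$, each superlevel set $\{f \ge t\} = K_{-t}$ is convex, and its boundary is $\{f = t\}$ (provided $K_{-t}$ has interior). Since $K_{-t} \subset K$ are nested convex sets, monotonicity of surface area under inclusion gives $\area(\partial K_{-t}) \le \area(\partial K)$. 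Integrating over $t \in (0, \delta)$ yields the bound $\delta \cdot \area(\partial K)$.

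The main obstacle is this inner estimate. The set-theoretic identification $\{f = t\} = \partial K_{-t}$ and the degenerate case where $K_{-t}$ has empty interior (possible for thin $K$) need care; in the degenerate case the level set is still contained in a lower-dimensional convex set. Both the coarea formula for Lipschitz functions and the monotonicity of surface area are among the results summarized in \cref{app:convexGeometry}, and I would invoke them there. As an alternative route for the inner part, one could use the projection $x \mapsto$ nearest boundary point, but the coarea argument is cleaner.
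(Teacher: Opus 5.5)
Your proposal is correct and matches the paper's proof: you use the same outer/inner splitting, the Minkowski--Steiner formula for $K_\delta \setminus K$, and the inner bound $\vol\bigl((\partial K)^-_\delta\bigr) \le \delta \cdot \area(\partial K)$, and you get the lower bound by discarding the inner piece. The only difference is that the paper cites the inner bound as \cref{lem:boundThinning} (from Farrington), whereas you reprove it via the coarea formula for the concave distance function and monotonicity of surface area for nested convex sets; the monotonicity does follow from the mixed-volume monotonicity in \cref{rem:MinkowskiSteiner}, but the coarea formula is not among the appendix's results, and your degenerate case (the level $t$ equal to the inradius) is a single value and hence irrelevant to the $t$-integral.
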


\begin{proof}
    First, note that using Definition \ref{def:epsilonNeighborhood} of the tubular $\delta$-neighborhood $K_\delta$ of the set $K$, we may express $(\partial K)_\delta^+$ equivalently as $(\partial K)_\delta^+ = K_\delta \setminus K$. Since the outer and inner tubular neighborhoods of $K$ are disjoint, it follows from additivity of the Lebesgue measure that
    \begin{equation}\label{eq:volumeSplit}
        \vol \bigl( \Gamma_{2 \delta} (K) \bigr) = \vol \bigl( (\partial K)_\delta^+ \cup (\partial K)_\delta^- \bigr) = \vol \bigl( K_\delta \setminus K \bigr) + \vol \bigl( (\partial K)_\delta^- \bigr) \ .
    \end{equation}
    We shall consider both of the terms separately. For the first one, observe that since $K \subset K_\delta$ and $\vol (K) < + \infty$, we have $\vol (K_\delta \setminus K) = \vol (K_\delta) - \vol (K)$. Thus, we can apply the Minkowski-Steiner formula (\cref{thm:MinkowskiSteiner} in \cref{app:convexGeometry}) to obtain
    \begin{equation}\label{eq:auxiliaryVolumeEstimate}
        \vol \bigl( K_\delta \setminus K \bigr) = \area (\partial K) \, \delta + \MO(\delta^2) \quad \text{for} \quad \delta \to 0 \ .
    \end{equation}
    For the second term, note that by definition, one has $(\partial K)_\delta^- \subset \set{x \in K \, : \, \dist(x, \partial K) \le \delta}$. Therefore, monotonicity of the Lebesgue measure and \cref{lem:boundThinning} imply
    \begin{equation*}
        \vol \bigl( (\partial K)_\delta^- \bigr) \le \vol \bigl( \set{x \in K \, : \, \dist(x, \partial K) \le \delta} \bigr) \le \area (\partial K) \, \delta \ .
    \end{equation*}
    Combining the last three equations proves the upper bound. For the lower bound, it suffices to note that $\vol (\Gamma_{2 \delta} (K)) \ge \vol (K_\delta \setminus K)$ by \cref{eq:volumeSplit}, hence the estimate follows from \eqref{eq:auxiliaryVolumeEstimate}.
\end{proof}

\begin{remark}\label{rem:refinedVolumeEstimate}
    Let $K \subset \R^d$ satisfy the assumptions of \cref{lem:volumeEstimate} and suppose, in addition, that $B^d \subset K$. Then the quermassintegrals $W_j (K)$, $j \in \set{2, \dotsc, d-1}$, from \cref{eq:quermassintegral} entering the Minkowski-Steiner formula can be estimated using the fact that the mixed volume is non-decreasing in each argument (cf. \cref{rem:MinkowskiSteiner} in \cref{app:convexGeometry} and Ref.~\cite[Thm. 6.9]{Gruber2007}): for all $j \in \set{2, \dotsc, d-1}$ we have the upper bound
    \begin{equation*}
        W_j (K) = V \bigl( \, \underbrace{K, \dotsc, K}_{d - j}, \, \underbrace{B^d, \dotsc, B^d}_{j} \, \bigr) \le V \bigl( \, \underbrace{K, \dotsc, K}_{d - 1}, B^d \, \bigr) = W_1 (K)
    \end{equation*}
    as well as the lower bound
    \begin{equation*}
        \omega_d = V \bigl( B^d, \dotsc, B^d \bigr) \le V \bigl( \, \underbrace{K, \dotsc, K}_{d - j}, \, \underbrace{B^d, \dotsc, B^d}_{j} \, \bigr) = W_j (K) \ .
    \end{equation*}
    Since $W_1 (K) = \frac{1}{d} \, \area (\partial K)$, it follows that instead of using \cref{eq:auxiliaryVolumeEstimate} in the above proof, we can apply the previous two estimates for $W_j (K)$ in the Minkowski-Steiner formula \eqref{eq:MinkowskiSteiner} to obtain the upper bound
    \begin{align*}
        \vol \bigl( K_\delta \setminus K \bigr) &\le \area (\partial K) \, \delta + \sum_{j=2}^{d} \frac{1}{d} \binom{d}{j} \, \area (\partial K) \, \delta^j \\
        &= \area (\partial K) \, \delta + \MO \bigl( \area(\partial K) \, \delta^2 \bigr) \quad \text{for} \quad \delta \to 0 \ ,
    \end{align*}
    where we used $\omega_d \le W_1(K)$ for the last term, and, similarly, the lower bound
    \begin{equation*}
        \vol \bigl( K_\delta \setminus K \bigr) \ge \area (\partial K) \, \delta + \MO \bigl( \omega_d \, \delta^2 \bigr) \quad \text{for} \quad \delta \to 0 \ .
    \end{equation*}
    These two estimates for $\vol \bigl( K_\delta \setminus K \bigr)$ translate directly into a correspondingly modified version of the two-sided inequality \eqref{eq:volumeEstimate}: for $\delta \to 0$ we have
    \begin{equation}\label{eq:refinedVolumeEstimate}
        \delta \cdot \area (\partial K) + \MO \bigl( \omega_d \, \delta^2 \bigr) \le \vol \bigl( \Gamma_{2 \delta} (K) \bigr) \le 2 \delta \cdot \area (\partial K) + \MO \bigl( \area(\partial K) \, \delta^2 \bigr) \ .
    \end{equation}
\end{remark}

\subsection{Formulation of the approximation}

We can now formulate and prove the main result of this section, namely, the validity of the surface-to-volume ratio approximation under certain assumptions. As the name suggests, the quantity
\begin{equation*}
    \kappa_\SFS \ce \frac{\area (\partial \Omega_\SFS)}{V_\SFS}
\end{equation*}
will play an important role. Beyond the basic modeling Assumptions \ref{asm:systemDivision} and \ref{asm:interactionTwoBody}, and the effective interaction distance defined in \ref{asm:interactionCutoff}, the following proposition requires an additional physical hypothesis concerning $\kappa_\SFS$ and a compatibility condition on the density operator.

\begin{proposition}[Surface-to-volume ratio approximation]\label{pro:STVR}
    Suppose that Assumptions \ref{asm:systemDivision}, \ref{asm:interactionTwoBody}, and \ref{asm:interactionCutoff} hold true---recall that this entails that $\Omega_\SFS$ is bounded and convex---and assume, in addition, that $\Omega_\SFS$ is large enough such that
    \begin{enumerate}[labelindent=\parindent, leftmargin=*, label=\normalfont\textbf{A4.}, ref=A4]
        \item \label{enu:A4} $\kappa_\SFS \cdot \delta \ll 1$ and $B^d \subset \Omega_\SFS \ .$
    \end{enumerate}   
    Let $\rho \in \DM_\mathrm{adm}^{(1)} (\MH_\SFT)$ be an admissible density operator of the total system {\normalfont[}cf. \cref{eq:admissibleDM}{\normalfont]}. Assume additionally that $\rho$ satisfies
    \begin{enumerate}[labelindent=\parindent, leftmargin=*, label=\normalfont\textbf{A5.}, ref=A5]
        \item \label{enu:A5} $\displaystyle \rho \in \bigcap_{1 \le i < j \le N_\sint} \DM(\MH_\SFT, V_{ij})$ and $\displaystyle \rho_\SFS \in \bigcap_{1 \le i < j \le N_\SFS} \DM(\MH_\SFS, W_{ij}) \ ,$
    \end{enumerate}
    where $V_{ij}$ and $W_{ij}$ are the system--reservoir and system--system interaction operators introduced in \cref{asm:interactionTwoBody}, and $\rho_\SFS = \tr_\SFR (\rho)$ is the partial trace of $\rho$ with respect to the reservoir. Under these conditions,
    \begin{equation}\label{eq:STVR}
        \EV_\rho [H_\SFT] = \EV_\rho [H_\SFS \otimes \id_\SFR + \id_\SFS \otimes H_\SFR] + \MO \bigl( \kappa_\SFS^2 \cdot \delta^2 \bigr) \quad \text{for} \quad \delta \to 0 \ .
    \end{equation}
\end{proposition}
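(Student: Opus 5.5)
Since $\rho \in \DM_\mathrm{adm}^{(1)}(\MH_\SFT) \subset \DM(\MH_\SFT, H_\SFT)$ by \eqref{eq:admissibleDMInclusion}, I will start from the decomposition $H_\SFT = H_\SFS \otimes \id_\SFR + \id_\SFS \otimes H_\SFR + H_\sint$. Each term has a well-defined expectation in $\rho$, and the expectation is linear on a common representation of $\rho$ (the argument of \cref{lem:intersectionCompatibleDM}). This gives
\begin{equation*}
    \EV_\rho [H_\SFT] = \EV_\rho [H_\SFS \otimes \id_\SFR + \id_\SFS \otimes H_\SFR] + \EV_\rho [H_\sint] \ .
\end{equation*}
The whole task is therefore to show that $\EV_\rho[H_\sint]$, measured relative to the bulk energy of $\SFS$, is of order $\kappa_\SFS^2 \delta^2$.

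The main obstacle is deciding in what sense a quantity of dimension energy is $\MO(\kappa_\SFS^2\delta^2)$. I would read \eqref{eq:STVR} as a statement about the relative size of the interaction energy compared with the bulk energy of the open system, normalized per particle. Under equilibrium (A1.2) and isotropy, I will model the particle density as uniform, $N_\SFS \approx n V_\SFS$. The number of particles in the corridor is then $N_\sint \approx n \vol(\Gamma_{2\delta}(\Omega_\SFS))$.

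By \cref{asm:interactionCutoff} and \eqref{eq:Hint}, $H_\sint$ is a sum over pairs of corridor particles. Using (A5), I will split the expectation termwise as
\begin{equation*}
    \EV_\rho[H_\sint] = \sum_{i<j} \EV_\rho[V_{ij}] \ .
\end{equation*}
I will bound each pair expectation by a uniform constant $v$ and count the pairs by $\binom{N_\sint}{2} \le N_\sint^2/2$. Similarly, (A5) together with \cref{lem:partialTrace} gives
\begin{equation*}
    \EV_\rho[H_\SFS\otimes\id_\SFR] = \EV_{\rho_\SFS}[H_\SFS] = \EV_{\rho_\SFS}[H_{\SFS,0}] + \sum_{i<k} \EV_{\rho_\SFS}[W_{ik}] \ .
\end{equation*}
This expresses the bulk energy through $N_\SFS$ and the pair count $\binom{N_\SFS}{2}$.

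Taking the ratio, the relevant quantity is
\begin{equation*}
    \left(\frac{N_\sint}{N_\SFS}\right)^2 \approx \left(\frac{\vol(\Gamma_{2\delta}(\Omega_\SFS))}{V_\SFS}\right)^2 \ .
\end{equation*}
At this point I will apply the refined estimate \eqref{eq:refinedVolumeEstimate} of \cref{rem:refinedVolumeEstimate}, which is valid because $B^d\subset\Omega_\SFS$ by (A4). It gives
\begin{equation*}
    \vol(\Gamma_{2\delta}) \le 2\delta\,\area(\partial\Omega_\SFS)\bigl(1+\MO(\delta)\bigr) \ ,
\end{equation*}
so the ratio $\vol(\Gamma_{2\delta})/V_\SFS$ is at most $2\kappa_\SFS\delta(1+\MO(\delta))$. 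Squaring yields $\MO(\kappa_\SFS^2\delta^2)$, and the condition $\kappa_\SFS\delta\ll1$ ensures that this is a genuinely small correction. I expect the delicate part to be justifying the uniform per-pair bound $v$ and the proportionality between particle number and volume; I would state both explicitly as consequences of the equilibrium and short-range hypotheses rather than derive them.
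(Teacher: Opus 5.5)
Your proposal is correct at the paper's level of rigor and follows essentially the same route: the linear splitting of $\EV_\rho[H_\SFT]$, termwise pair bounds giving $N_\sint^2$ and $N_\SFS^2$ scalings, a uniform-density assumption that turns $(N_\sint/N_\SFS)^2$ into $\bigl(\vol(\Gamma_{2\delta}(\Omega_\SFS))/V_\SFS\bigr)^2$, the refined estimate \eqref{eq:refinedVolumeEstimate}, and a reading of the $\MO$-term as relative to the bulk energy of $\SFS$. The only real difference is that the paper first shifts $H_\SFS$ by $\EV_{\rho_\SFS}[H_{\SFS,0}]$, so that the bulk energy is exactly the pair sum, and it uses two-sided min/max bounds; you instead keep $H_{\SFS,0}\ge 0$ in the denominator, which is harmless for the upper bound you need provided you also postulate a positive lower bound on the per-pair energies $\EV_{\rho_\SFS}[W_{ik}]$.
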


Before we give the proof of this proposition, we shall comment on the two new assumptions \ref{enu:A4} and \ref{enu:A5} as well as on the interpretation of the result \eqref{eq:STVR}.

\begin{remarks}
    \leavevmode
    \begin{enumerate}[env]
        \item The first condition $\kappa_\SFS \cdot \delta \ll 1$ in Assumption \ref{enu:A4} is a geometric formalization of the intuitive physical idea that the interaction between $\SFS$ and $\SFR$ represents negligible surface effects; this becomes particularly clear when $\Omega_\SFS$ is a ball (see \cref{exa:concreteSTVR} below). The second demand $B^d \subset \Omega_\SFS$ is not strictly necessary for the proof, but it allows us to apply the more controlled estimate \eqref{eq:refinedVolumeEstimate} of \cref{rem:refinedVolumeEstimate}---note that using \cref{eq:volumeEstimate} from \cref{lem:volumeEstimate} instead, one would obtain the same qualitative error $\MO (\kappa_\SFS^2 \cdot \delta^2)$. Assumption \ref{enu:A5} entails the physically very natural properties of having well-defined expected interaction energies in the states $\rho$ and $\rho_\SFS$. This ensures that the computations involving $\EV_\rho [H_\sint]$ and $\EV_{\rho_\SFS} [H_{\SFS, \sint}]$ in the following proof are mathematically justified.

        \item The key assertion of \cref{pro:STVR} is that we can \emph{approximate} the difficult-to-handle expression $\mathbb{E}_\rho [H_\mathsf{T}]$ by the more accessible expression $\mathbb{E}_\rho [H_\mathsf{S} \otimes \id_\mathsf{R} + \id_\mathsf{S} \otimes H_\mathsf{R}]$ \emph{up to an error} of second and higher orders in the quantity $\kappa_\SFS \cdot \delta$, which is a purely physical characteristic of our model. In particular, for systems satisfying Assumption \ref{enu:A4}, the error terms are negligibly small, and hence \cref{eq:STVR} justifies neglecting them:
        \begin{equation*}
            \EV_\rho [H_\SFT] \approx \EV_\rho [H_\SFS \otimes \id_\SFR + \id_\SFS \otimes H_\SFR] \ .
        \end{equation*}
        When using this approximation in the sequel, one should keep in mind that we have control over the error, and that we could, in principle, carry the error terms through, but \emph{within our physical model specified by \ref{enu:A4}}, that is, in a \emph{first-order approximation}, they are not relevant. This already constitutes an important step forward since in physics the approximation is used without any comment about the error; in this regard, we mention that the heuristic approximation \eqref{eq:heuristicSTVR} is applied when the energy of $\SFT$ is compared with that of $\SFS$ \cite[pp. 50 f.]{Schwabl2006}, and hence our result can indeed be considered a useful formalization of the former.
    \end{enumerate}
\end{remarks}

Let us now prove the previous proposition.

\begin{proof}[Proof of \cref{pro:STVR}]
    Due to the assumption $\rho \in \DM_\mathrm{adm}^{(1)} (\MH_\SFT)$, the quantities $\EV_\rho [H_\SFS \otimes \id_\SFR]$, $\EV_\rho [\id_\SFS\otimes H_\SFR]$, and $\EV_\rho [H_\sint]$ are all well-defined and finite. Moreover, since $\DM_\mathrm{adm}^{(1)} (\MH_\SFT) \subset \DM(\MH_\SFT, H_\SFT)$ according to \cref{eq:admissibleDMInclusion}, also $\EV_\rho [H_\SFT]$ is finite, and we have
    \begin{equation}\label{eq:expectedTotalEnergy}
        \EV_\rho [H_\SFT] = \EV_\rho [H_\SFS \otimes \id_\SFR] + \EV_\rho [\id_\SFS\otimes H_\SFR] + \EV_\rho [H_\sint] \ .
    \end{equation}
    Using the partial trace formula \eqref{eq:partialTrace} from \cref{lem:partialTrace}, we can write
    \begin{equation*}
        \EV_\rho [H_\SFS \otimes \id_\SFR] = \tr_{\MH_\SFS \otimes \MH_\SFR} \bigl((H_\SFS \otimes \id_\SFR) \rho\bigr) = \tr_{\MH_\SFS} \bigl(H_S \tr_\SFR (\rho)\bigr) = \EV_{\rho_\SFS} [H_\SFS] \ .
    \end{equation*}
    As specified in \cref{eq:HamiltonianS}, the open system's Hamiltonian is of the form $H_\SFS = H_{\SFS, 0} + H_{\SFS, \sint}$ with $H_{\SFS, 0} \ge 0$. By Assumption \ref{enu:A5} and \cref{lem:intersectionCompatibleDM}, we have $\rho_\SFS \in \DM(\MH_\SFS, H_{\SFS, \sint})$ and hence also $\rho_\SFS \in \DM(\MH_\SFS, H_{\SFS, 0})$. Let $E_{\SFS, 0} \ce \EV_{\rho_\SFS} [H_{\SFS, 0}]$. For the following argument, it proves useful to consider the shifted Hamiltonian
    \begin{equation}\label{eq:energyShift}
        \wt{H_\SFS} \ce H_\SFS - E_{\SFS, 0} \id_{\MH_\SFS} \quad \text{with} \quad \EV_{\rho_\SFS} [\wt{H_\SFS}] = \EV_{\rho_\SFS} [H_{\SFS, \sint}] \ .
    \end{equation}
    Since in the end, one can always reverse the shift in order to return to the original energy scale, we will simply assume that $H_\SFS$ has been shifted as to guarantee the previous identity without using a new symbol. The quantity $\EV_{\rho_\SFS} [H_{\SFS, \sint}]$ can then be bounded from below as follows (using implicitly \cref{lem:intersectionCompatibleDM}):
    \begin{align*}
        \EV_{\rho_\SFS} [H_{\SFS, \sint}] &= \tr_{\MH_\SFS} \Biggl(\Biggl[\,\sum_{i=1}^{N_\SFS} \sum_{j>i}^{N_\SFS} W_{ij}\Biggr] \rho_\SFS\Biggr) = \sum_{i=1}^{N_\SFS} \sum_{j>i}^{N_\SFS} \tr_{\MH_\SFS} (W_{ij} \rho_\SFS) \\
        &\ge \sum_{i=1}^{N_\SFS} \sum_{j>i}^{N_\SFS} \underbrace{\biggl[\min_{1 \le k < \ell \le N_\SFS} \tr_{\MH_\SFS} (W_{k \ell} \rho_\SFS)\biggr]}_{\ec E_{\SFS, \sint}^<} = N_\SFS (N_\SFS - 1) E_{\SFS, \sint}^< \approx N_\SFS^2 E_{\SFS, \sint}^< \ ,
    \end{align*}
    where the last approximation is valid because of $N_\SFS \gg 1$ by Assumption \ref{enu:A1.1}. In an analogous way, we can estimate $\EV_{\rho_\SFS} [H_{\SFS, \sint}]$ from above:
    \begin{equation*}
        \EV_{\rho_\SFS} [H_{\SFS, \sint}] \le \sum_{i=1}^{N_\SFS} \sum_{j>i}^{N_\SFS} \underbrace{\biggl[\max_{1 \le k < \ell \le N_\SFS} \tr_{\MH_\SFS} (W_{k \ell} \rho_\SFS)\biggr]}_{\ec E_{\SFS, \sint}^>} = N_\SFS (N_\SFS - 1) E_{\SFS, \sint}^> \approx N_\SFS^2 E_{\SFS, \sint}^> \ .
    \end{equation*}
    Similarly, we can estimate $\EV_\rho [H_\sint]$ from above and below: using \cref{eq:Hint}, Assumption \ref{enu:A5}, and \cref{lem:intersectionCompatibleDM}, it follows that
    \begin{align*}
        \EV_\rho [H_\sint] &= \tr_{\MH_\SFT} \Biggl( \Biggl[ \sum_{i=1}^{N_\sint} \sum_{j>i}^{N_\sint} V_{ij} \Biggr] \rho \Biggr) = \sum_{i=1}^{N_\sint} \sum_{j>i}^{N_\sint} \tr_{\MH_\SFT} (V_{ij} \rho) \\
        &\le \sum_{i=1}^{N_\sint} \sum_{j>i}^{N_\sint} \underbrace{\left[ \, \max_{1 \le k < \ell \le N_\sint} \tr_{\MH_\SFT} (V_{k \ell} \rho) \right]}_{\ec E_\sint^>} = N_\sint (N_\sint - 1) E_\sint^> \approx N_\sint^2 E_\sint^>
    \end{align*}
    as well as
    \begin{equation*}
        \EV_\rho [H_\sint] \ge \sum_{i=1}^{N_\sint} \sum_{j>i}^{N_\sint} \underbrace{\left[ \, \min_{1 \le k < \ell \le N_\sint} \tr_{\MH_\SFT} (V_{k \ell} \rho) \right]}_{\ec E_\sint^<} = N_\sint (N_\sint - 1) E_\sint^< \approx N_\sint^2 E_\sint^< \ .
    \end{equation*}
    Observe that by Assumption \ref{enu:A5}, the quantities $E_{\SFS, \sint}^<, E_{\SFS, \sint}^>, E_\sint^<, E_\sint^>$ are all well-defined and finite. With the previous four bounds (as well as our assumption below \cref{eq:energyShift}, which entails that $\EV_\rho [H_\SFS \otimes \id_\SFR] = \EV_{\rho_\SFS} [H_{\SFS, \sint}]$), we arrive at
    \begin{equation*}
        \frac{N_\sint^2}{N_\SFS^2} \frac{E_\sint^<}{E_{\SFS, \sint}^>} \le \frac{\EV_\rho [H_\sint]}{\EV_\rho [H_\SFS \otimes \id_\SFR]} \le \frac{N_\sint^2}{N_\SFS^2} \frac{E_\sint^>}{E_{\SFS, \sint}^<} \ .
    \end{equation*}
    
    For small enough $\delta$, we can assume that the particle number density $n_\sint$ inside the interaction corridor $\Gamma_{2 \delta} (\Omega_\SFS)$ is constant, i.e., $n_\sint = N_\sint / \vol \bigl(\Gamma_{2 \delta} (\Omega_\SFS)\bigr)$. This assumption is meaningful because $\SFS$ and $\SFR$ are in thermodynamic equilibrium (cf. Assumption \ref{enu:A1.2}), implying that the two parts $(\partial \Omega_\SFS)_{\delta}^- \subset \Omega_\SFS$ and $(\partial \Omega_\SFS)_{\delta}^+ \subset \Omega_\SFR$ of the interaction corridor must have the same particle number density. Therefore, we can recast the previous equation as
    \begin{equation*}
        \underbrace{\frac{n_\sint^2}{n_\SFS^2} \frac{E_\sint^<}{E_{\SFS, \mathrm{int}}^>}}_{\ec C_1 (n_\SFS)} \left( \frac{\vol \bigl(\Gamma_{2 \delta} (\Omega_\SFS)\bigr)}{V_\SFS} \right)^2 \le \frac{\EV_\rho [H_\sint]}{\EV_\rho [H_\SFS \otimes \id_\SFR]} \le \left( \frac{\vol \bigl(\Gamma_{2 \delta} (\Omega_\SFS)\bigr)}{V_\SFS} \right)^2 \underbrace{\frac{n_\sint^2}{n_\SFS^2} \frac{E_\sint^>}{E_{\SFS, \mathrm{int}}^<}}_{\ec C_2 (n_\SFS)} \ ,
    \end{equation*}
    where $n_\SFS = N_\SFS / V_\SFS$ is the uniform particle number density of the open system $\SFS$. The constants $C_1 (n_\SFS), C_2 (n_\SFS) \in \R$ depend on the physics of the open system $\SFS$, in particular, on its particle number density which we have highlighted as a functional dependency for later use, as well as on the nature of the interaction between $\SFS$ and the reservoir $\SFR$. From the refined volume estimate \eqref{eq:refinedVolumeEstimate} of \cref{rem:refinedVolumeEstimate},
    \begin{equation}\label{eq:ratioEstimate}
        \delta \cdot \frac{\area (\partial \Omega_\SFS)}{V_\SFS} + \MO \left( \frac{\omega_d}{V_\SFS} \cdot \delta^2 \right) \le \frac{\vol \bigl( \Gamma_{2 \delta} (\Omega_\SFS) \bigr)}{V_\SFS} \le 2 \delta \cdot \frac{\area (\partial \Omega_\SFS)}{V_\SFS} + \MO \left( \frac{\area (\partial \Omega_\SFS)}{V_\SFS} \cdot \delta^2 \right) \ ,
    \end{equation}
    which is applicable since $B^d \subset \Omega_\SFS$ according to the second demand of \ref{enu:A4}, we obtain the following bounds:
    \begin{equation}\label{eq:energyEstimate}
        C_1 (n_\SFS) \left[ (\kappa_\SFS \cdot \delta)^2 + \MO \biggl( \frac{\omega_d^2}{V_\SFS^2} \cdot \delta^4 \biggr) \right] \le \frac{\EV_\rho [H_\sint]}{\EV_\rho [H_\SFS \otimes \id_\SFR]} \le C_2 (n_\SFS) \left[ 4 \, (\kappa_\SFS \cdot \delta)^2 + \MO \bigl(\kappa_\SFS^2 \cdot \delta^4\bigr) \right] \ .
    \end{equation}
    Notice that our Assumption \ref{enu:A4} implies that already the first terms $(\kappa_\SFS \cdot \delta)^2$ on both sides will be very small; moreover, $\omega_d^2 / V_\SFS^2 \ll 1$ because $V_\SFS \gg 1$ by \ref{enu:A1.1}. Therefore, the previous two-sided estimates imply that
    \begin{equation*}
        \frac{\EV_\rho [H_\sint]}{\EV_\rho [H_\SFS \otimes \id_\SFR]} = \MO \bigl( \kappa_\SFS^2 \cdot \delta^2 \bigr) \quad \text{for} \quad \delta \to 0 \ .
    \end{equation*}
    We used the fact that constants can be neglected from big O notation, and we only kept the system constant $\kappa_\SFS$ as it is explicitly featured in \ref{enu:A4}. Returning now to the formula \eqref{eq:expectedTotalEnergy} for the expected energy of the total system, it follows that
    \begin{align*}
        \EV_\rho [H_\SFT] &= \EV_\rho [H_\SFS \otimes \id_\SFR] \left( 1 + \frac{\EV_\rho [H_\sint]}{\EV_\rho [H_\SFS \otimes \id_\SFR]} \right) + \EV_\rho [\id_\SFS \otimes H_\SFR] \\[2pt]
        &= \EV_\rho [H_\SFS \otimes \id_\SFR] \left( 1 + \MO \bigl( \kappa_\SFS^2 \cdot \delta^2 \bigr) \right) + \EV_\rho [\id_\SFS \otimes H_\SFR] \\[8pt]
        &= \EV_\rho [H_\SFS \otimes \id_\SFR + \id_\SFS\otimes H_\SFR] + \MO \bigl( \kappa_\SFS^2 \cdot \delta^2 \bigr) \ . \tag*{\qedhere}
    \end{align*}
\end{proof}

In passing, we mention that we had already linked the surface-to-volume ratio approximation to the so-called two-sided Bogoliubov inequality in the past, which provides upper and lower bounds for the free energy cost of separating a large system into independent subsystems \cite{Reible2022, Reible2025apx, Reible2023}; in this context, \cref{pro:STVR} provides a new geometric instead of the previous thermodynamic quantification of surface effects.

To conclude this section, we will consider two special cases ($d = 3$ and $\Omega_\SFS = B^d$) in which the error terms of the approximation \eqref{eq:STVR} can be determined more explicitly.

\begin{examples}\label{exa:concreteSTVR}
    \leavevmode
    \begin{enumerate}[env]
        \item Suppose that $d = 3$. In this case, the Minkowski-Steiner formula \eqref{eq:MinkowskiSteiner} for the volume of $(\partial \Omega_\SFS)_\delta^+ = (\Omega_\SFS)_\delta \setminus \Omega_\SFS$ has three terms:
        \begin{equation*}
            \vol \bigl( (\partial \Omega_\SFS)_\delta^+ \bigr) = \area (\partial \Omega_\SFS) \, \delta + 3 \, W_2 (\Omega_\SFS) \, \delta^2 + \omega_3 \, \delta^3 \ .
        \end{equation*}
        As shown in \cref{exa:MinkowskiSteiner3D}, we have $W_2 (\Omega_\SFS) \le \omega_3 \, \diam(\Omega_\SFS)$ because $0 \in \Omega_\SFS$ by the assumption $B^d \subset \Omega_\SFS$. Since also $\omega_3 \le W_2 (\Omega_\SFS)$ as argued in \cref{rem:refinedVolumeEstimate}, it follows that the two estimates for $\Gamma_{2 \delta} (\Omega_\SFS)$ from \cref{eq:refinedVolumeEstimate} become
        \begin{equation*}
            \delta \cdot \area (\partial \Omega_\SFS) + 3 \omega_3 \, \delta^2 + \omega_3 \, \delta^3 \le \vol \bigl( \Gamma_{2 \delta} (\Omega_\SFS) \bigr) \le 2 \delta \cdot \area (\partial \Omega_\SFS) + 3 \omega_3 \, \diam(\Omega_\SFS) \, \delta^2 + \omega_3 \, \delta^3 \ .
        \end{equation*}
        Using this inequality instead of \cref{eq:ratioEstimate} in the proof of \cref{pro:STVR}, the upper bound on the ratio of energies in \cref{eq:energyEstimate} turns into
        \begin{equation*}
             \frac{\EV_\rho [H_\sint]}{\EV_\rho [H_\SFS \otimes \id_\SFR]} \le C_2 (n_\SFS) \left[ 2 \kappa_\SFS \cdot \delta + \frac{3 \omega_3 \, \diam(\Omega_\SFS) \, \delta^2}{V_\SFS} + \frac{\omega_3 \, \delta^3}{V_\SFS} \right]^2 \ ,
        \end{equation*}
        while the lower bound remains structurally the same, involving only the ratio $\omega_3 / V_\SFS \ll 1$ times powers of $\delta$. In a concrete application where $\Omega_\SFS$ and $\delta$ are specified explicitly, the above estimate provides a computable error bound for the approximation in \cref{eq:STVR}. Generally speaking, however, we are still able to make the following qualitative observation: aside from the second-order correction $(\kappa_\SFS \cdot \delta)^2$ that is not modified compared to \cref{eq:energyEstimate}, the next-order term is now given by
        \begin{equation*}
            12 \omega_3 \cdot (\kappa_\SFS \cdot \delta) \cdot \frac{\diam(\Omega_\SFS) \, \delta^2}{\vol(\Omega_\SFS)} \ .
        \end{equation*}
        According to our \emph{a priori} assumption $\kappa_\SFS \cdot \delta \ll 1$, whether or not the above expression will be small is controlled by $\diam(\Omega_\SFS) \, \delta^2 / \vol(\Omega_\SFS)$. A sufficient condition ensuring that this ratio is negligible is to assume that the region $\Omega_\SFS$ is geometrically regular enough to contain a ball of radius $p^{-1} \diam(\Omega_\SFS)$, $p \in \N$, and to satisfy $\diam(\Omega_\SFS) \gg 1$ in addition to $V_\SFS \gg 1$ from \ref{enu:A1.1}. Then $\vol(\Omega_\SFS) \ge \omega_3 \, \diam(\Omega_\SFS)^3 / p^3$ and hence
        \begin{equation*}
            \frac{\diam(\Omega_\SFS) \, \delta^2}{\vol(\Omega_\SFS)} \le \frac{p^3}{\omega_3} \left( \frac{\delta}{\diam(\Omega_\SFS)} \right)^2 \ll 1 \ .
        \end{equation*}

        \item Consider again an arbitrary $d \in \N$ but suppose that $\Omega_\SFS = B^d(0, R)$ is a ball of radius $R \gg 1$. Then $\partial \Omega_\SFS = S^{d-1}(0, R)$ is the sphere of radius $R$, and hence the interaction corridor is given by (see also Ref.~\cite[p. 1189]{Osserman1978})
        \begin{equation*}
            \Gamma_{2 \delta} (\Omega_\SFS) = (\partial \Omega_\SFS)_\delta = B^d (0, R + \delta) \setminus B^d (0, R - \delta) \ .
        \end{equation*}
        Its volume can be computed explicitly:
        \begin{align*}
            \vol \bigl( \Gamma_{2 \delta} (\Omega_\SFS) \bigr) &= \omega_d \Bigl[ (R + \delta)^d - (R - \delta)^d \Bigr] = 2 \omega_d \sum_{\substack{1 \le k \le d \\ \text{$k$ odd}}} \binom{d}{k} R^{d - k} \delta^k \ ,
        \end{align*}
        where the binomial theorem was used. Since $\vol(\Omega_\SFS) = \omega_d R^d$, it follows that
        \begin{equation}\label{eq:volumeEstimateShere}
            \frac{\vol \bigl(\Gamma_{2 \delta} (\Omega_\SFS)\bigr)}{\vol(\Omega_\SFS)} = 2 \sum_{\substack{1 \le k \le d \\ \text{$k$ odd}}} \binom{d}{k} R^{- k} \delta^k \ .
        \end{equation}
        The crucial physical criterion guaranteeing a negligible surface-to-volume ratio is
        \begin{equation*}
            \frac{\delta}{R} \ll 1
        \end{equation*}
        because in this case, the estimate \eqref{eq:energyEstimate} in the proof of \cref{pro:STVR} becomes
        \begin{equation*}
            C_1 (n_\SFS) \left[ 4 d^2 \left(\frac{\delta}{R}\right)^2 + \MO \left( \frac{\delta}{R} \right)^4 \right] \le \frac{\EV_\rho [H_\sint]}{\EV_\rho [H_\SFS \otimes \id_\SFR]} \le C_2 (n_\SFS) \left[ 4 d^2 \left(\frac{\delta}{R}\right)^2 + \MO \left( \frac{\delta}{R} \right)^4 \right] \ .
        \end{equation*}
        Suppose now that $d = 3$. In this case, there are only two terms in \cref{eq:volumeEstimateShere}; hence, we can write down the summands in the square brackets appearing in the upper and lower bound, obtained by squaring \cref{eq:volumeEstimateShere}, explicitly:
        \begin{equation*}
            36 \left(\frac{\delta}{R}\right)^2 + 24 \left(\frac{\delta}{R}\right)^4 + 4 \left(\frac{\delta}{R}\right)^6 \ .
        \end{equation*}
    \end{enumerate}
\end{examples}

\section{Varying particle number and Fock space}\label{sec:Fock}

In the next step, we want to add to our open-system model, determined so far in terms of Assumptions \ref{asm:systemDivision} to \ref{enu:A4}, the additional physical feature of a varying particle number. The statistical mechanics literature (see, e.g., Refs.~\cite{BratteliRobinson1981, Haag1996, Ruelle1999}) usually takes for granted that the Hilbert space of systems with this property is given by Fock space \cite{Fock1932, Cook1953}, while a particle number operator is defined \emph{a posteriori} only. Here we shall follow a conceptually different approach: first a new assumption will be introduced, \emph{postulating the existence} of a particle number observable, and second it will be shown that this assumption \emph{necessitates Fock space}. This reversal is essential for a first-principles approach, as the physical observable should be considered more fundamental than choosing a (convenient) representation.

\subsection{Particle number operator}

As indicated above, let us start by posing the following assumption, which implements the physical feature of a varying particle number in terms of an observable with specific mathematical properties.

\setcounter{assumption}{5}
\begin{assumption}[Varying particle number]\label{asm:variableParticleNumber}
    Suppose that the open system $\SFS$ consists of identical particles, let $\MH$ denote the one-particle Hilbert space of this species, and assume that the total particle number $N_\SFS$ is varying. Then we pose that
    \begin{enumerate}[labelindent=\parindent, leftmargin=*, label=\normalfont\textbf{A6.1.}, ref=A6.1]
        \item \label{enu:A6.1} $\MH^{\otimes n} \subset \MH_\SFS$ for all $n \in \N_0$ with isometric embedding. (We set $\MH^{\otimes 0} \ce \C$.)
    \end{enumerate}
    According to the standard description of quantum mechanics, the Hilbert space $\MH^{\otimes n} \ce \bigotimes^n \MH$ is the state space for an $n$-particle system (recall \cref{rem:indistinguishability} in this context), and hence the space $\MH_\SFS$ should contain all $\MH^{\otimes n}$ in order to be able to describe states with different particle numbers; the embeddings should, moreover, be isometric to guarantee that a pure state (i.e., a normalized vector) in $\MH^{\otimes n}$ is also a pure state in $\MH_\SFS$. Next, we postulate:
    \begin{enumerate}[labelindent=\parindent, leftmargin=*, label=\normalfont\textbf{A6.2.}, ref=A6.2]
        \item \label{enu:A6.2} There exists a self-adjoint operator $\MSN \in \LO(\MH_\SFS)$ such that (i) $\sigma(\MSN) = \sigma_\mathrm{p}(\MSN) = \N_0$ and (ii) $\Eig(\MSN, n) = \MH^{\otimes n}$ for all $n \in \N_0$.
    \end{enumerate}
    Since the particle number is not merely a parameter anymore but a physical observable, there should be a self-adjoint linear operator representing this observable, the spectral values of which are given by the possible particle numbers, and the corresponding eigenspaces are the $n$-particle subspaces $\MH^{\otimes n}$.
\end{assumption}

\begin{remark}\label{rem:variableParticleNumber}
    In ordinary, non-relativistic quantum mechanics, the particle number operator is typically defined in terms of creation and annihilation operators, and the latter are used to generate and remove \emph{excitations} of physical particles whose number within the open system is fixed, but not to exchange the particles themselves with the reservoir. A truly varying particle number usually appears in quantum field theory, where the creation and annihilation of elementary particles due to vacuum fluctuations are described. In light of these two situations mostly treated in the literature, it must be emphasized that we are treating existing physical particles, which constitute actual matter and not merely excitations, that can enter and exit the open system, but are not created or destroyed as in quantum field theory. For a more detailed discussion of the difference between the exchange of excitations and the exchange of physical particles in the context of the Lindblad equation, we refer to Ref.~\cite{Reible2025pre}.
\end{remark}

Based on the physically motivated \cref{asm:variableParticleNumber}, the following result is a straightforward consequence of the spectral theorem, but nevertheless a conceptually very important step towards the effective Hamiltonian.

\begin{proposition}[Necessity of Fock space]\label{pro:FockSpace}
    Assume that the open system $\SFS$ satisfies \cref{asm:variableParticleNumber}. Then its Hilbert space $\MH_\SFS$ is given by Fock space:
    \begin{equation}\label{eq:FockSpace}
        \MH_\SFS = \bigoplus_{n=0}^{\infty} \MH^{\otimes n} \ .
    \end{equation}
\end{proposition}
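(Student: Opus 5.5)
The plan is to use the spectral theorem for the self-adjoint operator $\MSN$ from Assumption \ref{enu:A6.2}. By part (i), its spectrum is pure point and equal to $\N_0$. The first step is to show that $\MH_\SFS$ is the closed span of the eigenspaces $\Eig(\MSN, n)$, $n \in \N_0$. Let $E$ be the projection-valued measure of $\MSN$. Since $\sigma(\MSN) = \N_0$ is a countable closed set, $E$ is concentrated on $\N_0$, i.e.\ $E(\R \setminus \N_0) = 0$. Countable additivity of $E$ then gives
\begin{equation*}
    \id_{\MH_\SFS} = E(\N_0) = \sum_{n=0}^{\infty} E(\set{n}) \quad \text{(strongly)} \ ,
\end{equation*}
and $\ran E(\set{n}) = \Eig(\MSN, n)$ for each $n$, because $E(\set{n})$ is the projection onto the kernel of $\MSN - n$. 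This step is routine. The one point to check is that every spectral value is isolated, so no continuous spectral part can hide inside $\sigma(\MSN)$. That holds automatically because $\N_0$ is discrete and $E$ of any set disjoint from $\N_0$ vanishes.

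Second, eigenspaces of a self-adjoint operator belonging to distinct eigenvalues are mutually orthogonal: $E(\set{n})E(\set{m}) = E(\set{n} \cap \set{m}) = 0$ for $n \neq m$. Combined with the first step, this gives the internal orthogonal decomposition $\MH_\SFS = \bigoplus_{n} \Eig(\MSN, n)$. Concretely, every $\psi \in \MH_\SFS$ can be written uniquely as $\psi = \sum_n \psi_n$ with $\psi_n = E(\set{n})\psi$ and $\norm{\psi}^2 = \sum_n \norm{\psi_n}^2$.

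Third, part (ii) of \ref{enu:A6.2} identifies $\Eig(\MSN, n) = \MH^{\otimes n}$. Here Assumption \ref{enu:A6.1} is used to read $\MH^{\otimes n}$ as a closed subspace of $\MH_\SFS$ via its isometric embedding $\iota_n$; its image is complete, hence closed. It then remains to show that the internal direct sum is unitarily equivalent to the external Hilbert space direct sum $\bigoplus_{n=0}^\infty \MH^{\otimes n}$. The candidate unitary is
\begin{equation*}
    U\bigl((\phi_n)_n\bigr) \ce \sum_n \iota_n \phi_n \ .
\end{equation*}
By orthogonality this series converges in $\MH_\SFS$ and $U$ is isometric. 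By the first step $U$ is surjective. So $U$ is unitary, which gives \cref{eq:FockSpace} in the sense of equality up to this canonical identification.

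The main obstacle I expect is conceptual rather than technical: being careful that \enquote{$=$} in \cref{eq:FockSpace} means canonical unitary equivalence, and that the isometric embeddings of \ref{enu:A6.1} are precisely what makes $U$ isometric rather than merely bounded. As an alternative route, I would note that the universal property of the Hilbert space direct sum produces the same map $U$ from the family of isometries $\iota_n$, which have mutually orthogonal ranges. Surjectivity again follows from completeness of the eigenvectors.
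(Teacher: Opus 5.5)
Your proposal is correct and follows essentially the same route as the paper. The spectral measure of $\MSN$ gives $\id_{\MH_\SFS} = \sum_{n} E_\MSN(\set{n})$ strongly, the ranges are the mutually orthogonal eigenspaces $\MH^{\otimes n}$, and Pythagoras yields the isometric isomorphism with the external direct sum; your $U$ is simply the inverse of the paper's map $J\xi = (E_\MSN(\set{n})\xi)_{n}$, and your alternative via the universal property is the paper's second proof in the appendix, which uses exactly the relations $\iota_n^\ast \iota_m = \delta_{nm}\id$ and $\sum_n \iota_n \iota_n^\ast = \id_{\MH_\SFS}$.
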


\begin{proof}
    Let $E_\MSN$ denote the unique spectral measure of the self-adjoint operator $\MSN \in \LO(\MH_\SFS)$ from Assumption \ref{enu:A6.2}, which exists according to the spectral theorem, is defined on the Borel $\sigma$-algebra of $\R$, and supported on the spectrum $\sigma(\MSN)$. Since $\sigma(\MSN) = \N_0 = \bigcup_{n \in \N_0} \, \set{n}$ is a countable disjoint union and $E_\MSN$ is $\sigma$-additive, it holds that
    \begin{equation*}
        \id_{\MH_\SFS} = E_\MSN \bigl( \sigma(\MSN) \bigr) = E_\MSN \biggl( \bigcup\nolimits_{n \in \N_0} \set{n} \biggr) = \sum_{n=0}^{\infty} E_\MSN \bigl( \set{n} \bigr) \ ,
    \end{equation*}
    where the series on the right converges in the strong operator topology; this is to say that for all $\xi \in \MH_\SFS$, the following series converges in the Hilbert space norm:
    \begin{equation*}
        \xi = \sum_{n=0}^{\infty} \xi_n \quad \text{with} \quad \xi_n \ce E_\MSN \bigl( \set{n} \bigr) \xi \ .
    \end{equation*}
    As the operator $E_\MSN \bigl( \set{n} \bigr)$ is the orthogonal projection onto $\Eig(\MSN, n)$, and as different eigenspaces are mutually orthogonal, it follows that also the vectors $(\xi_n)_{n \in \N_0}$ are pairwise orthogonal. Therefore, the Pythagorean theorem and continuity of the norm imply that
    \begin{equation*}
        \norm{\xi}^2 = \sum_{n=0}^{\infty} \norm{\xi_n}^2 \ .
    \end{equation*}
    Note that the right-hand side is equal to the canonical norm of the Hilbert space direct sum $\bigoplus_{n \in \N_0} \Eig(\MSN, n)$ of the subspaces $\Eig(\MSN, n) = \MH^{\otimes n} \subset \MH_\SFS$, cf. \cref{eq:directSumNorm} in \cref{app:directSum}. Thus, using the previous two identities for $\xi \in \MH_\SFS$, it follows that the mapping $J : \MH_\SFS \to \bigoplus_{n \in \N_0} \MH^{\otimes n}$ given by $J \xi \ce \bigl( E_\MSN (\set{n}) \xi)_{n \in \N_0}$ is an isometric isomorphism, and hence the open system's Hilbert space $\MH_\SFS$ is isometrically isomorphic to the Fock space $\bigoplus_{n \in \N_0} \MH^{\otimes n}$.
\end{proof}

In \cref{app:directSum}, we provide another proof of this proposition that uses the universal property of the Hilbert space direct sum, and in the following, we comment on how our approach to Fock space complements the existing literature.

\begin{remarks}\label{rem:FockSpaceAQFT}
    \leavevmode
    \begin{enumerate}[env]
        \item \label{enu:remFockSpaceAQFT} The assertion of \cref{pro:FockSpace} may be summarized as follows: if there exists a total particle number operator on $\MH_\SFS$, then $\MH_\SFS$ has to be Fock space. In this context, it must be noted that the connection between the existence of a number operator and Fock space has been studied intensively in the literature on representations of the canonical (anti-)commutation relations, e.g., \cite{Chaiken1967, Chaiken1968, Courbage1971, DellAntonio1967, DellAntonio1966, Friedrichs1952}; see, in particular, \cite[pp. 23--25]{Chaiken1967} for a summary as well as the textbook accounts \cite[pp. 30 ff.]{BratteliRobinson1981} and \cite[pp. 21 f.]{Strocchi2013}. In essence, these studies showed that an operator of the form $\MSN = \sum_{k=1}^{\infty} a_k^\ast a_k$, where $a_k^\ast$ and $a_k$ are the creation and annihilation operators, exists only in a direct sum of Fock representations of the canonical (anti-)commutation relations. Compared to this, the novelty of \cref{pro:FockSpace}, aside from the one discussed in \cref{rem:variableParticleNumber}, is that it did neither require the abstract framework of algebraic quantum field theory nor a mathematically involved definition of the number operator in terms of creation and annihilation operators; instead, the characterization of $\MSN$ through \cref{asm:variableParticleNumber} is mathematically simple but also physically well motivated.

        \item In recent years, there has been increased interest in the physical feature of a varying particle number in the framework of classical mechanics; see, for example, Refs.~\cite{delRazo2025, delRazo2022, DelleSite2020, Klein2022} and references therein. Hence, it is worthwhile to discuss \cref{pro:FockSpace} in this context. Since there is no canonical notion of a classical Fock space, one has to model the configuration space of a classical system with varying particle number in terms of a hierarchy of hyperplanes of configuration spaces with fixed particle number. These hyperplanes are connected by dynamical transition probabilities (free energy), which are determined based on statistical mechanics considerations. Thus, \cref{pro:FockSpace} shows that in quantum mechanics the mathematical framework for systems with varying particle number is more compact (which is also reflected by \cref{cor:vonNeumannEquation} below). However, it should be noted that Ref.~\cite{delRazo2022} put forward the proposal to use a Fock space formalism in a classical framework, suitably adapted to the physical situation, to model particle-based reaction-diffusion processes.
    \end{enumerate}
\end{remarks}

If the physics dictates an upper bound on the particle number, then $\MH_\SFS$ will only be a finite direct sum. This is proved in the same way as \cref{pro:FockSpace} above by replacing the spectrum $\N_0$ with the set $\set{0, 1, \dotsc, M}$.

\begin{corollary}
    If the total particle number $N_\SFS$ of $\SFS$ is varying and bounded above by $M \in \N$, then the Hilbert space of $\SFS$ is given by the finite direct sum
    \begin{equation*}
        \MH_\SFS = \bigoplus_{n=0}^{M} \MH^{\otimes n} \ .
    \end{equation*}
\end{corollary}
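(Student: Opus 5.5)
The plan is to repeat the spectral-theorem argument of \cref{pro:FockSpace}, now under the modified hypothesis. We assume Assumption \ref{enu:A6.1} only for $n \in \set{0, 1, \dotsc, M}$. We replace Assumption \ref{enu:A6.2} by the existence of a self-adjoint $\MSN \in \LO(\MH_\SFS)$ with $\sigma(\MSN) = \sigma_\mathrm{p}(\MSN) = \set{0, 1, \dotsc, M}$ and $\Eig(\MSN, n) = \MH^{\otimes n}$ for $0 \le n \le M$. This is exactly how the physical statement \enquote{the particle number varies but never exceeds $M$} should be encoded. In particular, $\MSN$ is then automatically bounded, with $\norm{\MSN} = M$, since its spectrum is compact.

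First, we take the spectral measure $E_\MSN$, which is supported on the finite set $\sigma(\MSN)$. Finite additivity suffices here, and it gives
\begin{equation*}
    \id_{\MH_\SFS} = E_\MSN\bigl(\sigma(\MSN)\bigr) = \sum_{n=0}^{M} E_\MSN\bigl(\set{n}\bigr) \ .
\end{equation*}
This sum is finite, so no question of strong convergence arises. Each $E_\MSN(\set{n})$ is the orthogonal projection onto $\Eig(\MSN, n) = \MH^{\otimes n}$, and eigenspaces of a self-adjoint operator belonging to distinct eigenvalues are mutually orthogonal. Hence every $\xi \in \MH_\SFS$ decomposes uniquely as $\xi = \sum_{n=0}^M \xi_n$ with $\xi_n \ce E_\MSN(\set{n}) \xi$, and the Pythagorean theorem gives $\norm{\xi}^2 = \sum_{n=0}^M \norm{\xi_n}^2$.

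Second, we define $J : \MH_\SFS \to \bigoplus_{n=0}^M \MH^{\otimes n}$ by $J\xi \ce (\xi_n)_{n=0}^M$. The norm identity shows that $J$ is an isometry for the canonical direct-sum norm of \cref{eq:directSumNorm}. For surjectivity, take any tuple $(\eta_n)_{n=0}^M$ with $\eta_n \in \MH^{\otimes n}$. Using the isometric embeddings from \ref{enu:A6.1}, set $\xi \ce \sum_n \eta_n \in \MH_\SFS$. Because $\eta_n$ lies in $\Eig(\MSN, n)$, we have $E_\MSN(\set{k}) \eta_n = \delta_{kn} \eta_n$, and therefore $J\xi = (\eta_n)_n$. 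Thus $J$ is a unitary isomorphism, which is the claimed identification. This argument can be seen as the special case of \cref{pro:FockSpace} in which $\MH^{\otimes n} = \set{0}$ for $n > M$. That reading is consistent with the proof of the proposition, because the sum over $n > M$ simply vanishes there.

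There is no real analytic obstacle, since finiteness removes every convergence issue. The only point requiring care is modelling. We must make precise that \enquote{bounded above by $M$} means $\sigma(\MSN) = \set{0, \dotsc, M}$ and not merely $\sigma(\MSN) \subset \N_0$ with a finite maximum. We also need both that the listed eigenspaces exhaust the spectrum and that no eigenvalue has an empty eigenspace. Once this reformulation of \cref{asm:variableParticleNumber} is fixed, the rest is routine.
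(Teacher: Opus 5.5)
Your proposal is correct and follows the paper's own argument, which is the spectral-theorem proof of \cref{pro:FockSpace} with $\N_0$ replaced by $\set{0, 1, \dotsc, M}$; your explicit surjectivity check for $J$ spells out a step the paper leaves implicit. One harmless wording slip, which does not affect the proof: in your \enquote{special case} reading it is $\Eig(\MSN, n)$ that should be $\set{0}$ for $n > M$, not $\MH^{\otimes n}$, since $\MH^{\otimes n} = \set{0}$ for some $n \ge 1$ would force $\MH = \set{0}$.
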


\subsection{Direct sum operators}

Before moving on, we recall how the Hamiltonian $H_\SFS$ of the open system changes due to the transition from the $N_\SFS$-fold tensor product space \eqref{eq:HilbertSpaceS} to Fock space \eqref{eq:FockSpace}, denoted in the following as
\begin{equation*}
    \MF(\MH) \ce \bigoplus_{n=0}^{\infty} \MH^{\otimes n} \ .
\end{equation*}

Recall the general definition of the Hilbert space direct sum, provided in \cref{eq:directSum} in \cref{app:directSum}, and let $H_{\SFS, n} \in \LO(\MH^{\otimes n})$ denote the Hamiltonian of an $n$-particle realization of the open system $\SFS$, that is, $H_{\SFS, n}$ is given by \cref{eq:HamiltonianS,eq:interactionHamiltonians}, with $N_\SFS$ replaced by $n$. The appropriate Hamiltonian for $\SFS$ is the operator $\MSH \in \LO \bigl(\MF(\MH)\bigr)$ defined on \cite[p. 195]{Arai2024}
\begin{equation*}
    \dom(\MSH) \ce \Set{\bigl(\psi^{(n)}\bigr)_{n \in \N_0} \in \MF(\MH) \ : \ \psi^{(n)} \in \dom(H_{\SFS, n}) \ \text{and} \ \sum_{n \in \N_0} \norm[\big]{H_{\SFS, n} \psi^{(n)}}_{\MH^{\otimes n}}^2 < + \infty} \ .
\end{equation*}
and acting on an element $\Psi = (\psi^{(n)})_{n \in \N_0} \in \dom(\MSH)$ by $\MSH \Psi \ce (H_{\SFS, n} \psi^{(n)})_{n \in \N_0}$; due to this definition, one denotes the operator $\MSH$ also by
\begin{equation*}
    \MSH = \bigoplus_{n=0}^{\infty} H_{\SFS, n} \ .
\end{equation*}
One can show that if the operators $H_{\SFS, n}$ are all self-adjoint (which we assumed), then $\MSH$ is also self-adjoint \cite[Thm. 4.2 (vi)]{Arai2024}. Similarly to the definition of $\MSH$, the number operator $\MSN \in \LO(\MH_\SFS)$ introduced in Assumption \ref{enu:A6.2} may be written as (cf. Ref.~\cite[p. 221]{Arai2024})
\begin{equation*}
    \MSN = \bigoplus_{n=0}^{\infty} \bigl( n \id_{\MH^{\otimes n}} \bigr) \ .
\end{equation*}

In addition to the change of the Hamiltonian $H_\SFS$ of the open system to the direct sum operator $\MSH$, let us also adapt the set of admissible density operators---first introduced in \cref{eq:admissibleDM}---to the situation of varying particle number. On one hand, within the first-order approximation \eqref{eq:STVR}, that is, up to terms of higher order in the effective interaction distance $\delta$, the operator $H_\sint$ mediating the system--reservoir interaction does not play a role anymore (\cref{pro:STVR}). Since we will employ this approximation in the sequel (and in the next subsection, we will argue that it is indeed still valid for systems with varying particle number), we can drop the compatibility requirement of our admissible density operators with respect to the Hamiltonian $H_\sint$. On the other hand, the physically relevant density operators should now be compatible with the particle number operator $\MSN$ as well due to \cref{asm:variableParticleNumber}. Therefore, we define
\begin{equation*}
    \DM_\mathrm{adm}^{(2)} (\MH_\SFT) \ce \DM(\MH_\SFT, \MSH \otimes \id_\SFR) \cap \DM(\MH_\SFT, \MSN \otimes \id_\SFR) \cap \DM(\MH_\SFT, \id_\SFS \otimes H_\SFR) \ .
\end{equation*}

\subsection{Validity of the surface-to-volume ratio approximation}\label{subsec:STVRvariableN}

When deriving the surface-to-volume ratio approximation of the total Hamiltonian $H_\SFT$ in \cref{sec:STVR}, we have treated the particle number $N_\SFS$ of the open system as a constant. Since we are now considering systems satisfying \cref{asm:variableParticleNumber}, the quantity $N_\SFS$ will fluctuate, and hence it is not clear whether \cref{pro:STVR} is still valid for these systems; in the following, we will argue that, under a natural requirement, it is indeed.

The crucial observation is that we are still working within the framework of equilibrium statistical mechanics (\cref{asm:systemDivision}). \emph{Firstly}, this entails the assumption that the number of particles inside $\SFS$ is large \emph{on average}, i.e., that the variable $N_\SFS$ now fluctuates around its large mean value $\oln{N_\SFS} \gg 1$, in accordance with \cref{asm:variableParticleNumber}. \emph{Secondly}, this implies that we are not treating the operator $\MSN$ as a fully resolved dynamical variable of the system whose time evolution can be determined exactly, but rather as a thermodynamic property that undergoes \emph{statistical fluctuations} and, in this way, identifies different \emph{statistical realizations} (characterized in terms of different particle numbers) of the open system $\SFS$.

Suppose that $r \gg 1$ consecutive, independent measurements of the particle number $N_\SFS$ have been performed on $\SFS$, and that $N_\SFS^{(1)}, \dotsc, N_\SFS^{(r)}$ are the detected outcomes, i.e., the different statistical realizations of the open system. Let $\braket{N_\SFS} \ce \frac{1}{r} \sum_{j=1}^{r} N_\SFS^{(j)}$ denote the sample mean and $\sigma^2 = \braket{N_\SFS^2} - \braket{N_\SFS}^2$ the sample variance. \emph{In principle}, the latter could be arbitrarily large, for example, if one starts with a large initial particle number $N_\SFS^{(1)} \gg 1$ satisfying Assumption \ref{enu:A1.1} but, in the course of the statistical evolution, also obtains realizations $j \in \set{1, \dotsc, r}$ for which $N_\SFS^{(j)} / N_\SFS^{(1)} \ll 1$. \emph{However}, in the setting of equilibrium statistical mechanics, such occurrences are prohibited since the relative mean square fluctuation of thermodynamic observables, like the open system's particle number, must be very small, i.e.,
\begin{equation}\label{eq:smallMSF}
    \frac{\sigma}{\braket{N_\SFS}} = \sqrt{\frac{\braket{N_\SFS^2} - \braket{N_\SFS}^2}{\braket{N_\SFS}^2}} \ll 1 \ .
\end{equation}
In the present form, \cref{eq:smallMSF} is a basic physical requirement, which could be translated into a more precise mathematical condition on the admissible density operators $\rho$ of the total system, that reconciles our previous Assumptions \ref{asm:systemDivision} and \ref{asm:variableParticleNumber} with each other. As indicated above, \cref{eq:smallMSF} should really be understood as a general consequence of the statistical-mechanical description that we are using as the underlying theoretical framework for our derivations, ensuring that the ensemble average and most probable value of $\MSN$ are approximately equal \cite[p. 129]{Huang1987}, \cite[pp. 18, 86]{Nolting2018}. In fact, as K.~Huang writes \cite[p. 130]{Huang1987}:
\begin{itemize}
    \item[] \enquote{If this condition is not satisfied, there is no unique way to determine how the observed value of [the measurable quantity $\MSN$]\footnote{Modification by the authors.} may be calculated. When it is not, we should question the validity of statistical mechanics.}
\end{itemize}

Note that for the concrete example of the grand canonical density operator, one can calculate the relative mean square fluctuation of the particle number explicitly, which gives the following qualitative behavior \cite[Eq. (1.200)]{Nolting2018}:
\begin{equation*}
    \frac{\sigma}{\braket{N_\SFS}} \sim \sqrt{\frac{\kappa_T}{\braket{N_\SFS}}} \ .
\end{equation*}
Here $\kappa_T$ is the isothermal compressibility, which is finite outside the neighborhood of a first-order phase transition, and thus for large $\braket{N_\SFS}$ the fluctuations around the mean are negligibly small; this illustrates that small fluctuations are indeed a natural condition.

Let us return to our argument why \cref{pro:STVR} is valid also for varying $N_\SFS$. Looking back at the proof of the latter, one recognizes that the variable $N_\SFS$ enters explicitly in conjunction with Assumption \ref{enu:A1.1}, which is not invalidated by \cref{asm:variableParticleNumber} as explained above, and implicitly through the dependence of the constants $C_1(n_\SFS)$ and $C_2(n_\SFS)$ on the number density of the open system, which fluctuates now as well. By virtue of \cref{eq:smallMSF}, however, we know that, statistically speaking, $n_\SFS$ will not deviate significantly from its mean value $\oln{n_\SFS}$, and hence we can perform the following additional approximations:
\begin{equation*}
    C_1(n_\SFS) \approx C_1(\oln{n_\SFS}) \quad \text{and} \quad C_2(n_\SFS) \approx C_2(\oln{n_\SFS}) \ .
\end{equation*}
Since the exact value of the quantities $C_1$ and $C_2$ is not relevant for the validity of the surface-to-volume ratio approximation \eqref{eq:STVR} even in the case of a constant particle number, this newly introduced approximation is not a drastic one. Therefore, \cref{pro:STVR} also applies to systems that are characterized by \cref{asm:variableParticleNumber}.

\section{The effective Hamiltonian}\label{sec:effectiveHamiltonian}

We have established a rigorous version of the surface-to-volume ratio approximation (\cref{pro:STVR}) as well as the necessity of Fock space for systems with varying physical particle number (\cref{pro:FockSpace}). Based on these results, we can now derive the effective Hamiltonian for the open system $\SFS$ and argue that, given the necessary assumptions and approximations, it is unique up to a constant.

\subsection{Main result}

In the sequel, we will build on ideas put forward in Ref.~\cite{DelleSite2024jpa}, where a similar effective Hamiltonian was derived based on semi-empirical assertions, but we will follow a different, more rigorous approach; see \cref{subsec:discussionEffectiveHamiltonian} below for a discussion of the differences of the two approaches.

The following theorem constitutes the main result of our paper. It relies on Assumption \ref{asm:variableParticleNumber} and on the surface-to-volume ratio approximation \eqref{eq:STVR} from \cref{pro:STVR}. We underline, therefore, that the assertion must be understood within a first-order approximation in the effective interaction distance $\delta$, and that if one were to consider higher-order terms in \cref{eq:STVR}, then this assertion would change.

\begin{theorem}[Effective Hamiltonian]\label{thm:effectiveHamiltonian}
    Suppose that the surface-to-volume ratio approximation has been applied to the total system $\SFT$, and that the open system $\SFS$ satisfies Assumption \ref{asm:variableParticleNumber}. Set $\varepsilon \ce N_\SFS / N$. Then there exists a constant $\mu \in \R$ such that
    \begin{equation}\label{eq:effectiveHamiltonian}
        H_\SFT = (\MSH - \mu \MSN) \otimes \id_{\SFR} + \MO (\varepsilon^2) \ .
    \end{equation}
    The operator $\MSH_\seff \ce \MSH - \mu \MSN$ is called the \emph{effective Hamiltonian} for the open system $\SFS$, and it holds that within a first-order approximation with respect to $\varepsilon$, this operator is the unique quantum-mechanical Hamiltonian of $\SFS$, up to additive constants.
\end{theorem}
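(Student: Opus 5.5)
The plan is to follow the route of Ref.~\cite{DelleSite2024jpa}, made rigorous with \cref{pro:STVR,pro:FockSpace}. By the surface-to-volume ratio approximation, the total Hamiltonian is, in the sense of expectation values over admissible states $\rho \in \DM_\mathrm{adm}^{(2)}(\MH_\SFT)$, replaced by $\MSH \otimes \id_\SFR + \id_\SFS \otimes H_\SFR$. Here $\MSH = \bigoplus_n H_{\SFS,n}$ acts on the Fock space given by \cref{pro:FockSpace}. The key observation is that the reservoir Hamiltonian depends on $N_\SFR = N - N_\SFS$. When particles move between $\SFS$ and $\SFR$, $N_\SFR$ changes, so the reservoir contribution must be regarded as a function $H_\SFR(N - \MSN)$, understood via the direct sum decomposition of \cref{pro:FockSpace}. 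On the $n$-particle sector $\MH^{\otimes n} \otimes \MH_\SFR$, the reservoir has $N-n$ particles. I would therefore write
\begin{equation*}
    \id_\SFS \otimes H_\SFR \;\to\; \bigoplus_{n} \id_{\MH^{\otimes n}} \otimes H_\SFR^{(N-n)} \ ,
\end{equation*}
with $H_\SFR^{(N-n)}$ the reservoir Hamiltonian for $N-n$ particles.

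The second step is a first-order Taylor expansion in $\varepsilon = N_\SFS/N$, which is small by Assumption \ref{enu:A1.1}. At the level of expectation values, which is all \cref{pro:STVR} controls, I would write $\EV[H_\SFR^{(N-n)}] = \EV[H_\SFR^{(N)}] - n\,\partial_{N}\EV[H_\SFR^{(N)}] + \MO(n^2/N^2)$ after normalizing by the extensive energy scale. Since $n/N = \varepsilon(1+o(1))$ by the small-fluctuation condition \eqref{eq:smallMSF}, the remainder is $\MO(\varepsilon^2)$. Thermodynamic equilibrium (Assumption \ref{enu:A1.2}) identifies the derivative $\partial E_\SFR/\partial N_\SFR$ with the chemical potential $\mu$. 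It is a constant independent of $n$ because the reservoir is much larger than $\SFS$ and is in equilibrium. Summing over the sectors, and using that $\MSN = \bigoplus_n n\,\id$, gives
\begin{equation*}
    H_\SFT = \MSH \otimes \id_\SFR - \mu\, \MSN \otimes \id_\SFR + \id_\SFS \otimes H_\SFR^{(N)} + \MO(\varepsilon^2) \ .
\end{equation*}
The last term acts trivially on $\MH_\SFS$. On the open system it contributes only the additive constant $E_\SFR = \EV[H_\SFR^{(N)}]$, which I would absorb by an energy shift, as in \cref{eq:energyShift}. This yields \eqref{eq:effectiveHamiltonian}. The partial trace formula (\cref{lem:partialTrace}) with $T = \MSH - \mu\MSN$ justifies passing between $\EV_\rho$ and $\EV_{\rho_\SFS}$. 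Compatibility of $\rho$ with $\MSH\otimes\id_\SFR$ and $\MSN\otimes\id_\SFR$, combined with \cref{lem:intersectionCompatibleDM}, makes these expressions well defined.

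For uniqueness, suppose $\MSH' \otimes \id_\SFR$ is another first-order decomposition of $H_\SFT$. The difference $\MSH' - \MSH_\seff$ must then vanish in every admissible expectation value up to $\MO(\varepsilon^2)$. It must also commute with $\MSN$, since it has to respect the sector structure of \cref{pro:FockSpace}. Within first order, the only freedom in the expansion above is the zeroth-order term $E_\SFR$, so the two operators differ by a constant multiple of the identity. I would test the difference on pure states taken from each sector $\MH^{\otimes n}$; this forces it to equal $c_n\,\id$ on each sector. The $c_n$ are then affine in $n$, and the linear part is fixed by $\mu$.

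The main obstacle is the Taylor step. It requires treating the reservoir Hamiltonian as a function of the operator $\MSN$ and making sense of the derivative in $N_\SFR$ for unbounded operators. I would first try to work only with expectation values sector by sector. There the expansion is a scalar one, and the identification of $\mu$ is a thermodynamic input rather than an operator identity. The $\MO(\varepsilon^2)$ statement in \eqref{eq:effectiveHamiltonian} would then be interpreted in the same weak, expectation-value sense as \eqref{eq:STVR}.
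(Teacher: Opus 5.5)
Your proposal is correct at the paper's level of rigor and has the same overall structure as the paper's proof. In both, the surface-to-volume ratio approximation reduces $\EV_\rho[H_\SFT]$ to $\EV_{\rho_\SFS}[\MSH]+\EV_{\rho_\SFR}[H_\SFR]$. The reservoir energy is then expanded to first order in $\varepsilon$ about $N$, and its first-order coefficient is identified with $\mu_\SFR=\mu$ via Assumption \ref{enu:A1.2}. The zeroth-order term is removed by an energy shift, and \cref{lem:partialTrace} turns the result into $\EV_\rho[(\MSH-\mu\MSN)\otimes\id_\SFR]$. As you note, \eqref{eq:effectiveHamiltonian} is only obtained in the expectation-value sense.

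The real difference is how $\MSN$ enters. The paper expands the single scalar function $\varepsilon\mapsto E_\SFR((1-\varepsilon)N)$ and then \emph{identifies} $N_\SFS$ with $\tr(\MSN\rho_\SFS)$ on the grounds that $\SFS$ is macroscopic. You instead expand $E_\SFR(N-n)$ sector by sector and let linearity of the expectation produce $-\mu\sum_n p_n n=-\mu\tr(\MSN\rho_\SFS)$. Your remainder is of order $\EV_{\rho_\SFS}[\MSN^2]/N^2$, which \eqref{eq:smallMSF} makes $\MO(\varepsilon^2)$. This derives the $\MSN$-term rather than postulating it, but it costs an extra modelling input: the reservoir's expected energy in sector $n$ must equal the equilibrium function $E_\SFR(N-n)$. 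Moreover, the operator $\bigoplus_n\id_{\MH^{\otimes n}}\otimes H_\SFR^{(N-n)}$ does not act on the fixed space $\MH_\SFS\otimes\MH_\SFR$ and makes no sense for $n>N$. So only your scalar version is usable, as you anticipate.

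Your uniqueness argument takes an unnecessary and partly unjustified detour. You may not impose commutation with $\MSN$ on a competitor; it is a consequence, not an assumption. The sector constants $c_n$ and their being affine in $n$ are neither needed nor derived. The paper simply uses that the energy observable is determined by its expectation values in all admissible states, up to an additive constant. The only freedom is the single $n$-independent number $E_\SFR(N)$, and superpositions across sectors are admissible. Polarization therefore gives $\MSH'-\MSH_\seff=c\,\id$ at once.

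Finally, state the regularity of $N_\SFR\mapsto E_\SFR(N_\SFR)$ as an explicit hypothesis, as the paper does. The paper poses $C^1$, but your $\MO(n^2/N^2)$ remainder actually requires a bounded second derivative.
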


Before we come to the proof of the theorem, we shall give two remarks about the interpretation of the result.

\begin{remarks}\label{rem:effHamiltonian}
    \leavevmode
    \begin{enumerate}[env]
        \item Let us first emphasize in what sense the uniqueness statement should be understood. If we approximate the exact total Hamiltonian \eqref{eq:totalHamiltonian} up to an error of second order in $\delta$ using the surface-to-volume ratio approximation from \cref{pro:STVR}, and if we subsequently approximate the resulting operator up to an error of second order in $\varepsilon$ as well, then \emph{within these two first-order approximations}, $\MSH_\seff$ is the unique Hamiltonian of the open system, modulo additive constants. Thus, since the operator $\MSH_\seff$ is only correct up to terms of higher order, it is called an \emph{effective} Hamiltonian, meaning an \emph{approximate} one.

        \item \label{enu:epsilon} Next, let us take a look at the parameter $\varepsilon$. According to \cref{asm:variableParticleNumber} the particle number $N_\SFS$ fluctuates and hence also $\varepsilon = N_\SFS / N$ will be a variable. This circumstance can be understood as follows: \cref{eq:effectiveHamiltonian} is an expansion of the function $N_\SFS \mapsto H_\SFT(N_\SFS)$ in terms of the variable $N_\SFS$; indeed, as we will see in the proof below, the total Hamiltonian $H_\SFT$ depends, through the \emph{reservoir's Hamiltonian} $H_\SFR$, on the particle number of the open system. Consequently, \cref{eq:effectiveHamiltonian} shows that, to first order, this dependency is captured by the number operator $\MSN$. In addition, it must be noted that even though $\varepsilon$ fluctuates, it will always---that is, for all statistical realizations of the open system characterized in terms of different particle numbers---be very small because by \cref{eq:smallMSF} the relative mean square fluctuation of $\MSN$ is negligible and hence $N_\SFS$ will be close to its very large mean value.
    \end{enumerate}
\end{remarks}

\begin{proof}
    Let $\rho \in \DM_\mathrm{adm}^{(2)} (\MH_\SFT)$ be an admissible density operator for the total system and define $E_\SFT \ce \EV_\rho [H_\SFT]$. According to the surface-to-volume ratio approximation \eqref{eq:STVR}, up to first order in the effective interaction distance $\delta$, we have
    \begin{align}\label{eq:expectedTotalEnergyFock}
        \begin{split}
            E_\SFT &\approx \EV_\rho [\MSH \otimes \id_\SFR + \id_\SFS \otimes H_\SFR] \\
            &= \EV_\rho [\MSH \otimes \id_\SFR] + \EV_\rho [\id_\SFS \otimes H_\SFR] = \EV_{\rho_\SFS} [\MSH] + \EV_{\rho_\SFR} [H_\SFR] \ ,
        \end{split}
    \end{align}
    where we have defined $\rho_\SFR \ce \tr_\SFS (\rho)$ to be the partial trace of the state $\rho$ with respect to the open system and \cref{lem:partialTrace} was used in the last step. Recall from \cref{subsec:STVRvariableN} that \cref{pro:STVR} remains valid for varying $N_\SFS$, justifying the approximation \eqref{eq:expectedTotalEnergyFock}.
    
    Let $E_\SFR \ce \EV_{\rho_\SFR} [H_\SFR]$ be the expected energy of the reservoir. The central physical observation for our proof, which was the key modeling assumption in Ref.~\cite[p. 5]{DelleSite2024jpa}, is the following: $E_\SFR$ is a function of the number of particles $N_\SFR$ of the reservoir, given by $N_\SFR = N - N_\SFS$ (cf. \cref{asm:systemDivision}); while this is physically plausible, it can be seen mathematically by writing down an expression like \eqref{eq:HamiltonianS} for the Hamiltonian $H_\SFR$ and then computing the expectation value as in the proof of \cref{pro:STVR}. Thus, letting $\varepsilon = N_\SFS / N$, it follows that
    \begin{equation*}
        E_\SFR (N_\SFR) = E_\SFR (N - N_\SFS) = E_\SFR \bigl( (1 - \varepsilon) N \bigr) \ .
    \end{equation*}
    Since $\varepsilon \ll 1$ is small for all statistical realizations of the open system according to Assumption \ref{enu:A1.1} and \cref{rem:effHamiltonian} \ref{enu:epsilon} from above, we can expand the function $\varepsilon \mapsto E_\SFR \bigl( (1 - \varepsilon) N \bigr)$ into a Taylor series around $\varepsilon = 0$:
    \begin{equation}\label{eq:expansionER}
        E_\SFR \bigl( (1 - \varepsilon) N \bigr) = E_\SFR (N) - \varepsilon N \left. \od{E_\SFR (N_\SFR)}{N_\SFR} \right|_{N_\SFR=N}  + \MO (\varepsilon^2) \ .
    \end{equation}
    Of course, the validity of this expansion requires $N_\SFR \mapsto E_\SFR (N_\SFR)$ to be at least a $C^1$-function; since we did not specify a concrete reservoir model, we have to pose this as an additional assumption for the validity of the theorem. It must be noted, however, that differentiability of $E_\SFR$ is a physically very reasonable hypothesis since if it were not given, there would be no thermodynamics for $\SFR$, and hence reservoir models that do not satisfy this assumption are physically uninteresting.
    
    Looking now at the expansion \eqref{eq:expansionER}, the zeroth-order term $E_\SFR (N)$ is a constant that can be considered a shift of the energy scale; without loss of generality, we may neglect it by shifting $H_\SFR$ similarly as $H_\SFS$ in \eqref{eq:energyShift}. The coefficient of the first-order term, on the other hand, is an important physical quantity. By definition, it corresponds to the \emph{chemical potential} $\mu_\SFR$ of the reservoir at constant entropy and volume \cite[Eq. (2)]{Ali2020}, \cite[Eq. (7.31)]{Huang1987}, \cite[Eq. (3.1.3)]{Schwabl2006}:
    \begin{equation}\label{eq:chemicalPotential}
        \mu_\SFR = \left. \od{E_\SFR (N_\SFR)}{N_\SFR} \right|_{N_\SFR=N} \ .
    \end{equation}
    Now, crucially, since we have assumed that the open system and the reservoir have reached thermodynamic equilibrium (Assumption \ref{enu:A1.2}), it follows that the chemical potential $\mu \equiv \mu_\SFS$ of the open system satisfies $\mu = \mu_\SFR$ \cite[Eq.~(2.7.4)]{Schwabl2006}. Moreover, as the open system is macroscopic (Assumption \ref{enu:A1.1}), the particle number $N_\SFS$ is given by the expectation value $\EV_{\rho_\SFS} [\MSN] = \tr_{\MH_\SFS} (\MSN \! \rho_\SFS)$ of the particle number operator \cite[p. 83]{Nolting2018}, \cite[p. 65]{Schwabl2006}:
    \begin{equation*}
        N_\SFS = \tr_{\MH_\SFS} (\MSN \! \rho_\SFS) \ .
    \end{equation*}
    Note that by \cref{lem:partialTrace} and the assumption $\rho \in \DM (\MH_\SFT, \MSN \otimes \id_\SFR)$, this expectation value is indeed well-defined. With all of these observations, the expansion \eqref{eq:expansionER} reduces to
    \begin{equation}\label{eq:expectedEnergyReservoir}
        \EV_{\rho_\SFR} [H_\SFR] = - \tr_{\MH_\SFS} ( \mu \MSN \! \rho_\SFS) + \MO (\varepsilon^2) \ .
    \end{equation}

    Let us return to the expected energy $E_\SFT = \EV_\rho [H_\SFT]$ of the total system in the state $\rho \in \DM_\mathrm{adm}^{(2)} (\MH_\SFT)$. Since the Hamiltonian $H_\SFT$ corresponds to the observable of the total energy, it generates the total system's time evolution \cite[Axiom A6, pp. 794 f.]{Moretti2017}, \cite[Axiom 4, p. 65]{Teschl2014} and is, therefore, unique up to additive constants \cite[Thm. 12.45]{Moretti2017}. That is, if $Q \in \LO (\MH_\SFT)$ were another self-adjoint operator such that $E_\SFT = \EV_\rho [Q]$ for all admissible states $\rho$, i.e., such that $Q$ corresponds to the observable of the total energy of $\SFT$, then it would follow that $Q = H_\SFT + C \id_\SFT$ for some $C \in \R$. Now, inserting the previous result \eqref{eq:expectedEnergyReservoir} into the formula \eqref{eq:expectedTotalEnergyFock} for the approximated $E_\SFT$, it follows that
    \begin{align*}
        E_\SFT &\approx \EV_{\rho_\SFS} [\MSH] + \EV_{\rho_\SFR} [H_\SFR] \\
        &= \tr_{\MH_\SFS} (\MSH \! \rho_\SFS) - \tr_{\MH_\SFS} ( \mu \MSN \! \rho_\SFS) + \MO (\varepsilon^2) \\
        &= \tr_{\MH_\SFS} \bigl( (\MSH - \mu \MSN) \rho_\SFS \bigr) + \MO (\varepsilon^2) \ .
    \end{align*}
    It holds that $\rho \in \DM (\MH_\SFT, \MSH - \mu \MSN)$ by assumption on $\rho$ and \cref{lem:intersectionCompatibleDM}. Hence, we can apply \cref{lem:partialTrace} to rewrite the expectation value in the last line as
    \begin{equation*}
        \tr_{\MH_\SFS} \bigl( (\MSH - \mu \MSN) \rho_\SFS \bigr) = \tr_{\MH_\SFS \otimes \MH_\SFR} \Bigl[ \bigl( (\MSH - \mu \MSN) \otimes \id_\SFR \bigr) \rho \Bigr] \ .
    \end{equation*}
    Thus, it follows that the expected energy of the total system can be expressed as
    \begin{equation*}
        E_\SFT \approx \EV_\rho \bigl[ (\MSH - \mu \MSN) \otimes \id_\SFR \bigr] + \MO (\varepsilon^2) \ .
    \end{equation*}
    
    According to the argument for the uniqueness of the Hamiltonian provided above, we may conclude that under the relevant assumptions and the implementation of the surface-to-volume ratio approximation, the operator $(\MSH - \mu \MSN) \otimes \id_\SFR$ must be the unique first-order Hamiltonian (up to additive constants) of the total system $\SFT$.
\end{proof}

\subsection{Discussion of the result}\label{subsec:discussionEffectiveHamiltonian}

In the following, we will discuss some aspects of the above proof in detail and also how our result complements the existing literature.

\emph{Regarding the proof of \cref{thm:effectiveHamiltonian}}, we emphasize \emph{first} that its crucial element consists in treating the reservoir's energy $E_\SFR = \tr_{\MH_\SFR} (H_\SFR \rho_\SFR)$ as a function of $N - N_\SFS$, and the system's particle number $N_\SFS$ as a small perturbation of $N$. This permits a series expansion of $E_\SFR$, which corresponds to a coarse graining of the microscopic model of the reservoir since only a first-order response of $\SFR$ to changes in the particle number of $\SFS$ is kept. Therefore, $H_\SFR$ does not appear in \cref{eq:effectiveHamiltonian} anymore, and the influence of the reservoir on the open system is reduced to the chemical potential. \emph{Second}, let us comment on some of the employed assumptions. Most importantly, the derivation of \eqref{eq:effectiveHamiltonian} relied on (i) the surface-to-volume ratio approximation \eqref{eq:STVR} to rewrite $\EV_\rho [H_\SFT]$ in terms of the operator $H_\SFS \otimes \id_\SFR + \id_\SFS \otimes H_\SFR$; (ii) Assumption \ref{enu:A1.1} and \cref{eq:smallMSF} entailing that $\SFS$ is macroscopically large (yet significantly smaller than $\SFR$) and that the relative fluctuation of $N_\SFS$ is small to justify $\varepsilon \ll 1$ for all statistical realizations of $\SFS$, and to connect $N_\SFS$ with the operator $\MSN$; (iii) the thermodynamic equilibrium Assumption \ref{enu:A1.2}, which implied that $\mu = \mu_\SFR$. Dropping any of these, \cref{eq:effectiveHamiltonian} loses its validity. In fact, without the approximation (i), the chemical potential of $\SFS$ cannot be given by \cref{eq:chemicalPotential} because it is no longer a constant: if the open system is too small (in the sense that surface effects play a significant role), then the exchange of particles with the reservoir drastically changes the physics of $\SFS$ (in particular, $\mu_\SFS$ changes), and thus higher-order terms in the expansion \eqref{eq:expansionER} will become relevant; for example, the second-order term involves the change of the chemical potential with respect to the number of particles:
\begin{equation*}
    \left. \od[2]{E_\SFR}{N_\SFR} \right|_{N_\SFR=N} = \left. \od{\mu_\SFR}{N_\SFR} \right|_{N_\SFR=N} \ .
\end{equation*}
More generally, one can show \cite{Mishin2015} that if the interaction $H_\sint$ becomes important for the physics of the open system, then one cannot avoid higher-order terms in the expansion \eqref{eq:expansionER}; the same is true in case (iii) is dropped. Finally, we mention that (ii) also entails that a change of the number of particles in the open system does not affect the physics of the reservoir; in other words, the reservoir behaves as if it had a fixed particle number (see Ref.~\cite{DelleSite2024jpa} for a thorough discussion of this point).

\emph{Regarding the relation of \cref{thm:effectiveHamiltonian}} to the existing literature, let us \emph{first} underline the difference to the standard derivation of the grand canonical density operator in statistical mechanics. In the latter, the Hamiltonian $\MSH - \mu \MSN$ appears in the exponent of the density operator after the reservoir has been traced out and an expansion in the variable $N_\SFS$ similar to \eqref{eq:expansionER} has been performed. It is then \emph{defined} to be the Hamiltonian of the open system based on an analogy to the canonical density operator. Our result, in contrast, shows that one does not need to make this detour through the second-principles probability distribution to define a first-principles object: The operator $\MSH - \mu \MSN$ is derived without making use of the grand canonical ensemble, and it is shown to be the unique Hamiltonian in the specific physical regime staked out by the assumptions and approximations, describing the open system. Below, we will see that the grand canonical density operator can also be recovered from the effective Hamiltonian. \emph{Second}, let us compare the present approach to derive $\MSH_\seff$ with the previous study \cite{DelleSite2024jpa}. In the latter, a Hamiltonian of the form $H_{\SFS, n} - n \id$ was extracted from the hierarchy $\ii \hbar \dot{\rho}_n = [H_{\SFS, n} - n \id, \rho_n]$ of von Neumann equations for $n$-particle density operators $\rho_n$, $n \in \N$. To obtain this hierarchy, the framework of bounded operators was used, the surface-to-volume ratio approximation in its heuristic physical form \eqref{eq:heuristicSTVR} and the Born-Markov approximation were employed, the reservoir degrees of freedom were traced out of the von Neumann equation for the total system, and an expansion like \eqref{eq:expansionER} was performed. Thus, in comparison, our new approach does not require the von Neumann equation, the involved operators are not assumed to be bounded, the effective Hamiltonian is derived in Fock space, and the approximations \eqref{eq:STVR} and \eqref{eq:expansionER} are more controlled.

\subsection{Grand-canonical von Neumann equation}

As an application of \cref{thm:effectiveHamiltonian}, we show that the effective Hamiltonian \eqref{eq:effectiveHamiltonian} leads to a first-order von Neumann equation, which was formally derived in a similar form in Ref.~\cite{DelleSite2024jpa}.

\begin{corollary}\label{cor:vonNeumannEquation}
    In the setting of \cref{thm:effectiveHamiltonian}, suppose that the \emph{Born-Markov approximation} can be applied, i.e., for all $t \ge 0$ the time-dependent density operator $\rho(t)$ of the total system may be written as $\rho(t) = \rho_1(t) \otimes \rho_2 (0)$ for $\rho_1(t) \in \DM(\MH_\SFS)$ and $\rho_2(0) \in \DM(\MH_\SFR)$. Then the time evolution of $\rho_1(t)$ is governed by the following effective von Neumann equation:
    \begin{equation*}
        \ii \hbar \, \od{\rho_1}{t} = \bigl[ \MSH - \mu \MSN, \rho_1 \bigr] + \MO (\varepsilon^2) \ .
    \end{equation*}
    
    In particular, if $\ee^{- \beta (\MSH - \mu \MSN)} \in \BO_1(\MH_\SFS)$ is trace-class, then a stationary solution of the effective first-order von Neumann equation $\ii \hbar \, \dot{\rho}_1 = [\MSH - \mu \MSN, \rho_1]$ is given by the grand canonical density operator
    \begin{equation*}
        \rho_\mathrm{gc} = \frac{1}{Z_\mathrm{gc}} \, \ee^{- \beta (\MSH - \mu \MSN)} \quad \text{with} \quad Z_\mathrm{gc} = \tr_{\MH_\SFS} \bigl( \ee^{- \beta (\MSH - \mu \MSN)} \bigr) \ .
    \end{equation*}
\end{corollary}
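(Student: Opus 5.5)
The plan is to start from the von Neumann equation of the total system, $\ii \hbar \, \dot\rho(t) = [H_\SFT, \rho(t)]$, and to substitute the first-order replacement of $H_\SFT$ supplied by \cref{thm:effectiveHamiltonian}. By \eqref{eq:effectiveHamiltonian} this gives
\begin{equation*}
    \ii \hbar \, \od{\rho}{t} = \Bigl[ (\MSH - \mu \MSN) \otimes \id_\SFR, \rho(t) \Bigr] + \MO(\varepsilon^2) \ .
\end{equation*}
We then insert the Born-Markov ansatz $\rho(t) = \rho_1(t) \otimes \rho_2(0)$. Because $\rho_2(0)$ is time independent, the left-hand side becomes $\ii\hbar \, \dot\rho_1(t) \otimes \rho_2(0)$. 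On the right-hand side, the commutator of an operator of the form $A \otimes \id_\SFR$ with $\rho_1 \otimes \rho_2(0)$ factorizes as $[A, \rho_1] \otimes \rho_2(0)$.

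The second step is to take the partial trace $\tr_\SFR$ of both sides. Since $\tr_\SFR(X \otimes \rho_2(0)) = X \, \tr(\rho_2(0)) = X$, this yields the claimed effective equation for $\rho_1$. The error term remains $\MO(\varepsilon^2)$ after the partial trace, because $\tr_\SFR$ is a contraction in trace norm. To justify the computations, we would interpret them in the weak sense, or on the dense domain where $\rho_1(t)$ maps into $\dom(\MSH - \mu\MSN)$. Here we use \cref{lem:partialTrace} together with the compatibility of $\rho$ with $\MSH \otimes \id_\SFR$ and $\MSN \otimes \id_\SFR$, which holds because $\rho \in \DM_\mathrm{adm}^{(2)}(\MH_\SFT)$.

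The main obstacle is rigor at this step. \cref{thm:effectiveHamiltonian} is an identity at the level of expectation values up to $\MO(\varepsilon^2)$, not an operator identity. Commutators with unbounded operators also require domain care. I would handle this by working with the Heisenberg form $\ii\hbar\,\od{}{t}\tr(B\rho_1) = \tr\bigl([B, \MSH - \mu\MSN]\rho_1\bigr)$ for bounded observables $B$ on $\MH_\SFS$. Alternatively, one can read the approximate equation as the statement that the dynamics is generated, to first order, by the unique Hamiltonian identified in the theorem. Within the paper's first-order framework, the second reading is the natural one.

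For the stationary solution, set $K \ce \MSH - \mu\MSN$. This operator is self-adjoint, since $\MSH$ and $\MSN$ are direct sums over the same decomposition $\MF(\MH)$, so $K = \bigoplus_n (H_{\SFS,n} - \mu n)$. Its exponential $\ee^{-\beta K}$ is defined by the functional calculus and is trace class by hypothesis, so $Z_\mathrm{gc} \in (0, \infty)$ and $\rho_\mathrm{gc}$ is a density operator. By the functional calculus, $\rho_\mathrm{gc}$ commutes with the unitary group $\ee^{-\ii tK/\hbar}$. Hence the solution $\rho_1(t) = \ee^{-\ii tK/\hbar}\rho_\mathrm{gc}\,\ee^{\ii tK/\hbar}$ equals $\rho_\mathrm{gc}$ for all $t$. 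Equivalently, $[K, \rho_\mathrm{gc}] = 0$ on $\dom(K)$, because $\rho_\mathrm{gc}$ maps $\dom(K)$ into itself and $K\rho_\mathrm{gc} = \rho_\mathrm{gc}K$ there. Thus $\rho_\mathrm{gc}$ is stationary.
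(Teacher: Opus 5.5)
Your argument has the same skeleton as the paper's proof: replace $H_\SFT$ by $K\otimes\id_\SFR$ with $K=\MSH-\mu\MSN$, insert the product ansatz, factor the commutator, and remove the reservoir factor. Three technical devices differ, and one of your steps needs correcting.

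\emph{Removing the reservoir factor.} The paper does this vector-wise. It evaluates both sides on $\xi\otimes\eta$ with $\xi\in\dom(K)\cap\dom(K\rho_1(t))$ and cancels the nonzero factor $\rho_2(0)\eta$. You apply $\tr_\SFR$ instead. That requires $[K,\rho_1]\otimes\rho_2(0)$ to be trace class, which is not automatic for unbounded $K$. So your ``dense domain'' fallback is really the paper's cancellation, and the partial trace alone does not do the job. Also, \cref{lem:partialTrace} concerns expectation values, not commutators, and is not needed: for product operators, $\tr_\SFR(X\otimes\rho_2(0))=X$ is elementary. On the positive side, your trace-norm contraction remark explains why the error stays $\MO(\varepsilon^2)$, which the paper leaves implicit.

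\emph{Factoring the commutator.} The paper passes through spectral truncations $(H_\SFT)_m=(\MSH_\seff)_m\otimes\id_\SFR$ and \cref{lem:functionalCalculusTP}. Your direct observation is more elementary and suffices: by construction, $K\otimes\id_\SFR$ acts as $K\xi\otimes\eta$ on $\dom(K)\odot\MH_\SFR$.

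\emph{Stationarity.} Here the arguments differ most. The paper goes from $\dot\rho_1=0$ to $[K,\rho_1]=0$ and then appeals to the canonical-ensemble derivation to reach $\rho_\mathrm{gc}$. That is the converse direction, and commutation alone cannot deliver it, since every bounded Borel function of $K$ commutes with $K$. Your functional-calculus check that $\rho_\mathrm{gc}$ commutes with $\ee^{-\ii tK/\hbar}$ proves exactly what the statement asserts.

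One small precision: on $\dom(\MSH)\cap\dom(\MSN)$, the operator $\MSH-\mu\MSN$ is in general only essentially self-adjoint. Its closure is $\bigoplus_n(H_{\SFS,n}-\mu n)$, and $\ee^{-\beta K}$ refers to that closure. Finally, your remark that \cref{thm:effectiveHamiltonian} is only established at the level of expectation values is accurate. The paper tacitly adopts your second reading and treats it as an operator identity, to the point of identifying spectral measures.
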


\begin{proof}
    Let $\xi \in \dom(\MSH_\seff) \cap \dom(\MSH_\seff \rho_1(t))$, where $t \ge 0$, and $\eta \in \MH_\SFR$ be arbitrary. Since $H_\SFT = \MSH_\seff \otimes \id_\SFR$ (up to higher-order terms) by \cref{thm:effectiveHamiltonian}, it follows that $\xi \otimes \eta \in \dom(H_\SFT)$ and also $\rho(t) (\xi \otimes \eta) = \rho_1(t) \xi \otimes \rho_2(0) \eta \in \dom(H_\SFT)$. Thus, the following von Neumann equation holds true for the total system $\SFT$:
    \begin{equation}\label{eq:vonNeumannTotal}
        \ii \hbar \, \od{\rho(t)}{t} \, (\xi \otimes \eta) = \bigl[ H_\SFT, \rho(t) \bigr] (\xi \otimes \eta) + \MO (\varepsilon^2) \ .
    \end{equation}
    
    For every $m \in \N$ let $(H_\SFT)_m \ce (\1_{[-m,m]} \cdot \id) (H_\SFT)$, that is, $(H_\SFT)_m = \int_{[-m,m] \cap \sigma(H_\SFT)} \lambda \diff E_{H_\SFT} (\lambda)$, where $E_{H_\SFT}$ is the spectral measure of the self-adjoint operator $H_\SFT$; the latter is given by $E_{H_\SFT} = E_{\MSH_\seff} \otimes \id_\SFR$ because $H_\SFT = \MSH_\seff \otimes \id_\SFR$ \cite[Thm. 3.5 (ii)]{Arai2024}. Since the function $\1_{[-m,m]} \cdot \id$ is bounded, the previous identity implies that $(H_\SFT)_m = (\MSH_\seff)_m \otimes \id_\SFR$, see \cref{lem:functionalCalculusTP} in \cref{app:functionalCalculusTP}. Note that $(\MSH_\seff)_m$ is a bounded operator, and that $\MSH_\seff \psi = \lim_{m \to + \infty} (\MSH_\seff)_m \psi$ for all $\psi \in \dom (\MSH_\seff)$ (analogous statements hold for $H_\SFT$) \cite[Prop. 11.5]{Moretti2017}. Thus, we may rewrite the commutator on the right-hand side of \cref{eq:vonNeumannTotal} as
    \begin{align*}
        \bigl[ H_\SFT, \rho(t) \bigr] (\xi \otimes \eta) &= H_\SFT \rho(t) (\xi \otimes \eta) - \rho(t) H_\SFT (\xi \otimes \eta) \\
        &= \lim_{m \to + \infty} \Bigl( (H_\SFT)_m \rho(t) (\xi \otimes \eta) - \rho(t) (H_\SFT)_m (\xi \otimes \eta) \Bigr) \\
        &= \lim_{m \to + \infty} \Bigl( (\MSH_\seff)_m \rho_1(t) \xi \otimes \rho_2(0) \eta - \rho_1(t) (\MSH_\seff)_m \xi \otimes \rho_2(0) \eta \Bigr) \\
        &= \lim_{m \to + \infty} \Bigl( \bigl( (\MSH_\seff)_m \rho_1(t) \xi - \rho_1(t) (\MSH_\seff)_m \xi \bigr) \otimes \rho_2(0) \eta \Bigr) \\
        &= \bigl[ \MSH_\seff, \rho_1(t) \bigr] \xi \otimes \rho_2(0) \eta \ .
    \end{align*}
    Note that the algebraic manipulations of the tensor product in lines three and four were possible because the involved operators are bounded, and that in the last step continuity of the tensor product was used. With this, the von Neumann equation \eqref{eq:vonNeumannTotal} takes the form
    \begin{equation*}
        \ii \hbar \left( \od{\rho_1(t)}{t} \, \xi \otimes \rho_2(0) \eta \right) = \bigl[ \MSH_\seff, \rho_1(t) \bigr] \xi \otimes \rho_2(0) \eta + \MO (\varepsilon^2) \ .
    \end{equation*}
    Thus, since $\eta \in \MH_\SFR$ was arbitrary and $\rho_2(0) \neq 0$, it follows that for all $\xi \in \dom(\MSH_\seff) \cap \dom(\MSH_\seff \rho_1(t))$ and $t \ge 0$:
    \begin{equation*}
        \ii \hbar \, \od{\rho_1(t)}{t} \, \xi = \bigl[ \MSH_\seff, \rho_1(t) \bigr] \xi + \MO (\varepsilon^2) \ .
    \end{equation*}
    
    If we neglect the higher-order terms and assume that $\rho_1$ is stationary ($\dot{\rho}_1 = 0$), we obtain $\bigl[ (\MSH - \mu \MSN), \rho_1 \bigr] = 0$ (up to terms of order $\varepsilon^2$), which leads to $\rho_1 = \rho_\mathrm{gc}$ by following the same derivation as for the canonical ensemble, with the effective Hamiltonian $\MSH - \mu \MSN$ instead of just $H_\SFS$.
\end{proof}

\begin{remarks}
    \leavevmode
    \begin{enumerate}[env]
        \item The validity of the Born-Markov approximation relies (i) on a negligible interaction between the open system and the reservoir, which is formalized by the surface-to-volume ratio approximation (\cref{pro:STVR}), and (ii) on using a large reservoir that is not influenced by the exchange of particles with the open system, which is contained in our Assumption \ref{enu:A1.1}, suggesting that the reservoir's state can be assumed constant in time (cf. \cref{subsec:discussionEffectiveHamiltonian}). This approximation is standard in the theory of open quantum systems \cite{BreuerPetruccione2007, Manzano2020, Vacchini2024} (see, however, Refs.~\cite[Sect. 5.2.7]{RivasHuelga2012} and \cite[Sect. 2.3.3]{Rivas2010} for critical discussions), and it also fits well into the setting of \cref{thm:effectiveHamiltonian}. Moreover, the Born-Markov approximation was a key modeling assumption in Ref.~\cite{DelleSite2024jpa} for deriving the hierarchy of von Neumann equations from which the effective Hamiltonian was extracted, as discussed in \cref{subsec:discussionEffectiveHamiltonian}. Regarding the error terms, it must be noted that \cref{eq:smallMSF} ensures that the parameter $\varepsilon = N_\SFS / N$ is negligibly small for all statistical realizations of the open system. Therefore, considering only the effective first-order von Neumann equation can be considered a reasonable approximation within the physical model that we are considering.

        \item In a recent paper, we investigated the Lindblad equation for systems with varying particle number \cite{Reible2025pre}. Using the results of Ref.~\cite{DelleSite2024jpa}, we found that the system Hamiltonian should be replaced by the effective Hamiltonian $\MSH - \mu \MSN$ in order to produce a consistent Lindblad equation for such systems. Since \cref{thm:effectiveHamiltonian} justifies the conclusions of Ref.~\cite{DelleSite2024jpa} on a more rigorous level, by extension, the findings of Ref.~\cite{Reible2025pre} are also strengthened.
    \end{enumerate}
\end{remarks}

\section{Conclusion}

We have analyzed the mathematical model for open quantum systems with varying particle number and shown that the effective Hamiltonian describing such systems, under suitable assumptions and approximations, is given by the well-known expression $\MSH - \mu \MSN$. Thus, our findings constitute a proof of an empirical hypothesis used in many areas of physics that goes back at least to Bogoliubov's work in the 1950s \cite{Bogoliubov1958}. To obtain the effective Hamiltonian, we have formulated the surface-to-volume ratio approximation in a precise manner, which entails that if the open system is large enough in a certain sense, then the interaction with the reservoir can be neglected. As a key mathematical tool for establishing this approximation as well as deriving the effective Hamiltonian, we have introduced the notion of density operators compatible with unbounded observables (in the sense of computing expectation values), and we have shown that this notion leads to a satisfying framework, generalizing, for example, a partial trace formula to unbounded operators. Moreover, we have provided a new, physically motivated, and mathematically simple yet rigorous argument that for systems with varying particle number, the Hilbert space must be given by Fock space.

In summary, our findings justify some widely used tools of the trade of statistical mechanics in a solid mathematical way and, therefore, put the theory of open quantum systems with varying particle number on a more firm theoretical footing. Our results, \emph{a fortiori}, substantiate what has already been done in applications of this theory to the field of quantum technologies, where it was shown, for example, that the chemical potential (and hence the feature of a varying particle number) controls the accessible quantum states of certain simple model systems \cite{Reible2025apq}. Therefore, our work contributes to future-oriented mathematical physics that seeks to base developments at the forefront of physics and technology on mathematically solid foundations.

\appendix
\section{Compatible density operators}\label[appsec]{app:densityOperators}

\subsection{Basic Properties}

Let $T \in \LO(\MH)$ be self-adjoint. In the following, we will prove some auxiliary results about the set of $T$-compatible density operators
\begin{equation*}
    \setlength{\abovedisplayskip}{14pt}
    \setlength{\belowdisplayskip}{14pt}
    \DM(\MH, T) = \left\{ \rho \in \DM(\MH) \ : \ \parbox{24em}{\centering $\exists \, (e_n)_{n \in \N} \subset \dom(T) \ \exists \, (\alpha_n)_{n \in \N} \subset [0,1]:$ \\ (i) $(e_n)_{n \in \N}$ is an ONB for $\MH$ and $\sum_{n \in \N} \alpha_n = 1$; \\ (ii) $\rho = \sum_{n \in \N} \alpha_n P_{e_n}$; (iii) $\sum_{n \in \N} \alpha_n \abs{\braket{e_n, T e_n}} < + \infty$} \right\}
\end{equation*}
and the corresponding notion of expectation value
\begin{equation*}
    \EV_\rho[T] = \tr(T \rho) = \sum_{n=1}^{\infty} \alpha_n \braket{e_n, T e_n}
\end{equation*}
that were introduced in \cref{def:compatibleDM}.

\begin{lemma}
    For $T \in \LO(\MH)$ self-adjoint and $\rho \in \DM(\MH, T)$, the value of $\EV_\rho [T]$ does not depend on the choice of the sequences $(e_n)_{n \in \N}$ and $(\alpha_n)_{n \in \N}$ in the representation $\rho = \sum_{n \in \N} \alpha_n P_{e_n}$.
\end{lemma}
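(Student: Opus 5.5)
The idea is to rewrite $\sum_n \alpha_n \braket{e_n, T e_n}$ in terms of $\rho$ alone, via the spectral decomposition of $\rho$, so that it no longer refers to the chosen representation. Fix two representations
\[
\rho = \sum_n \alpha_n P_{e_n} = \sum_m \beta_m P_{f_m},
\]
both satisfying (i)--(iii) of \cref{def:compatibleDM}. The vectors $e_n$ with $\alpha_n>0$ are eigenvectors of $\rho$ with eigenvalue $\alpha_n$, and so are the $f_m$ with $\beta_m>0$. For each eigenvalue $\lambda>0$ of the compact positive operator $\rho$, the eigenspace $\Eig(\rho,\lambda)$ is finite-dimensional. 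Both $\{e_n : \alpha_n = \lambda\}$ and $\{f_m : \beta_m = \lambda\}$ are orthonormal bases of it. Indeed, applying $\rho$ to a vector orthogonal to all $e_n$ with $\alpha_n=\lambda$ shows that the eigenspace is spanned by exactly these vectors. Indices with $\alpha_n = 0$ contribute nothing to the sum, so they can be discarded.

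It then suffices to show, for each eigenvalue $\lambda>0$, that
\[
\sum_{\alpha_n=\lambda} \braket{e_n, T e_n} = \sum_{\beta_m=\lambda} \braket{f_m, T f_m}.
\]
This is the trace of the compression $P T P$, where $P$ is the projection onto the finite-dimensional space $E_\lambda \ce \Eig(\rho,\lambda)$. Here there is a domain subtlety: all the vectors $e_n,f_m$ with eigenvalue $\lambda$ lie in $\dom(T)$ by assumption. Hence $E_\lambda \subset \dom(T)$, since it is a finite span of such vectors, and $T$ restricted to $E_\lambda$ is a genuine linear map $E_\lambda \to \MH$. We therefore get a well-defined sesquilinear form $q(\xi,\eta) \ce \braket{\xi, T\eta}$ on the finite-dimensional space $E_\lambda$. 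The quantity $\sum_k q(u_k,u_k)$ over an orthonormal basis $(u_k)$ of $E_\lambda$ is basis-independent. To see this, write $f_m = \sum_n U_{nm} e_n$ with $U$ unitary, expand by sesquilinearity, and use $\sum_m \overline{U_{nm}} U_{n'm} = \delta_{nn'}$. This is finite linear algebra, so no convergence issues arise.

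Summing over $\lambda$, the sum $\sum_n \alpha_n \braket{e_n,Te_n}$ regroups as $\sum_{\lambda} \lambda \sum_{\alpha_n=\lambda} \braket{e_n,Te_n}$. This regrouping is legitimate because of the absolute convergence in (iii): a reordering and grouping of an absolutely convergent series does not change its value. The same regrouping applies to the $f$-representation, and by the previous step the two grouped sums agree term by term. This proves that $\EV_\rho[T]$ is independent of the representation.

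The main obstacle is the identification of the eigenspaces, i.e., showing that each representation's vectors with coefficient $\lambda$ exactly exhaust an orthonormal basis of $\Eig(\rho,\lambda)$. In particular, one must check that no eigenvector is missed and that equal coefficients are correctly matched across the two representations. This follows from orthonormality and completeness of $(e_n)$: since $\rho e_k = \alpha_k e_k$, expanding an eigenvector $\xi$ in the basis gives $\alpha_k\braket{e_k,\xi} = \lambda \braket{e_k,\xi}$, which forces $\braket{e_k,\xi}=0$ whenever $\alpha_k\ne\lambda$.
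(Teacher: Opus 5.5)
Your proof is correct. It rests on the same underlying fact as the paper's, namely that the coefficients $\braket{e_n, f_m}$ vanish unless $\alpha_n = \beta_m$, but you organize the argument differently.

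The paper proceeds by direct computation. It inserts the completeness relation of $(f_m)$ into $\braket{e_n, T e_n}$ and uses the symmetry of $T$ to write $\braket{f_m, T e_n} = \braket{T f_m, e_n}$. It then interchanges the double series and recognizes $\sum_n \alpha_n \braket{e_n, f_m} e_n = \rho f_m = \beta_m f_m$. The interchange is justified by noting $(\alpha_n - \beta_m)\braket{e_n, f_m} = 0$, so that each inner sum is finite.

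You instead split $\MH$ along the eigenspaces $\Eig(\rho, \lambda)$, $\lambda > 0$. You then reduce to basis-independence of the trace of the form $(\xi,\eta) \mapsto \braket{\xi, T\eta}$ on a finite-dimensional subspace of $\dom(T)$, and reassemble by regrouping.

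The paper's version is a shorter computation, but yours buys two things. First, it makes explicit that the absolute convergence in (iii) of \cref{def:compatibleDM}, for both representations, is what permits rearranging the series into eigenvalue blocks. Finiteness of each inner sum does not by itself justify swapping the order of an infinite double series. The paper's brief justification therefore tacitly needs the block-plus-absolute-convergence argument that you spell out.

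Second, your finite-dimensional step never uses the symmetry of $T$. It thus proves representation-independence for any linear $T$ whose domain contains the basis vectors. The paper's computation, by contrast, needs self-adjointness to move $T$ onto $f_m$.
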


\begin{proof}
    Let $(f_m)_{m \in \N} \subset \dom(T)$ be another orthonormal basis for $\MH$ and $(\beta_m)_{m \in \N} \subset [0,1]$ be a sequence with $\sum_{m \in \N} \beta_m = 1$ such that $\rho = \sum_{m \in \N} \beta_m P_{f_m}$ and $\sum_{m \in \N} \beta_m \abs{\braket{f_m, T f_m}} < + \infty$. Using completeness of the basis $(f_m)_{m \in \N}$ and self-adjointness of $T$, it follows that
    \begin{align*}
        \sum_{n=1}^{\infty} \alpha_n \braket{e_n, T e_n} &= \sum_{n=1}^{\infty} \sum_{m=1}^{\infty} \alpha_n \braket{e_n, f_m} \braket{f_m, T e_n} = \sum_{n=1}^{\infty} \sum_{m=1}^{\infty} \alpha_n \braket{e_n, f_m} \braket{T f_m, e_n} \\
        &= \sum_{m=1}^{\infty} \braket[\Big]{T f_m, {\textstyle\sum_{n=1}^{\infty}} \alpha_n \braket{e_n, f_m} \, e_n} = \sum_{m=1}^{\infty} \braket{T f_m, \rho f_m} \\
        &= \sum_{m=1}^{\infty} \beta_m \braket{f_m, T f_m} \ .
    \end{align*}
    We note that in the second line, it was possible to interchange the two series because the sum over $m$ is finite: it holds that $\alpha_n \braket{e_n, f_m} = \braket{\rho e_n, f_m} = \braket{e_n, \rho f_m} = \beta_m \braket{e_n, f_m}$ by assumption, hence $(\alpha_n - \beta_m) \braket{e_n, f_m} = 0$ for all $n, m \in \N$. Thus, whenever $\braket{e_n, f_m} \neq 0$ we must have $\alpha_n = \beta_m$, so the sum $\sum_{m \in \N} \alpha_n \braket{e_n, f_m} \braket{T f_m, e_n}$ actually extends only over the set $I_n \ce \set{m \in \N : \beta_m = \alpha_n}$. As there are at most finitely many $m$ for which $\beta_m = \alpha_n$ (otherwise, $\sum_{m \in \N} \beta_m$ would not converge), it follows that $I_n$ is finite.
\end{proof}

\begin{lemma}\label{lem:existenceCompatibleDM}
    Let $T \in \LO(\MH)$ be self-adjoint and assume that for all $\beta > 0$, $\ee^{- \beta T} \in \BO_1(\MH)$ is a trace-class operator. Then $\DM(\MH, T)$ contains mixed states.
\end{lemma}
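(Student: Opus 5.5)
The natural candidate is the Gibbs state $\rho_\beta \ce \ee^{-\beta T}/\tr(\ee^{-\beta T})$ for a fixed $\beta > 0$, and I will show that it lies in $\DM(\MH, T)$ and is not pure.

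\emph{Spectral structure.} Since $\ee^{-\beta T}$ is trace-class, it is compact, positive and injective (its kernel is trivial because $\ee^{-\beta T}$ has bounded inverse $\ee^{\beta T}$ on its range, or directly by the functional calculus). Compactness of $\ee^{-\beta T} = f(T)$ with $f(\lambda) = \ee^{-\beta\lambda}$ strictly monotone forces $\sigma(T)$ to be purely discrete. Concretely, the spectral projection $E_T((-\infty, c])$ has range inside the spectral subspace of $\ee^{-\beta T}$ for $[\ee^{-\beta c}, \infty)$, which is finite-dimensional. Hence $T$ is bounded below and has an orthonormal basis of eigenvectors $(e_n)_{n\in\N} \subset \dom(T)$ with eigenvalues $\lambda_n \to +\infty$ (repeated by multiplicity). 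Then
\begin{equation*}
    \rho_\beta = \sum_n \alpha_n P_{e_n}, \qquad \alpha_n = \frac{\ee^{-\beta\lambda_n}}{Z}, \qquad Z = \sum_n \ee^{-\beta\lambda_n} < +\infty,
\end{equation*}
which gives conditions (i) and (ii) of \cref{def:compatibleDM}. If $\MH$ is finite-dimensional, the argument is trivial, so assume $\dim\MH = \infty$.

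\emph{Condition (iii).} This is the main step. We need $\sum_n \ee^{-\beta\lambda_n}\abs{\lambda_n} < \infty$, and this is where the hypothesis for \emph{all} $\beta$ is used. Since $\lambda_n \to \infty$, only finitely many $\lambda_n$ are negative, so it suffices to bound the positive terms. The elementary inequality $\lambda\,\ee^{-\beta\lambda/2} \le \frac{2}{\ee\beta}$ for $\lambda \ge 0$ gives
\begin{equation*}
    \lambda_n \ee^{-\beta\lambda_n} \le \frac{2}{\ee\beta}\,\ee^{-\beta\lambda_n/2},
\end{equation*}
and the right-hand side is summable because $\ee^{-(\beta/2)T}$ is trace-class. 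Hence $\rho_\beta \in \DM(\MH,T)$.

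\emph{Mixedness.} All $\alpha_n > 0$, and infinitely many of them (or at least two, in the finite-dimensional case with $\dim\MH \ge 2$) are nonzero. So $\rho_\beta$ is not a rank-one projection, i.e.\ it is not pure. Indeed, $\tr(\rho_\beta^2) = \sum_n \alpha_n^2 < 1$.

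The only delicate point I expect is the justification that trace-class $\ee^{-\beta T}$ forces a discrete spectrum with an eigenbasis in $\dom(T)$. I would argue this via the functional calculus as indicated above: eigenvectors of $\ee^{-\beta T}$ for eigenvalue $\ee^{-\beta\lambda}$ are exactly the eigenvectors of $T$ for $\lambda$, by injectivity of $f$.
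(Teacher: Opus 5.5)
Your proof is correct and is essentially the paper's. Both use the Gibbs state $\rho_\beta$ written in an eigenbasis of $T$, with condition (iii) following from $\lambda\,\ee^{-\beta\lambda} \le \frac{C}{\beta}\,\ee^{-\beta\lambda/2}$ and the trace-class property of $\ee^{-(\beta/2)T}$. The differences are cosmetic: you discard the finitely many negative eigenvalues instead of shifting $T$ by a lower bound, and you prove discreteness of $\sigma(T)$ and non-purity of $\rho_\beta$ explicitly where the paper cites or leaves them implicit. One small slip: $\ee^{\beta T}$ is not a bounded inverse when $T$ is unbounded above, but injectivity follows from the functional calculus as you also say, and it is not actually needed.
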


\begin{proof}
    First, recall the well-known fact that $\ee^{- \beta T}$ being trace-class implies that $T$ is lower semi-bounded \cite[Prop. 1.46]{Arai2024}, and that $T$ must have a purely discrete spectrum \cite[p. 597]{Moretti2017}, i.e., $\sigma(T)$ consists of isolated eigenvalues only and all eigenspaces are finite-dimensional (see also Ref.~\cite[Prop. 2.2]{Blakaj2025} for an explicit proof of these two assertions). Let $c \in \R$ be a lower bound for $T$ and define the operator $\wt{T} \ce T - c \id_\MH$. It is evident that $\wt{T}$ is positive and self-adjoint; one can show that $\wt{T}$ also has a purely discrete spectrum \cite[p. 210]{Simon1983}. Therefore, upon replacing $T$ by $\wt{T}$, we may assume, without loss of generality, that $T$ is positive.

    Since $T$ has a purely discrete spectrum, there exists a sequence $(\gamma_n)_{n \in \N} \subset [0, + \infty)$ and an orthonormal basis $(e_n)_{n \in \N}$ of $\MH$ such that $\gamma_n \to + \infty$ as $n \to + \infty$ and $T e_n = \gamma_n e_n$ for all $n \in \N$ \cite[pp. 394 f.]{Lieb1972}. Moreover, it follows from functional calculus that the assumption $\tr (\ee^{- \beta T}) < + \infty$, $\beta > 0$, is equivalent to
    \begin{equation*}
        Z \ce \sum_{n=1}^{\infty} \ee^{- \beta \gamma_n} < + \infty \ .
    \end{equation*}
    Observe that for all $x \in \R$, there holds\footnote{This is a modified version of an inequality stated in a similar context in Ref.~\cite[p. 396]{Lieb1972}.} $x \, \ee^{- \beta x} \le \frac{2}{\beta} \, \ee^{- \beta x / 2}$. In fact, from $x \le \ee^x$ it follows that $x \, \ee^{-x} \le 1$, and hence, by replacing $x$ with $\frac{\beta}{2} \, x$, also $\frac{\beta}{2} \, x \, \ee^{- \beta x / 2} \le 1$. Therefore, setting $\alpha_n \ce \ee^{- \beta \gamma_n} / Z$ ($n \in \N$), we can estimate
    \begin{align*}
        \sum_{n=1}^{\infty} \alpha_n \abs{\braket{e_n, T e_n}} = \frac{1}{Z} \sum_{n=1}^{\infty} \gamma_n \ee^{- \beta \gamma_n} \le \frac{2}{\beta Z} \sum_{n=1}^{\infty} \ee^{- \beta \gamma_n / 2} < + \infty \ .
    \end{align*}
    Defining the density operators $\rho_\beta \ce \sum_{n \in \N} \alpha_n P_{e_n}$, $\beta > 0$, the above shows that $\rho_\beta \in \DM(\MH, T)$; hence, the latter contains mixed states.
\end{proof}

\begin{remark}
    For a large class of potentials $V$ it can be shown that the self-adjoint Schrödinger operator $H = H_0 + V$, defined on a bounded region of $\R^n$, has the property that $\ee^{- \beta H}$ is a trace-class operator \cite[Appendix B, Thm. 8]{Lieb1972}, \cite[Thm. XIII.76]{RS4}. Therefore, the assumption of \cref{lem:existenceCompatibleDM} is justified.
\end{remark}

In the context of the set of admissible density operators defined in \cref{eq:admissibleDM}, it proves useful to observe the following property of the set of compatible density operators.

\begin{lemma}\label{lem:intersectionCompatibleDM}
    Let $M \in \N$ and $\set{T_i}_{i=1, \dotsc, M} \subset \LO(\MH)$ be self-adjoint operators on a Hilbert space $\MH$. Define the operator $T \ce \sum_{i=1}^{M} T_i$ on its natural domain $\dom(T) \ce \bigcap_{i=1}^{M} \dom(T_i)$ and assume that $T$ is self-adjoint. Then
    \begin{equation*}
        \bigcap_{i=1}^{M} \DM (\MH, T_i) \subset \DM (\MH, T) \quad \text{and} \quad \EV_\rho [T] = \sum_{i=1}^{M} \EV_\rho [T_i] \ .
    \end{equation*}
\end{lemma}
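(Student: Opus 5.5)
The plan is to build a \emph{single} orthonormal basis adapted to $\rho$ that lies in $\dom(T) = \bigcap_{i=1}^M \dom(T_i)$ and for which condition (iii) of \cref{def:compatibleDM} holds for $T$. Additivity of the expectation values then follows from linearity and the representation-independence lemma proved above.

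Fix $\rho \in \bigcap_i \DM(\MH, T_i)$. For each $i$, choose a representation $\rho = \sum_n \alpha_n^{(i)} P_{e_n^{(i)}}$ with $(e^{(i)}_n)_n \subset \dom(T_i)$. I treat the range and the kernel of $\rho$ separately.

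\emph{Range of $\rho$.} Every nonzero eigenvalue $\alpha$ of $\rho$ has a finite-dimensional eigenspace $\Eig(\rho, \alpha)$. As in the proof of the representation-independence lemma, this eigenspace equals $\operatorname{span}\set{e_n^{(i)} : \alpha_n^{(i)} = \alpha}$ for every $i$. This is a finite span of vectors in $\dom(T_i)$, so it lies in $\dom(T_i)$. Hence $\Eig(\rho,\alpha) \subset \dom(T)$.

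Within each such eigenspace I pick any orthonormal basis $(f_k)$. Since the eigenspace is finite-dimensional and $f_k = \sum_n c_{kn} e_n^{(i)}$ with $\alpha_n^{(i)} = \alpha$, a unitary change of basis gives
\begin{equation*}
    \sum_{k} \braket{f_k, T_i f_k} = \sum_{n \, : \, \alpha_n^{(i)} = \alpha} \braket{e_n^{(i)}, T_i e_n^{(i)}} \ ,
\end{equation*}
because a trace is invariant under unitary change of basis on a finite-dimensional space. For the absolute values, I will use
\begin{equation*}
    \abs{\braket{f_k, T_i f_k}} \le \sum_{n,m} \abs{c_{kn}} \, \abs{c_{km}} \, \abs{\braket{e_n^{(i)}, T_i e_m^{(i)}}} \ .
\end{equation*}
Here I must be careful, since absolute summability is not basis-invariant a priori. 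The cleaner route is to first diagonalize the finite Hermitian matrix $P_\alpha T P_\alpha$ restricted to $\Eig(\rho,\alpha)$ and choose $(f_k)$ as its eigenbasis. For the absolute bound, split each $T_i$ on the finite-dimensional space into its positive and negative parts; then $\sum_k \abs{\braket{f_k, T_i f_k}}$ is bounded by the trace norm of the compression, which in turn is controlled by $\sum_n \abs{\braket{e_n^{(i)}, T_i e_n^{(i)}}}$ (up to using a basis where the compression is diagonal). Summing $\alpha \sum_k \abs{\braket{f_k, T f_k}} \le \sum_i \alpha \sum_k \abs{\braket{f_k, T_i f_k}}$ over $\alpha > 0$ then gives (iii) for $T$ on the range part. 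The same computation also gives $\sum_{\alpha} \alpha \sum_k \braket{f_k, T f_k} = \sum_i \EV_\rho[T_i]$.

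\emph{Kernel of $\rho$.} The vectors spanning $\ker\rho$ carry weight $0$, so they do not enter (iii) or the value of $\EV_\rho$. They only need to lie in $\dom(T)$, and this is the step I expect to be the main obstacle. For each $i$, $\ker \rho \cap \dom(T_i)$ is dense in $\ker\rho$, since it contains an orthonormal basis of it. However, density of the intersection over $i$ is not automatic. My first attempt is as follows. Let $Q = \id - P_{\oln{\ran\rho}}$. If $P_{\oln{\ran\rho}}$ maps $\dom(T)$ into itself, then $Q\,\dom(T)$ is a dense subspace of $\ker\rho$ contained in $\dom(T)$. In that case, Gram--Schmidt applied to a countable dense subset (using separability) yields an orthonormal basis of $\ker\rho$ inside $\dom(T)$. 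This invariance certainly holds when $\rho$ has finite rank, because then $P_{\oln{\ran\rho}}$ is a finite sum of projections onto vectors in $\dom(T)$. For infinite rank, I would try to exploit that $\sum_\alpha \alpha \abs{\braket{f,Tf}}$ converges in order to control $\norm{T P_{\oln{\ran\rho}} \xi}$; failing that, I would note that the weight-zero vectors may be chosen freely, so the definition could be read modulo $\ker\rho$.

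Finally, combining the two parts gives an orthonormal basis in $\dom(T)$ representing $\rho$ with (iii) satisfied, so $\rho \in \DM(\MH,T)$. The value of $\EV_\rho[T]$ computed in this basis equals $\sum_i \EV_\rho[T_i]$ by the range computation above, and by the representation-independence lemma this value does not depend on the chosen basis.
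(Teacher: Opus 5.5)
The paper's proof applies the triangle inequality in a single orthonormal basis $(e_n)\subset\bigcap_i\dom(T_i)$ that represents $\rho$ and satisfies (iii) for every $T_i$ at once. You are right that membership in each $\DM(\MH,T_i)$ separately does not supply such a basis, and your plan is to build one. Two steps of that construction fail.

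\textbf{The range part.} You claim that the trace norm $\norm{P_\alpha T_iP_\alpha}_1$ is controlled by $\sum_n\abs{\braket{e^{(i)}_n,T_ie^{(i)}_n}}$. This is false. The trace norm of a Hermitian matrix is not bounded by the absolute diagonal sum in a basis you do not get to choose (think of $\begin{pmatrix}0&1\\1&0\end{pmatrix}$). Moreover, the eigenbasis of the compression of $T$ is precisely the basis that \emph{maximizes} $\sum_k\abs{\braket{f_k,Tf_k}}$.

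Concretely, take $T_2=0$ and $\rho=\sum_{m\ge1}2^{-m-1}(P_{u_m}+P_{v_m})$ for an orthonormal basis $(u_m,v_m)_m$. Let $T_1$ act as $4^m\begin{pmatrix}0&1\\1&0\end{pmatrix}$ on $\operatorname{span}\{u_m,v_m\}$. The basis $(u_m,v_m)$ has zero diagonal, but your eigenbasis $(u_m\pm v_m)/\sqrt2$ gives $\sum_m 2^{-m-1}\cdot2\cdot4^m=\infty$.

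The repair is to choose, in each $E_\alpha\ce\Eig(\rho,\alpha)$, an orthonormal basis in which the compression of $T$ has constant diagonal $\tau_\alpha/\dim E_\alpha$, where $\tau_\alpha$ is its trace. You get one by applying a discrete Fourier matrix to an eigenbasis. Then $\sum_k\abs{\braket{f_k,Tf_k}}=\abs{\tau_\alpha}\le\sum_i\sum_{n:\,\alpha^{(i)}_n=\alpha}\abs{\braket{e^{(i)}_n,T_ie^{(i)}_n}}$ by basis invariance of the finite-dimensional trace. Summing with weights $\alpha$ gives (iii) and the additivity.

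\textbf{The kernel part.} This step is not just left open; it cannot be closed, because the inclusion fails in general when $\rho$ has infinite rank and infinite-dimensional kernel. Here is a counterexample.
\begin{itemize}
\item Let $G\ge0$ be compact and injective on a separable infinite-dimensional space $\MH_0$. Then $\ran(G)$ is dense and not closed. By von Neumann's theorem (every dense non-closed operator range $R$ admits a unitary $W$ with $R\cap WR=\{0\}$), pick a unitary $W$ with $\ran(G)\cap W\ran(G)=\{0\}$.
\item Let $A\ce G^{-1}$, which is self-adjoint with $\dom(A)=\ran(G)$. Let $B$ be self-adjoint with $\dom(B)=\ran(GWG)=G\bigl(W\ran(G)\bigr)$, for instance $B=\abs{(GWG)^\ast}^{-1}$.
\item On $\MH=\MH_0\oplus\MH_0$ set $T_1=A\oplus0$ and $T_2=0\oplus B$. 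Then $T=A\oplus B$ is self-adjoint on $\dom(T_1)\cap\dom(T_2)$.
\item For the closed subspace $\mathcal{L}=\{(x,Gx):x\in\MH_0\}$ one gets $\mathcal{L}\cap\dom(T_1)=\{(x,Gx):x\in\ran(G)\}$ and $\mathcal{L}\cap\dom(T_2)=\{(x,Gx):x\in W\ran(G)\}$. Both are dense in $\mathcal{L}$, whereas $\mathcal{L}\cap\dom(T)=\{0\}$ by injectivity of $G$.
\item The complement $\mathcal{L}^\perp=\{(-Gv,v):v\in\MH_0\}$ meets $\dom(T)$ in the dense set $\{(-Gv,v):v\in\dom(B)\}$.
\item Take an orthonormal basis $(k_n)\subset\dom(T)$ of $\mathcal{L}^\perp$ and set $\rho=\sum_n\alpha_nP_{k_n}$ with $\alpha_n>0$ decaying fast enough. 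Completing $(k_n)$ by an orthonormal basis of $\mathcal{L}$ inside $\dom(T_1)$, resp.\ $\dom(T_2)$, with weight $0$ shows $\rho\in\DM(\MH,T_1)\cap\DM(\MH,T_2)$.
\item But $\rho\notin\DM(\MH,T)$: every representing basis must contain unit vectors of $\ker\rho\cap\dom(T)=\mathcal{L}\cap\dom(T)=\{0\}$.
\end{itemize}

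So the statement itself needs an extra hypothesis. One option is a common representing basis, which is exactly what the paper's proof tacitly assumes. Another is that $\rho$ has finite rank, where your $Q\dom(T)$ argument works. A third is that $\ker\rho$ is finite-dimensional, in which case $\ker\rho\subset\dom(T_i)$ for every $i$. In those cases your argument, with the repair above, does prove the inclusion. Reading the definition \enquote{modulo $\ker\rho$} would change $\DM(\MH,T)$ rather than prove the lemma.
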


\begin{proof}
    Let $\rho \in \bigcap_{i=1}^{M} \DM (\MH, T_i)$ be arbitrary. Then there exist an orthonormal basis $(e_n)_{n \in \N}$ of $\MH$ and a sequence $(\alpha_n)_{n \in \N} \subset [0,1]$ with $\sum_{n \in \N} \alpha_n = 1$ such that (i) $e_n \in \bigcap_{i=1}^{M} \dom(T_i) = \dom(T)$, (ii) $\rho = \sum_{n \in \N} \alpha_n P_{e_n}$, and (iii)
    \begin{equation*}
        \sum_{n=1}^{\infty} \alpha_n \abs[\big]{\braket{e_n, T_i e_n}} < + \infty
    \end{equation*}
    for all $i \in \set{1, \dotsc, M}$. The first and the third property imply that
    \begin{align*}
        \sum_{n=1}^{\infty} \alpha_n \abs[\big]{\braket{e_n, T e_n}} = \sum_{n=1}^{\infty} \alpha_n \abs[\Bigg]{\,\sum_{i=1}^{M} \braket{e_n,  T_i e_n}} \le \sum_{i=1}^{M} \sum_{n=1}^{\infty} \alpha_n \abs[\big]{\braket{e_n, T_i e_n}} < + \infty \ ,
    \end{align*}
    where in the last step we were able to interchange the two summations because for each $i$ the sum over $n$ converges. Thus, it follows that $\rho \in \DM (\MH, T)$. The formula for the expectation value can be obtained by the very same reasoning.
\end{proof}

\subsection{Proof of the partial trace formula}\label[appsec]{subsec:proofPT}

In the following, we will provide the proof for \cref{lem:partialTrace} which asserts that for $T \in \LO(\MH_\SFS)$ self-adjoint and $\rho \in \DM(\MH_\SFS \otimes \MH_\SFR, T \otimes \id_\SFR)$, there holds $\tr_\SFR (\rho) \in \DM(\MH_\SFS, T)$ and
\begin{equation*}
    \tr_{\MH_\SFS \otimes \MH_\SFR} \bigl( (T \otimes \id_\SFR) \rho \bigr) = \tr_{\MH_\SFS} \bigl( T \tr_\SFR (\rho) \bigr) \ .
\end{equation*}

\begin{proof}[Proof of \cref{lem:partialTrace}]
    Let $(e_n)_{n \in \N} \subset \MH_\SFS$ and $(f_m)_{m \in \N} \subset \MH_\SFR$ be orthonormal bases and recall that $(e_n \otimes f_m)_{n,m \in \N}$ is an orthonormal basis of $\MH_\SFS \otimes \MH_\SFR$ \cite[p. 50]{RS1}. Assume that (i) $e_n \otimes f_m \in \dom(T \otimes \id_\SFR)$ for all $n, m \in \N$ (which is equivalent to $e_n \in \dom(T)$ for all $n \in \N$), (ii) $\rho = \sum_{n,m \in \N} \alpha_{n,m} P_{e_n \otimes f_m}$ with $\alpha_{n,m} \ge 0$ and $\sum_{n,m \in \N} \alpha_{n,m} = 1$, and (iii)
    \begin{equation*}
        \sum_{n,m \in \N} \alpha_{n,m} \abs[\Big]{\braket[\big]{e_n \otimes f_m, (T \otimes \id_\SFR) (e_n \otimes f_m)}_{\MH_\SFT}} < + \infty \ .
    \end{equation*}
    
    For any $\eta \in \MH_\SFR$ let $U_\eta : \MH_\SFS \to \MH_\SFS \otimes \MH_\SFR$ be given by $U_\eta \xi \ce \xi \otimes \eta$, $\xi \in \MH_\SFS$. Then it holds that the partial trace $\rho_\SFS \ce \tr_\SFR (\rho)$ of $\rho$ can be written as \cite[Eq. (14.5)]{vanNeerven2022}
    \begin{equation*}
        \rho_\SFS = \sum_{m=1}^{\infty} U_{f_m}^\ast \rho U_{f_m} \ ,
    \end{equation*}
    where the adjoint $U_\eta^\ast : \MH_\SFS \otimes \MH_\SFR \to \MH_\SFS$ acts like $U_\eta^\ast (\xi \otimes \eta^\prime) = \braket{\eta, \eta^\prime}_{\MH_\SFR} \xi$ for $\xi \in \MH_\SFS$, $\eta, \eta^\prime \in \MH_\SFR$. Thus, given the spectral decomposition of $\rho$, the partial trace $\rho_\SFS$ satisfies
    \begin{align*}
        \rho_\SFS \xi &= \sum_{m=1}^{\infty} U_{f_m}^\ast \rho (\xi \otimes f_m) = \sum_{m=1}^{\infty} \sum_{n=1}^{\infty} \sum_{r=1}^{\infty} \alpha_{n, r} \braket{e_n \otimes f_r, \xi \otimes f_m}_{\MH_\SFT} U_{f_m}^\ast (e_n \otimes f_r) \\
        &= \sum_{m=1}^{\infty} \sum_{n=1}^{\infty} \sum_{r=1}^{\infty} \alpha_{n, r} \braket{e_n, \xi}_{\MH_\SFS} \braket{f_r, f_m}_{\MH_\SFR} \braket{f_m, f_r}_{\MH_\SFR} e_n = \sum_{n=1}^{\infty} \sum_{m=1}^{\infty} \alpha_{n,m} \braket{e_n, \xi}_{\MH_\SFS} e_n \ .
    \end{align*}
    Note that in the last step, we were able to interchange the two sums according to Fubini's theorem because the double series converges absolutely. Hence, setting $\alpha_n \ce \sum_{m \in \N} \alpha_{n,m}$ for all $n \in \N$ (and noting that $\sum_{n \in \N} \alpha_n = 1$), it follows that $\rho_\SFS$ is given by
    \begin{equation}\label{eq:spectralReprPT}
       \rho_\SFS = \sum_{n=1}^{\infty} \alpha_n P_{e_n} \ .
    \end{equation}
    Observe that by assumption (iii) from above, we have
    \begin{align*}
        \sum_{n=1}^{\infty} \alpha_n \abs{\braket{e_n, T e_n}_{\MH_\SFS}} &= \sum_{n=1}^{\infty} \sum_{m=1}^{\infty} \alpha_{n,m} \abs[\big]{\braket{e_n, T e_n}_{\MH_\SFS} \braket{f_m, f_m}_{\MH_\SFR}} \\
        &= \sum_{n=1}^{\infty} \sum_{m=1}^{\infty} \alpha_{n,m} \abs[\big]{\braket[\big]{e_n \otimes f_m, (T \otimes \id_\SFR) (e_n \otimes f_m)}_{\MH_\SFT}} < + \infty \ .
    \end{align*}
    Since $e_n \in \dom(T)$ as noted above, this shows that $\rho_\SFS = \tr_\SFR (\rho) \in \DM(\MH_\SFS, T)$, which proves the first claim.

    To verify \cref{eq:partialTrace}, we can use the defintion \eqref{eq:trace} of the trace to rewrite the left-hand side of \eqref{eq:partialTrace} and then simply calculate:
    \begin{align*}
        \tr_{\MH_\SFS \otimes \MH_\SFR} \bigl( (T \otimes \id_\SFR) \rho \bigr) &= \sum_{n=1}^{\infty} \sum_{m=1}^{\infty} \alpha_{n,m} \braket[\big]{e_n \otimes f_m, (T \otimes \id_\SFR) (e_n \otimes f_m)}_{\MH_\SFT} \\
        &= \sum_{n=1}^{\infty} \sum_{m=1}^{\infty} \alpha_{n,m} \braket{e_n, T e_n}_{\MH_\SFS} \braket{f_m, f_m}_{\MH_\SFR} = \sum_{n=1}^{\infty} \alpha_n \braket{e_n, T e_n}_{\MH_\SFS} \\
        &= \tr_{\MH_\SFS} \bigl( T \rho_\SFS \bigr) \ ,
    \end{align*}
    where the last step used the definition of the quantity $\tr_{\MH_\SFS} \bigl( T \rho_\SFS \bigr)$ according to \cref{eq:trace} together with the representation \eqref{eq:spectralReprPT} and the fact that $\rho_\SFS \in \DM(\MH_\SFS, T)$.
\end{proof}

\section{Background from convex geometry}\label[appsec]{app:convexGeometry}

Let $d \in \N$ be arbitrary. Denote by $B^d \ce \set{x \in \R^d \, : \, \abs{x} < 1} \subset \R^d$ the $d$-dimensional Euclidean unit ball centered at the origin, and let $\omega_d \ce \vol(B^d)$ be its volume. For two sets $A, B \subset \R^d$ and $\alpha, \beta \in \R$, their Minkowski linear combination is defined as
\begin{equation*}
    \alpha A + \beta B \ce \set[\big]{\alpha x + \beta y \, : \, x \in A, \, y \in B} \ .
\end{equation*}
It holds that $A + B = \bigcup_{a \in A} (a + B) = \bigcup_{b \in B} (A + b)$ \cite[p. 5]{HugWeil2020}. Moreover, it is evident that for all $\varepsilon > 0$ and $a \in A$ the set $\varepsilon B^d$ is given by the ball $B(0, \varepsilon)$ of radius $\varepsilon$, and the set $a + \varepsilon B^d = \set{a + x \, : \, \abs{x} < \varepsilon}$ is equal to the $\varepsilon$-ball $B(a, \varepsilon)$ centered at $a$. 

\begin{definition}\label{def:epsilonNeighborhood}
    Let $A \subset \R^d$ be any set and $\varepsilon > 0$ be a positive number. The \emph{tubular $\varepsilon$-neighborhood} (or: $\varepsilon$-thickening) of $A$ is defined to be the set
    \begin{equation*}
        A_\varepsilon \ce \set[\big]{x \in \R^d \, : \, \dist(x, A) < \varepsilon} \ ,
    \end{equation*}
    where $\dist(x, A) \ce \inf_{a \in A} \abs{x - a}$ is the distance between the point $x$ and the set $A$. That is, $A_\varepsilon$ consists of all those points in $\R^d$ whose distance to $A$ is smaller than $\varepsilon$.
\end{definition}

\begin{remark}
    It is straightforward to show that the $\varepsilon$-neighborhood of $A$ can also be written as $A_\varepsilon = \bigcup_{a \in A} B(a, \varepsilon)$. Therefore, it readily follows that the Minkowski sum $A + \varepsilon B^d$ is equal to the $\varepsilon$-neighborhood of $A$:
    \begin{equation*}
        A + \varepsilon B^d = \bigcup_{a \in A} (a + \varepsilon B^d) = \bigcup_{a \in A} B(a, \varepsilon) = A_\varepsilon \ .
    \end{equation*}
\end{remark}

A cornerstone in convex geometry is the \emph{Minkowski-Steiner formula}, which expresses the volume of the set $A + \varepsilon B^d = A_\varepsilon$, where $A \subset \R^d$ is convex and $\varepsilon > 0$, as a polynomial in the parameter $\varepsilon$. The proof of this result can be found, e.g., in Refs.~\cite[p. 141]{BuragoZalgaller1988}, \cite[Thm. 6.6]{Gruber2007}, and \cite[Thm. 3.10]{HugWeil2020} (see also Ref.~\cite{Trudinger1997} and references therein).

\begin{theorem}[Minkowski-Steiner]\label{thm:MinkowskiSteiner}
    Let $A \subset \R^d$ be an open, bounded, convex set and $\varepsilon > 0$ be arbitrary. Then there are numbers $W_2(A), \dotsc, W_{d-1} (A) \ge 0$ such that
    \begin{equation}\label{eq:MinkowskiSteiner}
        \vol \bigl(A + \varepsilon B^d\bigr) = \vol (A) + \area (\partial A) \, \varepsilon + \sum_{j=2}^{d-1} \binom{d}{j} W_j (A) \, \varepsilon^j + \omega_d \, \varepsilon^d \ .
    \end{equation}
\end{theorem}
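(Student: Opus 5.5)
The statement to be proved is the Minkowski--Steiner formula (\cref{thm:MinkowskiSteiner}). I would first establish that $\varepsilon \mapsto \vol(A + \varepsilon B^d)$ is a polynomial of degree $d$ for convex polytopes, then pass to general open bounded convex sets by approximation, and finally identify the coefficients.

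\emph{Polytopes.} Let $P$ be a convex polytope. I would partition $P + \varepsilon \oln{B^d}$ according to the nearest-point map $p : \R^d \to P$, which is well defined by convexity. For each face $F$ of $P$ of dimension $k$, the points whose nearest point lies in the relative interior of $F$ form the set $\{x + v : x \in \mathrm{relint}\, F,\ v \in N(P,F),\ \abs{v} \le \varepsilon\}$. Here $N(P,F)$ is the normal cone, which is $(d-k)$-dimensional and orthogonal to the affine hull of $F$. Since the decomposition is orthogonal, this set has volume $\lambda^k(F)\,\gamma(F,P)\,\omega_{d-k}\,\varepsilon^{d-k}$, where $\gamma(F,P)$ denotes the normalized solid angle of $N(P,F)$. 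These pieces are disjoint up to null sets and cover $P + \varepsilon \oln{B^d}$. Summing over all faces gives a polynomial in $\varepsilon$ with nonnegative coefficients:
\begin{itemize}
\item the face $F = P$ (with $k = d$) gives the constant term $\vol(P)$;
\item the facets (with $k = d-1$, normal cone a ray, so $\gamma\,\omega_1 = 1$) give the linear coefficient $\sum_F \lambda^{d-1}(F) = \area(\partial P)$;
\item the vertices give the top coefficient $\sum_v \gamma(v,P)\,\omega_d = \omega_d$, because the normal cones at the vertices tile $\R^d$.
\end{itemize}

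\emph{Approximation.} Given an open bounded convex set $A$, I would choose polytopes $P_k$ with $P_k \to \oln{A}$ in the Hausdorff metric. Then $P_k + \varepsilon B^d \to \oln{A} + \varepsilon \oln{B^d}$ in the Hausdorff metric as well. Volume is continuous under Hausdorff convergence on convex bodies, and $\vol(A + \varepsilon B^d) = \vol(\oln{A} + \varepsilon \oln{B^d})$ because the boundary of a convex body is a null set. Hence the polynomials $q_k(\varepsilon) \ce \vol(P_k + \varepsilon B^d)$ converge pointwise on $[0,\infty)$. Since their degrees are at most $d$, pointwise convergence at $d+1$ fixed points forces convergence of the coefficients; in practice I would use Lagrange interpolation. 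The limit is therefore a polynomial whose coefficients $\binom{d}{j} W_j(A) \ge 0$ arise as limits of nonnegative numbers. The constant coefficient tends to $\vol(A)$, and the top coefficient is always $\omega_d$.

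\emph{Identifying the linear coefficient.} The main obstacle is showing that the linear coefficient equals $\area(\partial A)$, the $(d-1)$-dimensional Hausdorff measure. This requires continuity of $\area(\partial P_k) \to \mathcal{H}^{d-1}(\partial A)$ under Hausdorff convergence. I would handle it as follows:
\begin{enumerate}
\item After translating so that $0 \in A$, write $\partial A$ as a radial graph over $S^{d-1}$ via the Minkowski gauge. This map is bi-Lipschitz, since $A$ contains a ball around $0$ and is bounded.
\item Express $\mathcal{H}^{d-1}(\partial A)$ via the area formula for this Lipschitz parametrization.
\item Use the fact that gauges converge uniformly, and that their gradients converge almost everywhere with uniform bounds; the latter follows from convexity.
\item Conclude by dominated convergence.
\end{enumerate}
An alternative route is the Cauchy surface-area formula, which expresses the area as the average of projection volumes; each projection volume is Hausdorff-continuous, so continuity of the area follows directly. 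I would try the Cauchy formula first, because it avoids the Lipschitz-graph technicalities.
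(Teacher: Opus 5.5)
Your proposal is correct for $d \ge 2$, but it follows a different route from the one the paper relies on. The paper gives no proof of its own and cites the literature, e.g.\ \cite[Thm.~6.6]{Gruber2007}. There the formula is the special case $K_1 = A$, $K_2 = B^d$ of Minkowski's theorem that $\vol(\lambda_1 K_1 + \lambda_2 K_2)$ is a homogeneous polynomial whose coefficients are mixed volumes. This is why \cref{rem:MinkowskiSteiner} can define $W_j(A) = V(A,\dotsc,A,B^d,\dotsc,B^d)$, while $W_1(A) = \frac{1}{d}\area(\partial A)$ is imported from separate references.

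You instead use Steiner's classical decomposition of $P + \varepsilon\overline{B^d}$ into orthogonal products of relatively open faces and truncated normal cones. You then pass to general $A$ by polytopal approximation and control the coefficients by Lagrange interpolation. This route is more elementary and self-contained. It gives nonnegativity at once, together with an explicit face-sum (intrinsic-volume) description of the coefficients. It also confronts directly the identification of the linear coefficient with the Hausdorff measure $\area(\partial A)$, which the mixed-volume route hides in the cited identity for $W_1$.

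What your route does not give is the identification of $W_j(A)$ with mixed volumes, and that is exactly what the paper uses afterwards. The bounds $\omega_d \le W_j(A) \le W_1(A)$ for $B^d \subset A$ in \cref{rem:refinedVolumeEstimate} come from monotonicity of mixed volumes in each argument. \cref{exa:MinkowskiSteiner3D} uses $W_2(K) = \frac{1}{3}\int_{S^2} h_K$. Starting from your face sums, both need extra input, e.g.\ the mixed-volume expansion plus uniqueness of polynomial coefficients.

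Two minor points. First, the Cauchy-formula shortcut only relocates the geometric measure theory unless you quote Cauchy's formula for arbitrary convex bodies, since proving it for the Hausdorff measure of $\partial A$ again needs the area formula on a Lipschitz boundary. Second, your bullet list treats facets and vertices as different face classes, which tacitly assumes $d \ge 2$. For $d = 1$ the statement as written is in fact false: $\area(\partial A)\,\varepsilon$ and $\omega_1 \varepsilon$ are the same term and get counted twice.
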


\begin{remarks}\label{rem:MinkowskiSteiner}
    \leavevmode
    \begin{enumerate}[env]
        \item The coefficients $W_j(A)$ in the above polynomial expansion are called \emph{quermassintegrals} of the set $A$. They are defined for all $j \in \set{0, 1, \dotsc, d}$ by
        \begin{equation}\label{eq:quermassintegral}
            W_j(A) \ce V \bigl( \, \underbrace{A, \dotsc, A}_{d - j}, \, \underbrace{B^d, \dotsc, B^d}_{j} \, \bigr) \ ,
        \end{equation}
        with $V (A, \dotsc, A, B^d, \dotsc, B^d)$ being the \emph{mixed volume} of the convex sets $A, \dotsc, A, B^d, \dotsc, B^d$ \cite[Thm. 6.5]{Gruber2007}. The function $V$ is symmetric, linear, continuous with respect to the Hausdorff metric, and non-decreasing in each argument \cite[Sec. 6.3]{Gruber2007}, \cite[Thm. 3.9]{HugWeil2020}.
        
        \item In writing \cref{eq:MinkowskiSteiner}, we have used the following two identities: (i) $W_0 (A) = \vol(A)$ is the volume of $A$,  and (ii) $W_1(A) = \frac{1}{d} \, \area(\partial A)$ is proportional to the surface area of the boundary of $A$ \cite[p. 139]{BuragoZalgaller1988}, \cite[Thm. 3.2.35]{Federer1996}, \cite[Thm. 3.9 (a)]{HugWeil2020}, \cite{Trudinger1997}.
    \end{enumerate}
\end{remarks}

\begin{example}\label{exa:MinkowskiSteiner3D}
    Consider the specific case $d = 3$. Aside from the three terms that are already stated explicitly in the Minkowski-Steiner formula \eqref{eq:MinkowskiSteiner}, the remaining sum only contributes the $j = 2$ term. The corresponding quermassintegral is given by \cite[Eq. (3.16)]{HugWeil2020}
    \begin{equation*}
        W_2 (K) = V (K, B_3, B_3) = \frac{1}{3} \, \int_{S^2} h_K \ ,
    \end{equation*}
    where $h_K(u) \ce \sup_{x \in K} \braket{u, x}$, $u \in S^2$, is the support function of the convex set $K$, $S^2$ denotes the 2-sphere in $\R^3$, and integration is with respect to the two-dimensional Hausdorff measure. With this identity, the Minkowski-Steiner formula reduces to
    \begin{equation*}
        \vol (K_\varepsilon) = \vol (K) + \area (\partial K) \, \varepsilon + \int_{S^2} h_K \, \varepsilon^2 + \omega_3 \, \varepsilon^3 \ .
    \end{equation*}
    Suppose that $0 \in K$. Then $\sup_{x \in K} \norm{x} \le \diam(K)$, where $\diam(K) \ce \sup_{x, y \in K} \norm{x - y}$ denotes the diameter of $K$. This implies $h_K(u) \le \diam(K)$ for all $u \in S^2$, using the Cauchy-Schwarz inequality. Since $\area(S^2) = 3 \omega_3$, it follows that
    \begin{equation*}
        \vol (K_\varepsilon \setminus K) \le \area (\partial K) \, \varepsilon + 3 \omega_3 \, \diam(K) \, \varepsilon^2 + \omega_3 \, \varepsilon^3 \ .
    \end{equation*}
\end{example}

The second result that we need is an estimate of the volume of the inner tubular neighborhood of a set; it is taken from Ref.~\cite[Lem. 2.2]{Farrington2025} (see also Ref.~\cite[Rem. 5.7]{Gittins2020}).

\begin{lemma}[{\protect\cite[Lem. 2.2]{Farrington2025}}]\label{lem:boundThinning}
    Let $A \subset \R^d$ be open, bounded, and convex. For all $\varepsilon > 0$,
    \begin{equation*}
        \vol \bigl(\set{x \in A \, : \, \dist(x, \partial A) \le \varepsilon}\bigr) \le \area (\partial A) \, \varepsilon \ .
    \end{equation*}
\end{lemma}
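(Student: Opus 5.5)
The plan is to write the inner shell as a superposition of level sets of the boundary distance function, and then bound each level set by $\area(\partial A)$ using convexity. Throughout, $\dist$ is taken in $\R^d$.

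\textbf{Step 1: the distance function is concave and 1-Lipschitz.} Let $u(x) \ce \dist(x, \partial A)$ for $x \in \oln{A}$. Since $A$ is open and convex, it is the intersection of its open supporting half-spaces $\set{x : \braket{\nu, x} < h_A(\nu)}$ with $\nu \in S^{d-1}$. For $x \in A$ one has
\begin{equation*}
    u(x) = \inf_{\nu \in S^{d-1}} \bigl( h_A(\nu) - \braket{\nu, x} \bigr) .
\end{equation*}
This holds because the nearest boundary point realizes a supporting hyperplane, and every supporting hyperplane lies at distance at least $u(x)$ from $x$. As an infimum of affine functions, $u$ is concave on $A$. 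It is also $1$-Lipschitz, hence differentiable almost everywhere. Moreover $\abs{\nabla u} = 1$ almost everywhere in $A$: at a point of differentiability, $u$ decreases at unit rate toward a nearest boundary point.

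\textbf{Step 2: coarea formula.} Applying the coarea formula to $u$ on the set $\set{x \in A : u(x) \le \varepsilon}$ gives
\begin{equation*}
    \vol\bigl(\set{x \in A : u(x) \le \varepsilon}\bigr) = \int_0^{\varepsilon} \MH^{d-1}\bigl(\set{x \in A : u(x) = t}\bigr) \diff t .
\end{equation*}
Here $\MH^{d-1}$ is the $(d-1)$-dimensional Hausdorff measure, the paper's $\area$.

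\textbf{Step 3: each level set has area at most $\area(\partial A)$.} Let $r \ce \max_{\oln{A}} u$ be the inradius. The values $t \ge r$ contribute only the single value $t = r$, a null set in $\diff t$, so they can be ignored. For $0 < t < r$, concavity makes the superlevel set $A_t \ce \set{u \ge t}$ a compact convex set. It has nonempty interior $\set{u > t}$, and its boundary is exactly $\set{u = t}$: a concave function cannot have a local maximum value $t$ below its global maximum, so every point with $u = t$ is a limit of points with $u < t$. Since $A_t \subset \oln{A}$ are convex bodies, monotonicity of the surface area applies. This follows from $W_1(K) = \tfrac{1}{d}\area(\partial K)$ together with monotonicity of mixed volumes (\cref{rem:MinkowskiSteiner}); equivalently, it follows from Cauchy's surface area formula. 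Hence $\MH^{d-1}(\set{u = t}) = \area(\partial A_t) \le \area(\partial A)$. Inserting this into the coarea formula gives the bound $\area(\partial A)\,\varepsilon$.

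\textbf{Main obstacle.} The step I expect to need the most care is Step 3. It requires identifying the level set $\set{u = t}$ with the full boundary of the convex body $A_t$, so that the monotonicity of surface area, stated for convex bodies, is legitimately applicable. The points to check are:
\begin{enumerate}[env]
    \item $A_t$ has nonempty interior for $t < r$;
    \item $\set{u = t}$ contains no "flat" interior pieces, which is where concavity enters;
    \item the degenerate value $t = r$, where the maximum set may be lower-dimensional, is irrelevant for the integral.
\end{enumerate}
Applying the coarea formula itself is routine, given that $u$ is Lipschitz with unit gradient almost everywhere.
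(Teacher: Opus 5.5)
The paper gives no proof of this lemma; it quotes it from \cite[Lem.~2.2]{Farrington2025} (see also \cite[Rem.~5.7]{Gittins2020}). Your proof is correct and self-contained, so the comparison is between a citation and an actual argument. The ingredients are:
\begin{itemize}
    \item concavity of $u=\dist(\cdot,\partial A)$ on $A$, as an infimum of affine functions;
    \item $\abs{\nabla u}=1$ almost everywhere;
    \item the coarea formula;
    \item the bound $\area(\partial A_t)\le\area(\partial A)$ for the convex inner parallel bodies $A_t\subset\oln{A}$, with $0<t<r$.
\end{itemize}
Together these give exactly $\area(\partial A)\,\varepsilon$. This coarea argument over inner parallel sets is the standard way such bounds are proved. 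The citation buys brevity. Your version shows where convexity is used: once for the convexity of $A_t$, and once for monotonicity of surface area under inclusion. It also relies only on tools the paper already records in \cref{rem:MinkowskiSteiner}, namely monotonicity of mixed volumes and $W_1=\frac1d\area(\partial\,\cdot)$, plus Rademacher and coarea. Appendix~B could therefore stand on its own.

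One small correction in Step~3: the reason you give for $\partial A_t=\set{u=t}$ is misstated. Suppose some $x$ with $u(x)=t$ were an interior point of $A_t$. Then $u\ge t$ near $x$, so $x$ is a local \emph{minimum} of $u$, not a local maximum. Also, $u$ need not be constant near $x$, so ruling out ``flat pieces'' is not the right picture. The fact you need is that a concave function has no interior local minimum with value below its maximum. To see this, pick $y$ with $u(y)=r>t$ and set $p_s\ce x-s(y-x)$ for small $s>0$; then $p_s\in A$ because $x\in A$ (as $t>0$) and $A$ is open. Since $x=\frac{1}{1+s}p_s+\frac{s}{1+s}y$, concavity gives $u(p_s)\le(1+s)t-sr=t-s(r-t)<t$. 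So every point of $\set{u=t}$ is a limit of points with $u<t$. With this fix your argument is complete.
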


\section{Characterization of the Hilbert space direct sum}\label[appsec]{app:directSum}

Let $I$ be an at most countable index set, and for every $i \in I$ let $(\MH_i, \bdot_{\MH_i})$ be a Hilbert space. Consider the following subset of the direct product $\prod_{i \in I} \MH_i$:
\begin{equation}\label{eq:directSum}
    \bigoplus_{i \in I} \MH_i \ce \Set{\Psi = \bigl( \psi^{(i)} \bigr)_{i \in I} \in \prod_{i \in I} \MH_i \ : \ \sum_{i \in I} \norm[\big]{\psi^{(i)}}_{\MH_i}^2 < + \infty} \ .
\end{equation}
It is straightforward to show that $\bigoplus_{i \in I} \MH_i$ is a linear subspace of $\prod_{i \in I} \MH_i$, with algebraic operations defined component-wise. On this subspace, define an inner product $\bdot_\oplus$ by
\begin{equation*}
    \braket{\Psi, \Phi}_\oplus \ce \sum_{i \in I} \braket[\Big]{\psi^{(i)}, \phi^{(i)}}_{\MH_i} \ .
\end{equation*}
The Cauchy-Schwarz and Hölder inequality show that $\bdot_\oplus$ is well-defined. Moreover, one can show by standard arguments that $\bigoplus_{i \in I} \MH_i$ is complete with respect to the norm
\begin{equation}\label{eq:directSumNorm}
    \norm{\Psi}_\oplus^2 \ce \sum_{i \in I} \norm[\big]{\psi^{(i)}}_{\MH_i}^2
\end{equation}
induced by the inner product \cite[Thm. 18.1]{BlanchardBruening2015}. Therefore, $\bigl( \bigoplus_{i \in I} \MH_i, \bdot_\oplus \bigr)$ is a Hilbert space called the \emph{Hilbert space direct sum} of the family $(\MH_i)_{i \in I}$.

In the recent paper \cite{Fritz2020}, infinite direct sums in arbitrary $C^\ast$-categories were characterized in terms of a universal property; the following theorem is a very special case of their much more general main result. Note that we only state one of the two equivalent characterizations given there, which corresponds to the definition of an infinite direct sum in $W^\ast$-categories first given in Ref.~\cite[p. 100]{Ghez1985}.

\begin{theorem}[{\protect\cite[Thm. 5.1, (a) $\Leftrightarrow$ (c)]{Fritz2020}}]\label{thm:directSum}
    Let $(\MH_i)_{i \in I}$ be a family of Hilbert spaces and $\MH$ be another Hilbert space. The following assertions are equivalent:
    \begin{enumerate}[labelindent=\parindent, leftmargin=*, label=\normalfont(\roman*), align=right]
        \item $\MH$ is the direct sum $\bigoplus_{i \in I} \MH_i$;
        \item there exists a family of bounded linear operators $(A_i : \MH_i \to \MH)_{i \in I}$ such that
        \begin{equation*}
            A_i^\ast A_j = \delta_{ij} \id_{\MH_i} \quad \text{and} \quad \sum_{i \in I} A_i A_i^\ast = \id_\MH \ .
        \end{equation*}
    \end{enumerate}
\end{theorem}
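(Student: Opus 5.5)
We read assertion (i) as saying that $\MH$ is unitarily isomorphic to the Hilbert space direct sum $\bigl(\bigoplus_{i \in I} \MH_i, \bdot_\oplus\bigr)$ of \cref{eq:directSum}. We read the series $\sum_{i \in I} A_i A_i^\ast$ in (ii) as converging in the strong operator topology, as in the proof of \cref{pro:FockSpace}. Since $I$ is at most countable, we fix an enumeration of $I$ throughout. The plan is to prove both implications directly, using the canonical injections and projections of the direct sum. The only analytic input is Pythagoras' theorem for orthogonal families together with the completeness of $\bigoplus_{i \in I} \MH_i$.

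For (i) $\Rightarrow$ (ii), first treat the case $\MH = \bigoplus_{i \in I} \MH_i$. Define $\iota_i : \MH_i \to \bigoplus_{j} \MH_j$ by $(\iota_i \psi)^{(j)} \ce \delta_{ij} \psi$. This map is clearly bounded and isometric. From the definition of $\bdot_\oplus$ one reads off $\iota_i^\ast \Psi = \psi^{(i)}$, and hence $\iota_i^\ast \iota_j = \delta_{ij} \id_{\MH_i}$. For the completeness relation, let $F \subset I$ be finite. Then $\sum_{i \in F} \iota_i \iota_i^\ast \Psi$ is the truncation of $\Psi$ to the indices in $F$, so that
\begin{equation*}
    \norm[\Big]{\Psi - \sum_{i \in F} \iota_i \iota_i^\ast \Psi}_\oplus^2 = \sum_{i \notin F} \norm[\big]{\psi^{(i)}}_{\MH_i}^2 \ .
\end{equation*}
The right-hand side tends to $0$ along the enumeration, because the full series $\sum_{i \in I} \norm{\psi^{(i)}}_{\MH_i}^2$ converges. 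This gives $\sum_{i \in I} \iota_i \iota_i^\ast = \id$ strongly. For a general $\MH$ with a unitary $U : \MH \to \bigoplus_{i} \MH_i$, set $A_i \ce U^\ast \iota_i$. Both relations transfer because $U^\ast U = \id_\MH$ and $U U^\ast = \id$, and conjugation by a unitary preserves strong convergence.

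For (ii) $\Rightarrow$ (i), the relation $A_i^\ast A_i = \id_{\MH_i}$ says each $A_i$ is an isometry. Consequently $P_i \ce A_i A_i^\ast$ is an orthogonal projection, and $P_i P_j = A_i (A_i^\ast A_j) A_j^\ast = 0$ for $i \neq j$. Define $J : \MH \to \bigoplus_{i} \MH_i$ by $J \xi \ce (A_i^\ast \xi)_{i \in I}$. This is well-defined and isometric, since strong convergence of $\sum_i P_i$ to $\id_\MH$ gives
\begin{equation*}
    \sum_{i \in I} \norm{A_i^\ast \xi}^2 = \sum_{i \in I} \braket{\xi, P_i \xi} = \braket[\Big]{\xi, \sum_{i \in I} P_i \xi} = \norm{\xi}^2 \ .
\end{equation*}
For surjectivity, take $\Psi = (\psi^{(i)})_{i \in I} \in \bigoplus_{i} \MH_i$. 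The vectors $A_i \psi^{(i)}$ are pairwise orthogonal, because $\braket{A_i \psi^{(i)}, A_j \psi^{(j)}} = \braket{\psi^{(i)}, A_i^\ast A_j \psi^{(j)}}$ and $A_i^\ast A_j = 0$ for $i \neq j$. Their norms satisfy $\norm{A_i \psi^{(i)}} = \norm{\psi^{(i)}}$. Hence the partial sums of $\sum_i A_i \psi^{(i)}$ form a Cauchy sequence by Pythagoras, and the series converges to some $\xi \in \MH$. By continuity of $A_j^\ast$ and the relation $A_j^\ast A_i = \delta_{ji} \id_{\MH_i}$, we get $A_j^\ast \xi = \psi^{(j)}$ for every $j$, that is, $J \xi = \Psi$. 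A surjective isometry between Hilbert spaces is unitary, so $\MH \cong \bigoplus_{i \in I} \MH_i$.

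The main point requiring care is the meaning of $\sum_{i \in I} A_i A_i^\ast = \id_\MH$. It must be taken in the strong operator topology: norm convergence fails as soon as infinitely many $\MH_i$ are nonzero. The computation of $\norm{J\xi}^2$ has to pass the inner product through a strongly convergent series, which is justified by continuity of $\braket{\xi, \cdot}$. The remaining steps are routine.
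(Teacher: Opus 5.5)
Your argument is correct, but it does not follow the paper's route. The paper gives no proof of \cref{thm:directSum}; it imports it as a very special case of \cite[Thm.~5.1]{Fritz2020}, which characterizes infinite direct sums in arbitrary $C^\ast$- and $W^\ast$-categories by a universal property. You instead work directly with Hilbert spaces. The canonical injections $\iota_i$, with $\iota_i^\ast \Psi = \psi^{(i)}$, give (i)$\Rightarrow$(ii). For (ii)$\Rightarrow$(i), the map $J \xi = (A_i^\ast \xi)_{i \in I}$ is isometric by the completeness relation, and surjective by the orthogonality relations, Pythagoras, and completeness of $\MH$. The citation buys generality and the categorical viewpoint behind the paper's phrase \enquote{universal property of the Hilbert space direct sum}. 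However, it leaves unspecified what \enquote{$\MH$ is the direct sum} means and in which topology $\sum_{i} A_i A_i^\ast$ converges. Your version fixes both: (i) means unitary equivalence, and the series converges strongly, which here is the same as weakly because the partial sums form an increasing sequence of projections. It also yields $J A_i = \iota_i$ for free from $A_j^\ast A_i = \delta_{ji} \id_{\MH_i}$, so the identification is the canonical one. Moreover, for $A_n = i_n$ your $J$ is exactly the map $\xi \mapsto (E_\MSN(\{n\}) \xi)_{n \in \N_0}$ from the first proof of \cref{pro:FockSpace}. Unwound, the alternative proof via \cref{thm:directSum} is therefore the spectral-theorem proof. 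Your surjectivity step even supplies a detail that first proof leaves implicit: it calls $J$ an isometric isomorphism after checking only that $J$ is isometric.
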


With the help of this theorem, we can give the following alternative proof of \cref{pro:FockSpace}, which is a bit longer than the one given above using only the spectral theorem, but also structurally more interesting as it connects the physical assumptions \ref{enu:A6.1} and \ref{enu:A6.2} with the mathematically fundamental universal property of the direct sum.

\begin{proof}[Proof of \cref{pro:FockSpace}]
    For $n \in \N_0$ let $i_n : \MH^{\otimes n} \hookrightarrow \MH_\SFS$, $\xi \mapsto i_n(\xi)$, denote the inclusion mapping, which is an isometry according to Assumption \ref{enu:A6.1}: $\norm{i_n(\xi)}_{\MH_\SFS} = \norm{\xi}_{\MH^{\otimes n}}$ Polarization gives $\braket{i_n(\xi), i_n(\eta)}_{\MH_\SFS} = \braket{\xi, \eta}_{\MH^{\otimes n}}$, and this implies $i_n^\ast i_n = \id_{\MH^{\otimes n}}$ \cite[Prop. II.2.17]{Conway1990}. Moreover, for $n, m \in \N$ with $n \neq m$ and $\xi \in \MH^{\otimes n}$, $\eta \in \MH^{\otimes m}$, we have
    \begin{equation*}
        \braket[\big]{\xi, i_n^\ast i_m (\eta)}_{\MH^{\otimes n}} = \braket[\big]{i_n(\xi), i_m(\eta)}_{\MH_\SFS} = 0
    \end{equation*}
    because $\MH^{\otimes n} = \Eig(\MSN, n) \perp \Eig(\MSN, m) = \MH^{\otimes m}$ by Assumption \ref{enu:A6.2}. This implies that $i_n^\ast i_m = 0$ for $n \neq m$; hence, for all $n, m \in \N_0$, there holds
    \begin{equation}\label{eq:iotaFirstProperty}
        i_n^\ast i_m = \delta_{nm} \id_{\MH^{\otimes n}} \ .
    \end{equation}

    Define $P_n \ce i_n i_n^\ast \in \BO(\MH_\SFS)$, $n \in \N_0$, and note that $P_n^\ast = P_n$ and $P_n^2 = P_n$, showing that the $P_n$ are orthogonal projections. Let $\xi \in \MH^{\otimes n}$ be arbitrary. Then $P_n i_n(\xi) = i_n i_n^\ast i_n(\xi) = i_n(\xi)$, so $i_n(\xi) \in \ran(P_n)$ and $\MH^{\otimes n} \subset \ran(P_n)$. Next, let $\Phi \in (\MH^{\otimes n})^\perp \subset \MH_\SFS$, meaning $\braket{\Phi, i_n(\xi)}_{\MH_\SFS} = 0$ for all $\xi \in \MH^{\otimes n}$. If $\Psi \in \MH_\SFS$, it follows that
    \begin{equation*}
        \braket[\big]{\Psi, P_n \Phi}_{\MH_\SFS} = \braket[\big]{i_n i_n^\ast (\Psi), \Phi}_{\MH_\SFS} = 0
    \end{equation*}
    and thus $\Phi \in \ker(P_n)$, showing that $(\MH^{\otimes n})^\perp \subset \ker(P_n)$. Since $P_n$ is an orthogonal projection, $\ker(P_n) = \ran(P_n)^\perp$ and so $(\MH^{\otimes n})^\perp \subset \ran(P_n)^\perp$. This implies that if $\Psi \in \ran(P_n)$, then $\Psi \perp \Phi$ for all $\Phi \in (\MH^{\otimes n})^\perp$, and so $\Psi \in (\MH^{\otimes n})^{\perp\perp} = \oln{\MH^{\otimes n}} = \MH^{\otimes n}$ since $\MH^{\otimes n}$ is closed; hence, $\ran(P_n) \subset \MH^{\otimes n}$. In total, we have $\ran(P_n) = \MH^{\otimes n}$, so $P_n$ must be the unique orthogonal projection onto the closed subspace $\MH^{\otimes n} \subset \MH_\SFS$.

    Let $E_\MSN$ denote the spectral measure of the self-adjoint operator $\MSN$. It holds that $E_\MSN(\set{n})$, $n \in \N_0$, is the orthogonal projection in $\MH_\SFS$ onto $\Eig(\MSN, n) = \MH^{\otimes n}$ \cite[Prop. 5.10]{Schmuedgen2012}; thus, the previous result implies that $E_\MSN(\set{n}) = P_n = i_n i_n^\ast$. From countable additivity of the spectral measure, we derive
    \begin{equation}\label{eq:iotaSecondProperty}
        \sum_{n=0}^{\infty} i_n i_n^\ast = \sum_{n=0}^{\infty} E_\MSN \bigl( \set{n} \bigr) = E_\MSN \biggl(\, \bigcup_{n=0}^{\infty} \set{n} \biggr) = E_\MSN \bigl( \sigma(\MSN) \bigr) = \id_{\MH_\SFS} \ .
    \end{equation}

    In light of \cref{eq:iotaFirstProperty,eq:iotaSecondProperty}, we may invoke the universal property of the Hilbert space direct sum from \cref{thm:directSum} to conclude that $\MH_S$ must be the direct sum of the spaces $\MH^{\otimes n}$, which proves the assertion of \cref{pro:FockSpace}.
\end{proof}

\section{Functional calculus for tensor product operators}\label[appsec]{app:functionalCalculusTP}

In the following, we prove a formula for the functional calculus of tensor product operators that is used in the proof of \cref{cor:vonNeumannEquation}.

\begin{lemma}\label{lem:functionalCalculusTP}
    Let $\MH_1$ and $\MH_2$ be Hilbert spaces, $T \in \LO(\MH_1)$ be a self-adjoint operator on $\MH_1$, and $f : \R \to \C$ be a bounded function. Then
    \begin{equation*}
        f (T \otimes \id_{\MH_2}) = f(T) \otimes \id_{\MH_2} \ .
    \end{equation*}
\end{lemma}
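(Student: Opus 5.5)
The natural route is the uniqueness of the spectral measure. We identify the spectral measure of $T \otimes \id_{\MH_2}$ and then read off the functional calculus from it. Let $E_T$ be the spectral measure of $T$ on the Borel sets of $\R$, and define
\begin{equation*}
    F(\Delta) \ce E_T(\Delta) \otimes \id_{\MH_2} \quad \text{for Borel sets} \quad \Delta \subset \R \ .
\end{equation*}

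First, we check that $F$ is a projection-valued measure. Each $F(\Delta)$ is an orthogonal projection because tensor products of bounded operators respect adjoints and products. We have $F(\R) = \id$. For countable additivity in the strong operator topology, we test on elementary tensors $\xi \otimes \eta$, where it reduces to the additivity of $E_T$. It then extends to all vectors by density of their span together with the uniform bound $\norm{F(\Delta)} \le 1$.

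Next, we show that $F$ is the spectral measure of $T \otimes \id_{\MH_2}$. Let $A \ce \int \lambda \diff F(\lambda)$. On the algebraic tensor product $\dom(T) \odot \MH_2$, the operators $A$ and $T \otimes \id_{\MH_2}$ agree. Indeed, for $\xi \in \dom(T)$ the measure $\braket{\xi \otimes \eta, F(\cdot)\, \xi \otimes \eta} = \norm{\eta}^2 \braket{\xi, E_T(\cdot)\, \xi}$ has a finite second moment, and $A(\xi \otimes \eta) = T\xi \otimes \eta$.

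This is the main obstacle. The tensor product $T \otimes \id_{\MH_2}$ is defined as the closure of the operator on $\dom(T) \odot \MH_2$, which is essentially self-adjoint there, as quoted in the paper from Reed--Simon or Schm\"udgen. The algebraic tensor product is also a core for $A$. To see this, we approximate a vector $\psi \in \dom(A)$ by $F([-m,m])\psi$, which lies in $\dom(A)$ and converges to $\psi$ in the graph norm. Then $F([-m,m])\psi$ is in turn approximated by elements of $E_T([-m,m])\MH_1 \odot \MH_2$, on which $A$ is bounded. Two self-adjoint operators that share a common core and agree on it coincide, so $A = T \otimes \id_{\MH_2}$. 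Uniqueness in the spectral theorem then gives $E_{T \otimes \id} = F$.

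Finally, for bounded Borel $f$ we have $f(T \otimes \id_{\MH_2}) = \int f \diff F$. We compare this with $f(T) \otimes \id_{\MH_2}$ via matrix elements on elementary tensors:
\begin{equation*}
    \braket{\xi \otimes \eta, (\textstyle\int f \diff F)(\xi' \otimes \eta')} = \braket{\eta, \eta'} \int f \diff \braket{\xi, E_T(\cdot)\, \xi'} = \braket{\xi \otimes \eta, (f(T) \otimes \id)(\xi' \otimes \eta')} \ .
\end{equation*}
This first holds for simple functions, and then for all bounded Borel $f$ by uniform approximation. Both sides are bounded operators, so equality on a total set extends to equality on all of $\MH_1 \otimes \MH_2$, which proves the lemma.
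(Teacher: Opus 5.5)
Your proof is correct and follows the same route as the paper: identify the spectral measure of $T \otimes \id_{\MH_2}$ as $E_T(\cdot) \otimes \id_{\MH_2}$, then compare matrix elements of $f(T \otimes \id_{\MH_2})$ and $f(T) \otimes \id_{\MH_2}$ on elementary tensors and extend by density. The differences are that the paper simply cites this spectral-measure identity (together with $\sigma(T \otimes \id_{\MH_2}) = \sigma(T)$) from Arai, whereas you prove it via the core argument and uniqueness of the spectral measure, and that your off-diagonal matrix elements make the final density step immediate, while the paper's diagonal-only computation tacitly needs polarization in each tensor factor.
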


\begin{proof}
    The two key ingredients for the proof are the facts that (i) the spectral measure $E_{T \otimes \id_{\MH_2}}$ of the self-adjoint operator $T \otimes \id_{\MH_2}$ is given by $E_{T \otimes \id_{\MH_2}} (B) = E_{T} (B) \otimes \id_{\MH_2}$ for all Borel subsets $B \subset \R$ \cite[Thm. 3.5 (ii)]{Arai2024} (where $E_T$ is the spectral measure of $T$), and that (ii) the spectrum of $T \otimes \id_{\MH_2}$ is given by $\sigma (T \otimes \id_{\MH_2}) = \sigma(T)$ \cite[Thm. 3.6]{Arai2024}.

    Let $\xi \in \MH_1$ and $\eta \in \MH_2$ be arbitrary. From the standard properties of the bounded functional calculus, it follows that
    \begin{align*}
        \braket[\big]{\xi \otimes \eta, f (T \otimes \id_{\MH_2}) (\xi \otimes \eta)} &= \int_{\sigma (T \otimes \id_{\MH_2})} f(\lambda) \diff \braket[\big]{\xi \otimes \eta, E_{T \otimes \id_{\MH_2}} (\lambda) (\xi \otimes \eta)} \\
        &= \int_{\sigma (T \otimes \id_{\MH_2})} f(\lambda) \diff \braket[\big]{\xi \otimes \eta, (E_T (\lambda) \otimes \id_{\MH_2}) (\xi \otimes \eta)} \\
        &= \int_{\sigma (T)} f(\lambda) \diff \braket{\xi, E_T (\lambda) \xi} \cdot \braket{\eta, \id_{\MH_2} \eta} \\
        &= \braket[\big]{\xi \otimes \eta, (f(T) \otimes \id_{\MH_2}) (\xi \otimes \eta)} \ .
    \end{align*}
    Since the linear span of elementary tensors $\xi \otimes \eta$ (that is, the algebraic tensor product $\MH_1 \odot \MH_2$) lies dense in $\MH_1 \otimes \MH_2$, we conclude that $f (T \otimes \id_{\MH_2}) = f(T) \otimes \id_{\MH_2}$.
\end{proof}

\begin{remark}
    In the setting of the lemma, suppose that $f$ is an $E_T$-almost everywhere finite function and that $\xi \in \dom \bigl( f(T) \bigr)$. Then the above proof shows that $\xi \otimes \eta \in \dom \bigl( f(T \otimes \id_{\MH_2}) \bigr)$ and $f (T \otimes \id_{\MH_2}) = f(T) \otimes \id_{\MH_2}$ on $\dom \bigl( f(T) \otimes \id_{\MH_2} \bigr)$ because $\dom \bigl( f(T) \bigr) \odot \MH_2$ is dense in the latter. Therefore, $f(T) \otimes \id_{\MH_2} \subset f (T \otimes \id_{\MH_2})$.
\end{remark}

\section*{Acknowledgments}

This work was supported by the DFG Collaborative Research Center 1114 \enquote{Scaling Cascades in Complex Systems}, Project No.~235221301, Project C01 \enquote{Adaptive coupling of scales in molecular dynamics and beyond to fluid dynamics}, and by the DFG, Project No.~DE 1140/15-1, \enquote{Mathematical model and numerical implementation of open quantum systems in molecular simulation}.

\bibliographystyle{abbrv}
\bibliography{hamiltonian}

\begin{thebibliography}{10}

\bibitem{Ali2020}
M.~M. Ali, W.-M. Huang, and W.-M. Zhang.
\newblock Quantum thermodynamics of single particle systems.
\newblock {\em Sci. Rep.}, 10:13500, 2020.

\bibitem{Arai2024}
A.~Arai.
\newblock {\em Analysis on Fock Spaces and Mathematical Theory of Quantum Fields}.
\newblock World Scientific, Singapore, {S}econd edition, 2024.

\bibitem{AttalJoyePillet2006}
S.~Attal, A.~Joye, and C.-A. Pillet, editors.
\newblock {\em Open Quantum Systems}.
\newblock Lecture Notes in Mathematics 1880--1882. Springer, Berlin, Heidelberg, 2006.
\newblock 3 Volumes. Vol. I: The Hamiltonian Approach, Vol. II: The Markovian Approach, Vol. III: Recent Developments.

\bibitem{Blakaj2025}
V.~Blakaj, M.~C. Caro, A.~Kouraich, D.~Malz, and M.~M. Wolf.
\newblock Gibbs state postulate from dynamical stability -- redundancy of the zeroth law.
\newblock {\em Preprint}, 2025.
\newblock arXiv:2512.13451 [math-ph].

\bibitem{Blanchard2011}
P.~Blanchard and E.~Brüning.
\newblock Reply to “{C}omment on ‘{R}emarks on the structure of states of composite quantum systems and envariance’ [{P}hys. {L}ett. {A} 355 (2006) 180]” [{P}hys. {L}ett. {A} 375 (2011) 1160].
\newblock {\em Phys. Lett. A}, 375(7):1163--1165, 2011.

\bibitem{BlanchardBruening2015}
P.~Blanchard and E.~Brüning.
\newblock {\em Mathematical Methods in Physics}.
\newblock Progress in Mathematical Physics 69. Birkhäuser, Cham, Heidelberg, New York, {S}econd edition, 2015.

\bibitem{Bogoliubov1958}
N.~N. Bogoliubov.
\newblock On a new method in the theory of superconductivity.
\newblock {\em Il Nuovo Cimento}, 7:794–805, 1958.

\bibitem{BratteliRobinson1981}
O.~Bratteli and D.~W. Robinson.
\newblock {\em Operator Algebras and Quantum Statistical Mechanics II: Equilibrium States, Models in Quantum Statistical Mechanics}.
\newblock Theoretical and Mathematical Physics. Springer, Berlin, Heidelberg, New York, 1981.

\bibitem{BreuerPetruccione2007}
H.-P. Breuer and F.~Petruccione.
\newblock {\em The Theory of Open Quantum Systems}.
\newblock Oxford University Press, Oxford, UK, 2007.

\bibitem{Brydges1999}
D.~C. Brydges and P.~A. Martin.
\newblock Coulomb systems at low density: A review.
\newblock {\em J. Stat. Phys.}, 96:1163–1330, 1999.

\bibitem{BuragoZalgaller1988}
Y.~D. Burago and V.~A. Zalgaller.
\newblock {\em Geometric Inequalities}.
\newblock Number 285 in Grundlehren der mathematischen Wissenschaften. Springer, Berlin, Heidelberg, 1988.

\bibitem{Chaiken1967}
J.~M. Chaiken.
\newblock Finite-particle representations and states of the canonical commutation relations.
\newblock {\em Ann. Phys.}, 42(1):23--80, 1967.

\bibitem{Chaiken1968}
J.~M. Chaiken.
\newblock Number operators for representations of the canonical commutation relations.
\newblock {\em Commun. Math. Phys.}, 8:164–184, 1968.

\bibitem{Conway1990}
J.~B. Conway.
\newblock {\em A Course in Functional Analysis}.
\newblock Graduate Texts in Mathematics 96. Springer, New York, NY, {S}econd edition, 1990.

\bibitem{Cook1953}
J.~M. Cook.
\newblock The mathematics of second quantization.
\newblock {\em Trans. Amer. Math. Soc.}, 74(2):222--245, 1953.

\bibitem{Courbage1971}
M.~Courbage, S.~{Miracle-Sole}, and D.~W. Robinson.
\newblock Normal states and representations of the canonical commutation relations.
\newblock {\em Ann. Inst. Henri Poincaré. Sect. A, Physique Théorique}, 14(2):171--178, 1971.

\bibitem{Davies1976}
E.~B. Davies.
\newblock {\em Quantum Theory of Open Systems}.
\newblock Academic Press, London, New York, 1976.

\bibitem{delRazo2025}
M.~J. {del Razo} and L.~{Delle Site}.
\newblock Dynamics of systems with varying number of particles: From {L}iouville equations to general master equations for open systems.
\newblock {\em SciPost Phys.}, 18:001, 2025.

\bibitem{delRazo2022}
M.~J. {del Razo}, D.~Frömberg, A.~V. Straube, et~al.
\newblock A probabilistic framework for particle-based reaction–diffusion dynamics using classical fock space representations.
\newblock {\em Lett. Math. Phys.}, 112:49, 2022.

\bibitem{DellAntonio1967}
G.-F. {Dell'Antonio} and S.~Doplicher.
\newblock Total number of particles and {F}ock representation.
\newblock {\em J. Math. Phys.}, 8(3):663–666, 1967.

\bibitem{DellAntonio1966}
G.-F. {Dell'Antonio}, S.~Doplicher, and D.~Ruelle.
\newblock A theorem on canonical commutation and anticommutation relations.
\newblock {\em Commun. Math. Phys.}, 2:223–230, 1966.

\bibitem{DelleSite2018}
L.~{Delle Site}.
\newblock Simulation of many-electron systems that exchange matter with the environment.
\newblock {\em Adv. Theory Simul.}, 1(10):1800056, 2018.

\bibitem{DelleSite2024jpa}
L.~{Delle Site} and A.~Djurdjevac.
\newblock An effective {Hamiltonian} for the simulation of open quantum molecular systems.
\newblock {\em J. Phys. A: Math. Theor.}, 57(25):255002, 2024.

\bibitem{DelleSite2020}
L.~{Delle Site} and R.~Klein.
\newblock Liouville-type equation for the $n$-particle distribution function of an open system.
\newblock {\em J. Math. Phys.}, 61(8):083102, 2020.

\bibitem{Derezinski2026}
J.~Derezi\'{n}ski, V.~Jak\v{s}i\'{c}, and C.-A. Pillet.
\newblock Miniatures on open quantum systems.
\newblock {\em Preprint}, 2026.
\newblock arXiv:2601.20373 [math-ph].

\bibitem{Farrington2025}
S.~Farrington.
\newblock On the isoperimetric and isodiametric inequalities and the minimisation of eigenvalues of the {L}aplacian.
\newblock {\em J. Geom. Anal.}, 35:62, 2025.

\bibitem{Federer1996}
H.~Federer.
\newblock {\em Geometric Measure Theory}.
\newblock Classics in Mathematics. Springer, Berlin, Heidelberg, 1996.

\bibitem{Fock1932}
V.~Fock.
\newblock Konfigurationsraum und zweite {Q}uantelung.
\newblock {\em Z. Phys.}, 75:622--647, 1932.

\bibitem{Friedrichs1952}
K.~O. Friedrichs.
\newblock Mathematical aspects of the quantum theory of fields. {P}art {IV}. {O}ccupation number representation and fields of different kinds.
\newblock {\em Commun. Pure Appl. Math.}, 5(4):349--411, 1952.

\bibitem{Fritz2020}
T.~Fritz and B.~Westerbaan.
\newblock The universal property of infinite direct sums in {C}$^\ast$-categories and {W}$^\ast$-categories.
\newblock {\em Appl. Categor. Struct.}, 28:355–365, 2020.

\bibitem{Ghez1985}
P.~Ghez, R.~Lima, and J.~E. Roberts.
\newblock {$W^\ast$}-categories.
\newblock {\em Pacific J. Math.}, 120(1):79–109, 1985.

\bibitem{Gittins2020}
K.~Gittins and C.~Léna.
\newblock Upper bounds for {C}ourant-sharp {N}eumann and {R}obin eigenvalues.
\newblock {\em Bull. Soc. Math. France}, 148(1):99--132, 2020.

\bibitem{Gruber2007}
P.~M. Gruber.
\newblock {\em Convex and Discrete Geometry}.
\newblock Grundlehren der mathematischen Wissenschaften 336. Springer, Berlin, Heidelberg, 2007.

\bibitem{Haag1996}
R.~Haag.
\newblock {\em Local Quantum Physics}.
\newblock Theoretical and Mathematical Physics. Springer, Berlin, Heidelberg, New York, {S}econd {R}evised and {E}nlarged edition, 1996.

\bibitem{Hainzl2009I}
C.~Hainzl, M.~Lewin, and J.~P. Solovej.
\newblock The thermodynamic limit of quantum {C}oulomb systems {P}art {I}. {G}eneral theory.
\newblock {\em Adv. Math.}, 221(2):454--487, 2009.

\bibitem{Hainzl2009II}
C.~Hainzl, M.~Lewin, and J.~P. Solovej.
\newblock The thermodynamic limit of quantum {C}oulomb systems {P}art {II}. {A}pplications.
\newblock {\em Adv. Math.}, 221(2):488--546, 2009.

\bibitem{Huang1987}
K.~Huang.
\newblock {\em Statistical Mechanics}.
\newblock John Wiley \& Sons, New York, NY, {S}econd edition, 1987.

\bibitem{HugWeil2020}
D.~Hug and W.~Weil.
\newblock {\em Lectures on Convex Geometry}.
\newblock Graduate Texts in Mathematics 286. Springer, Cham, 2020.

\bibitem{Klein2022}
R.~Klein and L.~{Delle Site}.
\newblock Derivation of {Liouville}-like equations for the $n$-state probability density of an open system with thermalized particle reservoirs and its link to molecular simulation.
\newblock {\em J. Phys. A: Math. Theor.}, 55(15):155002, 2022.

\bibitem{Lieb1972}
E.~H. Lieb and J.~L. Lebowitz.
\newblock The constitution of matter: Existence of thermodynamics for systems composed of electrons and nuclei.
\newblock {\em Adv. Math.}, 9(3):316--398, 1972.
\newblock With an appendix by B. Simon.

\bibitem{LiebSeiringer2010}
E.~H. Lieb and R.~Seiringer.
\newblock {\em The Stability of Matter in Quantum Mechanics}.
\newblock Cambridge University Press, Cambridge, UK, 2010.

\bibitem{LandauLifshitz9}
E.~M. Lifshitz and L.~P. Pitaevskii.
\newblock {\em Statistical Physics, Part 2: Theory of the Condensed State}.
\newblock Vol. 9 of Landau and Lifshitz Course of Theoretical Physics. Pergamon Press, Oxford, UK, 1980.

\bibitem{Manzano2020}
D.~Manzano.
\newblock A short introduction to the {Lindblad} master equation.
\newblock {\em AIP Advances}, 10(2):025106, 2020.

\bibitem{Merkli2020}
M.~Merkli.
\newblock Quantum {M}arkovian master equations: Resonance theory shows validity for all time scales.
\newblock {\em Ann. Phys.}, 412:167996, 2020.

\bibitem{Mishin2015}
Y.~Mishin.
\newblock Thermodynamic theory of equilibrium fluctuations.
\newblock {\em Ann. Phys.}, 363:48--97, 2015.

\bibitem{Moretti2017}
V.~Moretti.
\newblock {\em Spectral Theory and Quantum Mechanics}.
\newblock Springer, Cham, {S}econd edition, 2017.

\bibitem{Moretti2019}
V.~Moretti.
\newblock {\em Fundamental Mathematical Structures of Quantum Theory}.
\newblock Springer, Cham, 2019.

\bibitem{Nolting2018}
W.~Nolting.
\newblock {\em Theoretical Physics 8. Statistical Physics}.
\newblock Springer, Cham, 2018.

\bibitem{Osserman1978}
R.~Osserman.
\newblock The isoperimetric inequality.
\newblock {\em Bull. Amer. Math. Soc.}, 84(6):1182--1238, 1978.

\bibitem{RS2}
M.~Reed and B.~Simon.
\newblock {\em Methods of Modern Mathematical Physics. Volume II: Fourier Analysis, Self-Adjointness}.
\newblock Academic Press, New York, San Francisco, London, 1975.

\bibitem{RS4}
M.~Reed and B.~Simon.
\newblock {\em Methods of Modern Mathematical Physics. Volume IV: Analysis of Operators}.
\newblock Academic Press, New York, San Francisco, London, 1978.

\bibitem{RS1}
M.~Reed and B.~Simon.
\newblock {\em Methods of Modern Mathematical Physics. Volume I: Functional Analysis}.
\newblock Academic Press, San Diego, New York, Boston, {R}evised and {E}nlarged edition, 1980.

\bibitem{Reible2025pre}
B.~M. Reible and L.~{Delle Site}.
\newblock Open quantum systems and the grand canonical ensemble.
\newblock {\em Phys. Rev. E}, 112(2):024130, 2025.

\bibitem{Reible2025apq}
B.~M. {Reible}, A.~Djurdjevac, and L.~{Delle Site}.
\newblock Chemical potential and variable number of particles control the quantum state: Quantum oscillators as a showcase.
\newblock {\em APL Quantum}, 2(1):016124, 2025.

\bibitem{Reible2022}
B.~M. Reible, C.~Hartmann, and L.~{Delle Site}.
\newblock Two-sided {Bogoliubov} inequality to estimate finite size effects in quantum molecular simulations.
\newblock {\em Lett. Math. Phys.}, 112:97, 2022.

\bibitem{Reible2025apx}
B.~M. {Reible}, C.~Hartmann, and L.~{Delle Site}.
\newblock Finite-size effects in molecular simulations: a physico-mathematical view.
\newblock {\em Adv. Phys. X}, 10(1):2495151, 2025.

\bibitem{Reible2023}
B.~M. Reible, J.~F. Hille, C.~Hartmann, and L.~{Delle Site}.
\newblock Finite-size effects and thermodynamic accuracy in many-particle systems.
\newblock {\em Phys. Rev. Res.}, 5(2):023156, 2023.

\bibitem{RivasHuelga2012}
A.~Rivas and S.~F. Huelga.
\newblock {\em Open Quantum Systems}.
\newblock Springer, Berlin, Heidelberg, 2012.

\bibitem{Rivas2010}
A.~Rivas, A.~D.~K. Plato, S.~F. Huelga, and M.~B. Plenio.
\newblock Markovian master equations: a critical study.
\newblock {\em New J. Phys.}, 12:113032, 2010.

\bibitem{Rotter2015}
I.~Rotter and J.~P. Bird.
\newblock A review of progress in the physics of open quantum systems: theory and experiment.
\newblock {\em Rep. Prog. Phys.}, 78(11):114001, 2015.

\bibitem{Ruelle1999}
D.~Ruelle.
\newblock {\em Statistical Mechanics: Rigorous Results}.
\newblock Imperial College Press and World Scientific, London, Singapore, 1999.

\bibitem{Schmuedgen2012}
K.~Schmüdgen.
\newblock {\em Unbounded Self-adjoint Operators on Hilbert Space}.
\newblock Graduate Texts in Mathematics 265. Springer, Dordrecht, 2012.

\bibitem{Schwabl2006}
F.~Schwabl.
\newblock {\em Statistical Mechanics}.
\newblock Springer, Berlin, Heidelberg, {S}econd edition, 2006.

\bibitem{Simon1971paper}
B.~Simon.
\newblock Hamiltonians defined as quadratic forms.
\newblock {\em Commun. Math. Phys.}, 21:192–210, 1971.

\bibitem{Simon1971book}
B.~Simon.
\newblock {\em Quantum Mechanics for Hamiltonians Defined as Quadratic Forms}.
\newblock Princeton University Press, Princeton, NJ, 1971.

\bibitem{Simon1983}
B.~Simon.
\newblock Some quantum operators with discrete spectrum but classically continuous spectrum.
\newblock {\em Ann. Phys.}, 146(1):209--220, 1983.

\bibitem{Strocchi2013}
F.~Strocchi.
\newblock {\em An Introduction to Non-Perturbative Foundations of Quantum Field Theory}.
\newblock International Series of Monographs on Physics 158. Oxford University Press, Oxford, UK, 2013.

\bibitem{Teschl2014}
G.~Teschl.
\newblock {\em Mathematical Methods in Quantum Mechanics}.
\newblock Graduate Studies in Mathematics 157. American Mathematical Society, Providence, RI, {S}econd edition, 2014.

\bibitem{Trudinger1997}
N.~S. Trudinger.
\newblock On new isoperimetric inequalities and symmetrization.
\newblock {\em J. reine angew. Math.}, 488:203--220, 1997.

\bibitem{Vacchini2024}
B.~Vacchini.
\newblock {\em Open Quantum Systems}.
\newblock Graduate Texts in Physics. Springer, Cham, 2024.

\bibitem{vanNeerven2022}
J.~van Neerven.
\newblock {\em Functional Analysis}.
\newblock Cambridge Studies in Advanced Mathematics 201. Cambridge University Press, Cambridge, UK, 2022.

\bibitem{Zagrebnov2001}
V.~A. Zagrebnov and J.-B. Bru.
\newblock The {B}ogoliubov model of weakly imperfect {B}ose gas.
\newblock {\em Phys. Rep.}, 350(5-6):291--434, 2001.

\end{thebibliography}

\end{document}